\newcommand{\lfp}{\mu}
\newcommand{\gfp}{\nu}
\newcommand{\fix}[1]{#1}
\newcommand{\sem}[1]{\llbracket#1\rrbracket}
\newcommand{\dom}{\mathit{dom}}
\newcommand{\indep}{\mathit{indep}}
\newcommand{\disjoint}{\mathit{disjoint}}
\newcommand{\conc}{\mathop{\mathbf{\!;\!}}}
\newcommand{\Eqs}{\mathcal{E}\!\mathit{qs}}
\newcommand{\Nat}{\mathbb{N}}
\newcommand{\Bool}{\mathbb{B}}
\newcommand{\B}{\mathcal{B}}
\newcommand{\E}{\mathcal{E}}
\newcommand{\F}{\mathcal{F}}
\newcommand{\Fes}{\mathcal{F}\!\mathit{es}}
\newcommand{\X}{\mathcal{X}}
\newcommand{\Val}{\mathcal{V}\!\mathit{al}}
\newcommand{\Spec}{\mathcal{S}\!\mathit{pec}}
\renewcommand{\S}{\mathcal{S}}
\renewcommand{\P}{\mathcal{P}}
\newcommand{\dep}[1]{\mathrel{\smash{\dblarrowsuper{#1}}}}
\newcommand{\depstep}[1]{\mathrel{\smash{\arrowsuper{#1}}}}
\newcommand{\depnonempty}[1]{\mathrel{\smash{\arrowsuper{#1}\hspace*{-0.4em}\raisebox{0.5em}{\small +}}}}
\newcommand{\splitname}{\mathit{split}}
\newcommand{\splitS}[3]{\splitname_{#1,#2}(#3)}
\newcommand{\restrictedeq}[1]{%
	\mathrel{\ooalign{\hfil\lower.5ex\hbox{$=$}\hfil\cr
			\hfil\raise.8ex\hbox{${\scriptstyle#1}\mkern.5mu$}\hfil\cr}}
}
\newcommand{\compl}[1]{\overline{#1}}
\newcommand{\unfold}{\mathit{unfold}}
\newcommand{\eqby}[1]{\mathrel{\smash{\stackrel{(#1)}{=}}}}
\newcommand{\invRightarrow}{\mathrel{\phantom{\Rightarrow}}}
\newcommand{\emptyeqby}[2]{\mathrel{\smash{\stackrel{\phantom{(#1)}}{#2}}}}
\newcommand{\inveq}{\mathrel{\phantom{=}}}
\keywords{fixpoint equation system, complete lattice}
\begin{document}

\title{Operations on  Fixpoint Equation Systems}
\author[T. Neele]{Thomas Neele\lmcsorcid{0000-0001-6117-9129}}[a]
\author[J. van de Pol]{Jaco van de Pol\lmcsorcid{0000-0003-4305-0625}}[b]

\address{Eindhoven University of Technology, The Netherlands}
\email{t.s.neele@tue.nl}
\address{Aarhus University, Denmark}
\email{jaco@cs.au.dk}

\begin{abstract}
We study operations on fixpoint equation systems (FES) over
arbitrary complete lattices. We investigate under which conditions
these operations, such as substituting variables by their definition,
and swapping the ordering of equations, preserve the solution of a
FES. We provide rigorous, computer-checked proofs. Along the way, we
list a number of known and new identities and inequalities on extremal
fixpoints in complete lattices.
\end{abstract}

\maketitle

\section{Introduction}
This paper deals with operations on systems of fixpoint equations over
an arbitrary complete lattice. We investigate when these operations
preserve the solution of the equations. An example of a system of
equations is the set $\E:=\{X = f(X,Y,Z),\ Y=g(X,Y,Z),\ Z=h(X,Y,Z)\}$. For
most results, it is required that the functions $f,g,h$ are monotonic
in the given lattice. Such systems may well have multiple
solutions. In order to specify particular solutions, we introduce
specifications, for example $\S := [\mu X, \nu Y, \mu Z]$, indicating for
each variable whether we are interested in the minimal ($\mu$) or
maximal ($\nu$) solution.  The order of the variables in these
specifications is relevant: the leftmost variable indicates the fixpoint
with the highest priority. A {\em Fixpoint Equation System} (FES)~\cite{Mader96} is
a pair $(\E,\S)$, where $\E$ is a set of equations, and $\S$ is a
specification. Several well known instances are obtained by
instantiating the complete lattice.

\paragraph{Well-known instances of FES}
\emph{Boolean Equation Systems} (BES) arise as FES over the complete lattice
$\bot<\top$, and were proposed
in~\cite{Andersen94,AnVe95} for solving the model checking and
equivalence checking problems on finite labeled transition systems
(LTS).
BES received extensive study in~\cite{Mader96,MaSi03,GrKe04,Mateescu06}.
A generalisation to the domain $\mathbb{R} \cup \{-\infty,\infty\}$ is
\emph{real equation systems} (RES)~\cite{groote_real_2023}.

An equivalent notion to BES is two-player \emph{parity games}~\cite{Emerson1991},
see~\cite{Mader96} for a proof.
Algorithms for solving parity games receive a lot of attention, since this
is one of the few problems which is in NP and in co-NP, but not known to
be in P.
Recently, it has been shown that parity games (and thus BES) can be
solved in quasi-polynomial time~\cite{calude_deciding_2017}.
This result has also been lifted to the general setting of FES on finite
lattices~\cite{hausmann_quasipolynomial_2021,jurdzinski_universal_2022}.
Other types of games can also be seen as an instance of FES, for example
energy parity games~\cite{chatterjee_energy_2012} are FES on the lattice
$\mathbb{Z} \to \{\bot,\top\}$, ordered pointwise.
A modern parity game solver is Oink~\cite{van_dijk_oink_2018}.

\emph{Parameterised Boolean Equation Systems} (PBES, also known as
first-order, or predicate BES) arise as FES over the powerset lattice
$(2^D,\subseteq)$, with $D$ some data type, typically
representing the state space of a possibly infinite LTS.
In~\cite{Mateescu98,GrMa99}, PBES are proposed to encode the
model-checking problem of first-order mu-calculus on infinite LTSs;
they are studied in more detail in~\cite{GrWiSCP05,GrWiTCS05}. An
encoding of (branching) bisimulation of infinite LTSs in PBES is
proposed in~\cite{CPPW07}. Various procedures that operate on PBES
have been studied, for example to simplify~\cite{neele_removing_2022,Orzan2010}
or solve PBES~\cite{NeeleWG2020,neele_partial-order_2022,Ploeger2011}.
Algorithms for solving some timed fragments of PBES automatically are
studied in~\cite{ZhCl05}. PBES are implemented in the mCRL2~\cite{Bunte2019}
and CADP~\cite{Garavel2013} model checking toolsets.
MuArith~\cite{kobayashi_foldunfold_2020} is similar to PBES, but the domain $D$
is restricted to integers.

Fixpoint Equation Systems over arbitrary complete lattices (FES) are
defined in~\cite{Mader96,TaCl02}.
Some works refer to the same concept as
\emph{Hierarchical Equation Systems} (HES)~\cite{Seidl96,kobayashi_temporal_2019},
\emph{Systems of Fixpoint Equations}~\cite{baldan_abstraction_2020} or
\emph{Nested Fixpoint Equations}~\cite{jurdzinski_universal_2022}.
In~\cite{ZhCl05} it is recognized
that BES and PBES (and also \emph{Modal Equation Systems}~\cite{cleaveland_linear-time_1993}, an equational representation of the modal mu-calculus) are instances of
FES. FESs are mainly useful to provide generic definitions for all
these kinds of equation systems. We claim that the generic semantics of
a FES is more elegant than the semantics of PBES, as given in
e.g.~\cite{GrWiTCS05}. In particular, equations in FES are defined
in a semantic manner as functions on valuations, rather than on
syntactic expressions (possibly with binders).
Another advantage of FES is that one can derive
a number of basic theorems for equation systems over all lattices
in one stride, like in Chapter 3 of Mader's thesis~\cite{Mader96}. 

\emph{Abstract dependency graphs}~\cite{enevoldsen_abstract_2019} are
similar to FES, but variables range over a Noetherian partial order
with a least element, instead of a complete lattice.
When assuming every right-hand side is effectively computable, minimal
fixpoints can be computed in an iterative fashion.
Dependency graphs do not contain fixpoint alternations.

\paragraph{Contributions}
Our main goal is to study basic operations on FES, related with
substituting variables in the equations by their definition or final
solution, or swapping the order of equations in the specification.
Substitution operations form the basis of solving BES by so-called
Gauss-elimination~\cite{Mader96}. Also for PBES,
Gauss elimination plays a crucial role in their solution.
Reordering the variables in the
specification is useful, because it may give rise to independent
subspecifications that can be solved separately.  Also, swapping the
order of variables may bring down the number of alternations between
$\mu$ and $\nu$, thus lowering the complexity of certain solution
algorithms. 

Our results consist of equalities and inequalities between FES, expressing 
under which conditions the basic operations preserve the solution of a FES.
The main results are summarized in Table~\ref{mainresults} (Section~\ref{related}). 
In particular:
\begin{enumerate}
\item Results on substitution for BES and PBES are generalized to FES.
\item Results on swapping variables are generalized and sorted out, by
  weakening existing conditions, and by providing alternative conditions.
\item We provide rigorous proofs of our results.  All
  proofs in this paper have been proof-checked mechanically by the
  Coq theorem prover~\cite{bertot_short_2008,coq_software} (version 8.17)
  as well as the PVS theorem prover~\cite{mohamed_brief_2008} (version 7.1).
  Our proofs are available online~\cite{artefact}.
\end{enumerate}

\paragraph{Overview}
We first provide the basic theory of complete lattices in
Section~\ref{completelattices} and reprove all
needed facts on fixpoints, in order to present a self-contained account.
The formal definition and semantics of Fixpoint Equation
Systems is provided in Section~\ref{fes}. The proofs
(Section~\ref{subst}, \ref{swap} and~\ref{sec:independent}) are quite
elementary. They are mainly based on induction (to deal with the recursive
definition of FES semantics, Section~\ref{fesdef}) and on identities
and inequalities on fixpoints in complete lattices.
In Section~\ref{related}, we provide examples of applications of our theory
and discuss its relation with the literature.
Finally, we highlight several aspects of our Coq and PVS formalisations
in Section~\ref{coqpvs}.

\section{Fixpoint Laws in Complete Lattices}
\label{completelattices}
A {\em partial order} on a universe $U$ is a binary relation
${\leq}\subseteq U\times U$, which is {\em reflexive} ($\forall x.\,
x\leq x$), {\em anti-symmetric} ($\forall x,y.\, x\leq y \wedge y\leq
x \Rightarrow x=y$) and {\em transitive} ($\forall x,y,z.\, x\leq y
\wedge y\leq z \Rightarrow x\leq z$), where in all cases $x,y,z\in U$.

Given partial orders $(U,\leq)$ and $(V,\leq)$, we define partial
orders $(U\times V,\leq)$ and $(U\to V,\leq)$ pointwise:
$(u_1,v_1)\leq (u_2,v_2)$ iff $u_1\leq u_2\wedge v_1\leq v_2$, and
$f\leq g$ iff $\forall x\in U. f(x)\leq g(x)$. Function $f:U\to V$ is
called {\em monotonic}, iff $\forall x,y.\, x\leq y \Rightarrow f(x)\leq
f(y)$.

Given a set $X\subseteq U$, we define its set of {\em lower bounds} in $U$ as
$lb(X) := \{ y\in U ~|~ \forall x\in X.\, y\leq x\}$. If $y\in lb(X)$ and
$z\leq y$ for all $z\in lb(X)$, then $y$ is called the {\em greatest lower
  bound} of $X$. A {\em complete lattice} is a triple $(U,\leq,glb)$, where
$\leq$ is a partial order, and $glb(X)$ returns the greatest lower bound of
$X$ in $U$, for all (finite or infinite) $X\subseteq U$.

Given a complete lattice $(U,\leq,glb)$, define the partial order $(U,\geq)$,
by $x\geq y$ iff $y\leq x$. We define the set of {\em upper bounds} of
$X\subseteq U$ by $ub(X) := \{ y\in U ~|~ \forall x\in X.\, y\geq x\}$.
Define $lub(X) := glb(ub(X))$. Clearly, for all $y\in ub(X)$, $lub(X)\leq y$.
But also $lub(X)\in ub(X)$, for if $x\in X$, then $x\in lb(ub(X)$, hence
$x\leq glb(ub(X)$. So $lub(X)$ yields the {\em least upper bound} of $X$, and
$(U,\geq,lub)$ is a complete lattice as well.

Given a complete lattice $(U,\leq,glb)$, we define the {\em least
fixpoint} ($\lfp$) and {\em greatest fixpoint} ($\gfp$) 
of any function $f:U\to U$ (not only for monotonic) as follows:
\[ \begin{array}{rcl}
  \lfp(f) &:=& glb( \{ x \,|\, f(x) \leq x \} )\\
  \gfp(f) &:=& lub( \{ x \,|\, x \leq f(x) \} )\\
\end{array}\]
For $\sigma\in\{\mu,\nu\}$, we abbreviate $\sigma(\lambda x.f(x))$ by $\sigma
x.f(x)$. Note that by definition, $\gfp$ in $(U,\leq,glb)$ equals $\lfp$ in
$(U,\geq,lub)$, so theorems on $(\lfp,\leq)$ hold for $(\gfp,\geq)$ as well
``by duality''.  Also note that $F:U\to U$ is monotonic in $(U,\leq)$ if and
only if it is monotonic in $(U,\geq)$.  A direct consequence of the
definition of $\lfp$ is the following principle (and its dual):
\begin{center}
\hfill $f(x)\leq x ~~\Rightarrow~~ \lfp(f)\leq x$ \hfill ({\em $\mu$-fixpoint
  induction})\\
\hfill $x\leq f(x) ~~\Rightarrow~~ x\leq\gfp(f)$ \hfill ({\em $\nu$-fixpoint
  induction})
\end{center}

\noindent We now have the following identities on fixpoint expressions:

\begin{lem} \label{fixcalculus}
Let $(U,\leq,glb)$ be a complete lattice.
Let $\sigma\in\{\mu,\nu\}$, $A\in U$, and let $F,G\in U\to U$ and
$H,K\in U\times U\to U$ be monotonic functions. Then:
\begin{enumerate}
\item $F(\fix{\sigma}(F)) = \fix{\sigma}(F)$ \hfill (computation rule)
\item $\sigma x.\,A = A$ \hfill (constant rule)
\item $\sigma x.\,F(G(x)) = F(\sigma x.\,G(F(x)))$\hfill (rolling rule)
\item $\sigma x.\,F(F(x)) = \sigma x.\,F(x)$       \hfill (square rule)
\item $\fix{\sigma}$ is monotonic \hfill (fixpoint monotonicity)
\item $\sigma x.\,H(x,x) = \sigma x.\,\sigma y.\,H(x,y)$ \hfill (diagonal rule)
\item $\sigma x.\,H(x,x) = \sigma x.\,H(x,H(x,x))$ \hfill (unfolding rule)
\item $\sigma x.\,H(x,x) = \sigma x.\,H(x,\sigma x.\,H(x,x))$ \hfill (solve rule)
\item $\sigma x.\, H(x,\sigma y.\, K(y,x)) =
       \sigma x.\, H(x,\sigma y.\, K(y,\sigma z.\,H(z,y)))$ \hfill (Beki\v{c} rule)
\end{enumerate}
\end{lem}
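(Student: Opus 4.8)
The plan is to prove every identity for the least fixpoint $\mu$ only; the greatest-fixpoint cases then follow for free by the duality noted in the excerpt, reading each statement in $(U,\geq,lub)$. (For fixpoint monotonicity one only has to check that reversing the order also reverses the hypothesis $F\leq G$, so that the conclusion still points the right way.) Throughout I rely on the two ingredients already available: $\mu$-fixpoint induction, $F(x)\leq x \Rightarrow \mu(F)\leq x$, and monotonicity of the functions involved.

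I would establish the laws in the order $(1),(2),(5),(3),(4),(6),(8),(7),(9)$, reusing earlier items in later proofs. The computation rule $(1)$ is the Knaster--Tarski core: writing $m=\mu(F)=glb\{x\mid F(x)\leq x\}$, monotonicity gives $F(m)\leq F(x)\leq x$ for every prefixpoint $x$, so $F(m)$ is a lower bound of the prefixpoints and hence $F(m)\leq m$; then $F(F(m))\leq F(m)$ makes $F(m)$ itself a prefixpoint, so $m\leq F(m)$, and antisymmetry closes it. The constant rule $(2)$ is immediate from $(1)$, since $A$ is a fixpoint of $\lambda x.A$. Fixpoint monotonicity $(5)$ is one application of $\mu$-induction: if $F\leq G$ then $F(\mu(G))\leq G(\mu(G))=\mu(G)$ by $(1)$, so $\mu(F)\leq\mu(G)$.

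For the remaining equalities I would in each case prove the two inequalities separately, systematically exhibiting one side as a (pre)fixpoint of the defining operator of the other and then invoking $\mu$-induction together with $(1)$. The rolling rule $(3)$ follows because $F(\mu x.G(F(x)))$ is a fixpoint of $F\circ G$ and, symmetrically, $G(\mu x.F(G(x)))$ is a fixpoint of $G\circ F$; the square rule $(4)$ then drops out by instantiating $(3)$ with $G:=F$, which forces $\mu x.F(F(x))$ to be a fixpoint of $F$. The diagonal rule $(6)$ is the first place monotonicity of a nested fixpoint is needed: $\Phi(x):=\mu y.H(x,y)$ is monotone by $(5)$, its fixpoint $R=\mu(\Phi)$ satisfies $H(R,R)=R$ (two uses of $(1)$), so $R$ is a fixpoint of $\lambda x.H(x,x)$; conversely $H(L,L)=L$ shows $\Phi(L)\leq L$ by inner $\mu$-induction. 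The solve rule $(8)$ uses the same bootstrap: with $L=\mu x.H(x,x)$ and $S=\mu x.H(x,L)$, the equality $H(L,L)=L$ makes $L$ a fixpoint of $\lambda x.H(x,L)$, giving $S\leq L$, and then $S\leq L$ yields $H(S,S)\leq H(S,L)=S$, so $S$ is a prefixpoint of $\lambda x.H(x,x)$ and $L\leq S$.

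The delicate points, which I would single out as the main obstacle, are the ``wrong'' direction of the unfolding rule $(7)$ and the Beki\v{c} rule $(9)$. For $(7)$, writing $M=\mu x.H(x,H(x,x))$, the inequality $M\leq L$ is easy since $L$ is a fixpoint of $\lambda x.H(x,H(x,x))$; but $L\leq M$ cannot be obtained by proving $H(M,M)\leq M$ directly, because $M$ and $H(M,M)$ need not be comparable a priori. The trick I would use is to pass through the diagonal form $L=\mu x.\mu y.H(x,y)$ and to observe, via $(1)$, that $M$ is a fixpoint of $(\lambda y.H(M,y))\circ(\lambda y.H(M,y))$; the square rule $(4)$ then collapses this to $\mu y.H(M,y)\leq M$, i.e. $\Phi(M)\leq M$, whence $L\leq M$ by $\mu$-induction. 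Finally, the Beki\v{c} rule $(9)$ I would prove by recognising both sides as the $x$-component of the least simultaneous solution of the mutually recursive system $x=H(x,y),\ y=K(y,x)$: the left-hand side is the result of eliminating $y$ first (substituting $y=\mu y.K(y,x)$), while the right-hand side eliminates $x$ first (substituting $x=\mu z.H(z,y)$). Proving that these two elimination orders agree is the heart of the matter; I expect to carry it out by a pair of mutual inequalities, using $(1)$ to turn each nested $\mu$ into an equation and $\mu$-induction to bound one elimination order by the other, which is the most bookkeeping-intensive step of the lemma.
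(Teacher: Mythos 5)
Your proposal is correct and, for items (1)--(6), coincides with the paper's proof almost line for line: the same reduction of the $\nu$ cases to $\mu$ by duality (including your correct remark that for (5) the hypothesis $F\leq G$ also reverses), the same lower-bound argument for the computation rule, and the same use of (5) to make $\lambda x.\,\mu y.\,H(x,y)$ monotone before the diagonal rule; your one-line proof of (5) via $F(\mu(G))\leq G(\mu(G))=\mu(G)$ is a harmless variant of the paper's argument, which bounds the whole set of prefixpoints instead. Where you genuinely diverge is in (7)--(9). For the solve rule (8) you give a direct two-inequality proof ($S\leq L$ because $L$ is a fixpoint of $\lambda x.\,H(x,L)$, then $H(S,S)\leq H(S,L)=S$ yields $L\leq S$), whereas the paper derives (8) by applying the diagonal rule to the swapped function $\lambda(y,x).\,H(x,y)$ together with the computation rule; your route is arguably more elementary. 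For the unfolding rule (7) both proofs use exactly (4) and (6), but arranged differently: the paper applies the square rule inside the nested fixpoint, obtaining $\mu x.\,\mu y.\,H(x,y)=\mu x.\,\mu y.\,H(x,H(x,y))$, and collapses both sides with (6), while you work at the point $M=\mu x.\,H(x,H(x,x))$ and apply (4) to $\lambda y.\,H(M,y)$ to conclude $\mu y.\,H(M,y)\leq M$; both are sound. The substantive difference is Beki\v{c} (9): you plan the classical route via the least simultaneous fixpoint of $(x,y)\mapsto(H(x,y),K(y,x))$ on the product lattice $U\times U$, with mutual inequalities. This does work --- the product is a complete lattice under the paper's pointwise order, and componentwise $\mu$-induction goes through, though closing the second elimination order needs the small extra step $K(y^*,p)\leq y^*$ obtained from $p\leq x^*$ and monotonicity --- but you explicitly leave exactly this bookkeeping undone. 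The paper instead makes (9) essentially free: setting $F(y):=\mu x.\,H(x,y)$ and $G(y):=\mu x.\,K(x,y)$, it applies the rolling rule (3) to get $\mu y.\,F(G(y))=F(\mu y.\,G(F(y)))$ and then collapses each side with the diagonal rule (6). This equational derivation avoids the simultaneous system entirely and shows that (9) is a formal consequence of (3) and (6) rather than requiring a separate Beki\v{c}-style product-lattice argument; your approach buys conceptual transparency (both sides visibly solve the same mutually recursive system) at the cost of the bookkeeping you flagged.
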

\begin{proof}\mbox{ }
We first prove the theorem for $\sigma=\mu$. By the observations above,
the theorem then follows for $\sigma=\nu$ as well (``by duality'').
\begin{enumerate}
\item \begin{enumerate}
\item %1a
Let $y$ with $F(y)\leq y$ be given. Then by fixpoint induction,\\ 
$\lfp(F) \leq y$. By monotonicity, $F(\lfp(F)) \leq F(y) \leq y$. 
Since $y$ is arbitrary,
$F(\lfp(F))$ is a lower bound of $\{x\,|\,F(x) \leq x\}$.
Hence $F(\lfp(F)) \leq glb(\{x\,|\,F(x) \leq x\}) = \lfp(F)$
\item %1b
$F(\lfp(F))\leq \lfp(F)$ by (a), so by monotonicity,\\
$F(F(\lfp(F)))\leq F(\lfp(F))$. By fixpoint induction, 
$\lfp(F)\leq F(\lfp(F))$.
\end{enumerate}
Then by anti-symmetry $F(\lfp(F)) = \lfp(F)$.
\item %2
Follows directly from (1) by taking $F := \lambda x.A$ 
(which is monotonic)
\item %3
Obviously, $\lambda x.\,F(G(x))$ and $\lambda x.\,G(F(x))$ are monotonic.
\begin{enumerate}
\item % 3a
By (1), $F(G(F(\mu x.\,G(F(x)))))=F(\mu x.\,G(F(x)))$.
Hence by fixpoint induction, $\mu x.\, F(G(x)) \leq F(\mu x.\,G(F(x)))$.
\item % 3b
\begin{align*}           % I did some alignment in the math environments, mostly to increase the distance between the operators and the formulas. If you do not agree with this, let us know.
 & \invRightarrow  G(F(G(\mu x.\,F(G(x))))) \eqby{1} G(\mu x.\,F(G(x))) \\
 & \Rightarrow \mbox{~~(by fixpoint induction)} \\
 & \invRightarrow \mu x.\, G(F(x)) \leq G(\mu x.\,F(G(x)))\\
 & \Rightarrow \mbox{~~(by monotonicity)} \\
 & \invRightarrow F(\mu x.\, G(F(x))) \leq F(G(\mu x.\,F(G(x)))) \eqby{1} \mu x.\,F(G(x))
\end{align*}
\end{enumerate}
By anti-symmetry, we obtain $\mu x.\, F(G(x)) = F(\mu x.\,G(F(x)))$.
\item \begin{enumerate}
\item %4a
Using (1) twice, $F(F(\mu x.\,F(x))) = F(\mu x.\,F(x)) = \mu x.\,F(x)$.
So by fixpoint induction, $\mu x.\,F(F(x)) \leq \mu x.\,F(x)$.
\item %4b
By (3), we get $F(\mu x.\,F(F(x))) = \mu x.\,F(F(x))$.
Hence by fixpoint induction, $\mu x.\,F(x) \leq \mu x.\,F(F(x))$.
\end{enumerate}
Then by anti-symmetry, $\mu x.\,F(F(x)) = \mu x.\,F(x)$.
\item %5
Assume $f\leq g$. Let $y$ with $g(y)\leq y$ be given.
Then $f(y)\leq g(y) \leq y$, so $\lfp(f)\leq y$ by fixpoint induction. 
Since $y$ is arbitrary, $\lfp(f)$ is a lower bound for $\{x\,|\,g(x)\leq x\}$. 
By definition $\lfp(g)$ is its greatest lower bound, so $\lfp(f)\leq\lfp(g)$.
\item \begin{enumerate}
\item % 6a
\begin{align*}
 & \invRightarrow H(\mu x.\,H(x,x),\mu x.\,H(x,x)) \eqby{1} \mu x.\,H(x,x)\\
 & \Rightarrow \mbox{~~(by fixpoint induction, applied with $F:=\lambda x.\,H(x,x)$)} \\
 & \invRightarrow \mu y.\,H(\mu x.\,H(x,x),y)\leq \mu x.\,H(x,x)\\
 & \Rightarrow \mbox{~~(by fixpoint induction)} \\
 & \invRightarrow \mu x.\,\mu y.\,H(x,y) \leq \mu x.\,H(x,x)
\end{align*}
\item %6b
Let us abbreviate $A := \mu x.\,\mu y.\,H(x,y)$. Using (5) one
can show that $\lambda x.\,\mu y.\,H(x,y)$ is monotonic.
Then:
\allowdisplaybreaks
\begin{align*}
 & \invRightarrow A \eqby{1} \mu y.\,H(A,y) \eqby{1} H(A,\mu y.\,H(A,y))\\
 & \Rightarrow \mbox{~~(by congruence and both equations above)} \\
 & \invRightarrow H(A,A) = H(A,\mu y.\,H(A,y)) = A\\
 & \Rightarrow \mbox{~~(by fixpoint induction)} \\
 & \invRightarrow \mu x.\,H(x,x) \leq A = \mu x.\,\mu y.\,H(x,y)
\end{align*}
\end{enumerate}
By anti-symmetry, we indeed get: $\mu x.\,H(x,x) = \mu x.\,\mu y.\,H(x,y)$.
\item %7
Using (4) on $\lambda y.\,H(x,y)$ yields
$\mu x.\,\mu y.\,H(x,y) = \mu x.\,\mu y.\,H(x,H(x,y))$. 
%Using (5), one can show that $\lambda x.\,\mu y.\,H(x,H(x,y))$ is monotonic.
Applying (6) to both sides yields
$\mu x.\,H(x,x) = \mu x.\,H(x,H(x,x))$.
\item %8
We use (6) twice on the function $\lambda (y,x).H(x,y)$:
\allowdisplaybreaks
\begin{align*}
 & \phantom{{}\eqby{1}{}}\mu x.\,H(x,x)\\
 & \eqby{6} \mu y.\,\mu x.\,H(x,y)\\
 & \eqby{1} \mu x.\,H(x,\mu y.\,\mu x.\,H(x,y))\\
 & \eqby{6} \mu x.\,H(x,\mu x.\,H(x,x))
\end{align*}

\item %9
Define $F(y) := \mu x.\,H(x,y)$ and $G(y) := \mu x.\,K(x,y)$. Then:
\begin{align*}
& \invRightarrow \mu y.\,F(G(y)) \eqby{3} F(\mu y.\,G(F(y))) \\
& \Rightarrow \mbox{~~~~~(by definition of $F$, $G$)}\\
& \invRightarrow \mu y.\,\mu x.\,H(x,G(y)) = F(\mu y.\,\mu x.\,K(x,F(y)))\\
& \Rightarrow \mbox{~~~~~(by 6, applied to left- and right-hand side)}\\
& \invRightarrow \mu x.\, H(x,G(x)) = F(\mu y.\, K(y,F(y)))\\
& \Rightarrow \mbox{~~~~~(by definition of $F$, $G$)}\\
& \invRightarrow \mu x.\, H(x,\mu y.\, K(y,x)) = \mu x.\, H(x,\mu y.\, K(y,\mu z.\,H(z,y)))
\qedhere
\end{align*}
\end{enumerate}
\end{proof}

A careful analysis shows that all these identities can be derived in
an equational style from the identities 1, 3, 4 and 6. A natural
question is whether all true equalities (with $\mu$ as
second order operation, and variables ranging over monotonic
functions) can be derived from these four identities in an equational
manner (thus excluding the fixpoint induction rule). We don't know
the answer, but we expect that at least the equations
$\mu x.\,F(x)=\mu x.\,F^p(x)$ are needed for all primes $p$.
Results from universal algebra don't apply directly,
due to the second order nature of the fixpoint operator.

By mixing least and greatest fixpoints, we also 
obtain a number of inequalities. In particular, 4 is new,
as far as we know. Note the similarity of (4) with Beki\v{c} Rule, 
Lemma~\ref{fixcalculus}. We will call (4) Beki\v{c} Inequality.
\begin{lem}\label{fixinequalities}
Let $(U,\leq,glb)$ be a complete lattice.
Let $\sigma\in\{\mu,\nu\}$, $A\in U$, and let $F,G\in U\to U$ and
$H,K\in U\times U\to U$ be monotonic functions. Then:
\begin{enumerate}
\item $\mu(F) \leq \nu(F)$
\item\begin{enumerate}
\item $\mu x.x \leq A$
\item $A \leq \nu x.x$
\end{enumerate}
\item $\mu x.\,\nu y.\,H(x,y) \leq \nu y.\,\mu x.\,H(x,y)$
%% \item\begin{enumerate}
%% \item $\mu x.\,H(x,\nu y.\,K(y,x)) \leq
%%        \mu x.\,H(x,\nu y.\,K(y,H(x,y)))$
%% \item $\nu x.\,H(x,\mu y.\,K(y,x)) \geq
%%        \nu x.\,H(x,\mu y.\,K(y,H(x,y)))$
%% \end{enumerate}
%% \item\begin{enumerate}
%% \item $\mu x.\,H(x,\nu y.\,K(y,H(x,y)))\leq
%%        \mu x.\,H(x,\nu y.\,K(y,\mu x.\,H(x,y)))$
%% \item $\nu x.\,H(x,\mu y.\,K(y,H(x,y)))\geq
%%        \nu x.\,H(x,\mu y.\,K(y,\nu x.\,H(x,y)))$
%% \end{enumerate}
\item\begin{enumerate}
\item $\mu x.\,H(x,\nu y.\,K(y,x)) \leq
       \mu x.\,H(x,\nu y.\,K(y,\mu x.\,H(x,y)))$
\item $\nu x.\,H(x,\mu y.\,K(y,x)) \geq
       \nu x.\,H(x,\mu y.\,K(y,\nu x.\,H(x,y)))$
\end{enumerate}
\end{enumerate}
\end{lem}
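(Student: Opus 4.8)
The plan is to prove each inequality in the stated direction and recover the remainder by duality: (1) and (3) are self-dual (their duals are instances of themselves), while (2b) and (4b) are the $\mu/\nu$-duals of (2a) and (4a). So it suffices to treat (1), (2a), (3) and (4a). Items (1) and (2) are immediate from Lemma~\ref{fixcalculus}. For (1), the computation rule gives $F(\nu(F)) = \nu(F)$, so $F(\nu(F)) \le \nu(F)$, and $\mu$-fixpoint induction yields $\mu(F) \le \nu(F)$. For (2a), applying $\mu$-fixpoint induction to the identity function with the post-fixpoint $A$ (trivially $A \le A$) gives $\mu x.x \le A$.

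Item (3) is the substantial case. Write $a := \mu x.\nu y.H(x,y)$ and $b := \nu y.\mu x.H(x,y)$, and set $h(x) := \nu y.H(x,y)$, which is monotonic by fixpoint monotonicity, so $a = \mu(h)$. Applying the computation rule twice to $b$ gives $b = \mu x.H(x,b)$ and hence $b = H(b,b)$. I would then reduce the goal $a \le b$ by $\mu$-fixpoint induction to the single inequality $h(b) = \nu y.H(b,y) \le b$.

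Everything hinges on proving $\nu y.H(b,y) \le b$, and here a single application of fixpoint induction does not suffice. Let $c := \nu y.H(b,y)$, so $c = H(b,c)$ by the computation rule. First, since $b = H(b,b)$, the element $b$ is a post-fixpoint of $\lambda y.H(b,y)$, which gives $b \le c$. The crux is to bootstrap this into $c \le b$: as $b = \nu y.\mu x.H(x,y)$, by $\nu$-fixpoint induction it suffices to show $c \le \mu x.H(x,c)$, i.e.\ that $c$ is a lower bound of $\{x \mid H(x,c) \le x\}$. For any $x$ in this set, monotonicity and the already-established $b \le c$ give $H(x,b) \le H(x,c) \le x$, so $x$ also lies in $\{x' \mid H(x',b) \le x'\}$, whose greatest lower bound is $b = \mu x.H(x,b)$; hence $b \le x$ and therefore $c = H(b,c) \le H(x,c) \le x$. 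The main obstacle is exactly this two-step pattern --- first derive $b \le c$, then feed it back through $\nu$-fixpoint induction to obtain $c \le b$ --- since neither extremal fixpoint can be compared to the other in one stroke.

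For (4a) I expect the same mechanism. Put $F(y) := \mu z.H(z,y)$ (monotonic by fixpoint monotonicity) and $S := \nu y.K(y,F(y))$, so that the right-hand side is $\mu x.H(x,S)$; by the computation rule this element $R$ satisfies $R = H(R,S)$ and, crucially, $R = F(S) = \mu z.H(z,S)$. Reducing the goal by $\mu$-fixpoint induction and using monotonicity of $H$ in its second argument, it suffices to prove $\nu y.K(y,R) \le S$. Since $R = F(S)$ this reads $\nu y.K(y,F(S)) \le S$, which I would prove as in (3): writing $S' := \nu y.K(y,F(S))$, the identity $S = K(S,F(S))$ shows $S$ is a post-fixpoint of $\lambda y.K(y,F(S))$, so $S \le S'$; then $\nu$-fixpoint induction for $S = \nu y.K(y,F(y))$, combined with $S \le S'$ and monotonicity of $F$, discharges the remaining step and yields $S' \le S$. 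Finally, (4b) follows from (4a) by duality. The recurring difficulty is to spot the right intermediate element (here $S$ and the identity $R = F(S)$) and the order of the two induction steps; with these in place each step reduces to monotonicity and the computation rule.
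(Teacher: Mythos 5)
Your proof is correct throughout, and for (1) and (2) it coincides with the paper's argument; for (3) and especially (4a) you take a genuinely different route. For (3), the paper works on the $\mu$-side: with $F(x):=\nu y.\,H(x,y)$ and $G(y):=\mu x.\,H(x,y)$ it shows $\mu(F)=H(\mu(F),\mu(F))$, derives $G(\mu(F))\le\mu(F)$ and then $\mu(F)\le G(\mu(F))$ by two fixpoint inductions (with a monotonicity chain through $F(G(\mu(F)))$ in between), and concludes $\mu(F)\le\nu(G)$ by $\nu$-induction. You work on the $\nu$-side instead: you show that $b=\nu(G)$ is a pre-fixpoint of $\lambda x.\,\nu y.\,H(x,y)$ via the auxiliary element $c=\nu y.\,H(b,y)$ and an explicit lower-bound argument (in fact you show $c=b$, slightly more than needed); this is the same two-step bootstrap of fixpoint inductions, mirrored, and all your individual steps check out, including the crucial use of $b\le c$ inside the lower-bound argument. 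The divergence is sharper in (4a): the paper derives it calculationally from Lemma~\ref{fixcalculus}, using the diagonal rule (6) to rewrite the left-hand side as $\mu x.\,F(G(x))$, the rolling rule (3) to obtain $F(\mu x.\,G(F(x)))$, a single application of part (1) plus fixpoint monotonicity to weaken $\mu$ to $\nu$ under $F$, and the diagonal rule again to land on the right-hand side --- a short derivation that makes visible that Beki\v{c} Inequality is essentially Beki\v{c} Rule with one $\mu\le\nu$ weakening. Your proof is instead from first principles: you identify the right-hand side as $R=F(S)$ with $S=\nu y.\,K(y,F(y))$ (the same $F$ and the same target expression as the paper's), reduce by $\mu$-induction to $\nu y.\,K(y,F(S))\le S$, and discharge it by repeating the $S\le S'$, then $S'\le S$ bootstrap, where the key step $S'=K(S',F(S))\le K(S',F(S'))$ before the final $\nu$-induction is exactly right. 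Your version is longer but self-contained, using only the computation rule, fixpoint monotonicity and the two induction principles, and it makes explicit where monotonicity of $F$ and $K$ enter; the paper's version buys brevity and structural insight by reusing the identities of Lemma~\ref{fixcalculus}. Your duality bookkeeping ((2b) and (4b) dual to (2a) and (4a); (1) and (3) self-dual) matches the paper exactly.
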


\begin{proof}\hfill
\begin{enumerate}
\item $F(\nu(F)) \eqby{\ref{fixcalculus}.1} \nu(F)$, so by fixpoint
induction, $\mu(F)\leq\nu(F)$.
\item (a) $A\leq A$, hence by fixpoint induction, $\mu x.x \leq A$.
Then (b) follows by duality.
\item Define $F(x):=\nu y.\,H(x,y)$ and $G(y):=\mu x.\,H(x,y)$. 
Note that both $F$ and $G$ are monotonic, using Lemma~\ref{fixcalculus}.5. 
Then:
\allowdisplaybreaks
\begin{align*}
 & \invRightarrow \mu(F)
     \eqby{\ref{fixcalculus}.1}
   F(\mu(F))
     \eqby{\ref{fixcalculus}.1}
   H(\mu(F),F(\mu(F)))
     \eqby{\ref{fixcalculus}.1}
   H(\mu(F),\mu(F)) \\
& \Rightarrow ~~~~~\mbox{(by fixpoint induction)}\\
  & \invRightarrow G(\mu(F)) = \mu x.\,H(x,\mu(F)) \leq \mu(F) \\
& \Rightarrow ~~~~~\mbox{(monotonicity $F$)}\\
  & \invRightarrow F(G(\mu(F))) \leq F(\mu(F)) \eqby{\ref{fixcalculus}.1} \mu(F) \\
& \Rightarrow ~~~~~\mbox{(monotonicity $H$)}\\
  & \invRightarrow F(G(\mu(F))\eqby{\ref{fixcalculus}.1}
    H(G(\mu(F)),F(G(\mu(F)))) \leq
    H(G(\mu(F)),\mu(F))\eqby{\ref{fixcalculus}.1}
    G(\mu(F))\\
& \Rightarrow ~~~~~\mbox{(by fixpoint induction)}\\
  & \invRightarrow \mu(F)\leq G(\mu(F))\\
& \Rightarrow ~~~~~\mbox{(by fixpoint induction for $\nu$)}\\
  & \invRightarrow \mu(F) \leq \nu(G)
\end{align*}
   
\item (a) Define $F(y):= \mu x.\,H(x,y)$ and $G(x):=\nu y.\,K(y,x)$. 
Note that both $F$ and $G$ are monotonic, using Lemma~\ref{fixcalculus}.5. 
Then:
\begin{align*}
 & \emptyeqby{\ref{fixcalculus}.6}{\phantom{=}} \mu x.\,H(x,\nu y.\,K(y,x))\\
 & \eqby{\ref{fixcalculus}.6}
   \mu x.\,\mu z.\,H(z,\nu y.\,K(y,x))\\
 & \emptyeqby{\ref{fixcalculus}.6}{=} \mu x.\,F(G(x)) \\
 & \eqby{\ref{fixcalculus}.3}
   F(\mu x.\,G(F(x)))\\
 & \emptyeqby{\ref{fixcalculus}.6}{\leq} \mbox{~~~~~(using 1, and monotonicity of $F$)}\\
 & \emptyeqby{\ref{fixcalculus}.6}{\phantom{=}} F(\nu x.\,G(F(x)))\\
 & \emptyeqby{\ref{fixcalculus}.6}{=} F(\nu x.\,\nu y.\,K(y,F(x)))\\
 & \eqby{\ref{fixcalculus}.6}
   F(\nu y.\,K(y,F(y)))\\
 & \emptyeqby{\ref{fixcalculus}.6}{=} \mu x.\,H(x,\nu y.\,K(y,\mu x.\,H(x,y)))
\end{align*}

Then (b) follows by ``duality'' (reversing $\mu/\nu$ and ${\leq}/{\geq}$). 
More precisely, (b) is (a) in the reversed complete lattice $(U,\geq,lub)$.
\qedhere
\end{enumerate}
\end{proof}

\noindent Note that (1) and (3) are their own dual.

\section{Fixpoint Equation Systems}
\label{fes}
In this section we first formally define Fixpoint Equation Systems (FES).
We show by examples how they generalize Boolean and Predicate Equation
Systems. Subsection~\ref{semantics} introduces the semantics
of a FES by defining its solutions. 
Finally, Subsection~\ref{depgraph} defines the variable dependency graph
in a FES.

\subsection{Definition of Fixpoint Equation Systems}
\label{fesdef}
Fix a complete lattice $(U,\leq,glb)$, and a set of variables $\X$.
Throughout the paper, we assume that equality on variables is decidable.
We define
the set of {\em valuations} $\Val := \X \to U$.  For $X\in\X$, $\eta\in\Val$,
$P\in U$, we denote by $\eta[X:=P]$ the valuation that returns $P$ on $X$ and
$\eta(Y)$ on $Y\neq X$.  As any function, valuations can be ordered pointwise,
i.e. $\eta_1\leq\eta_2$ iff $\forall X\in\X.\, \eta_1(X)\leq\eta_2(X)$. Note
that valuation update is monotonic, that is, if $P\leq Q$, then
$\eta[X:=P]\leq\eta[X:=Q]$.
To indicate that two valuations agree on a set of variables $V \subseteq \X$,
we write $\eta_1 \restrictedeq{V} \eta_2$, formally defined as
$\forall X \in V. \eta_1(X) = \eta_2(X)$.
The complement of $V$ in $\X$ is denoted $\compl{V}$.

A set of {\em mutually recursive equations} is a member of $\Eqs := \Val \to
\Val$. The set $\Eqs$ is also ordered pointwise.
$\E$ is {\em monotonic} iff it is a monotonic function on $\Val$. 
Note that this semantic view on equations escapes the need to introduce
(and be limited) to a particular syntax.

\begin{exa}\label{exeqs}
Take $\X=\{X,Y,Z\}$ and $U=\Bool$, the Boolean lattice $\bot < \top$.
We write $(a,b,c)\in\Bool^3$ as a shorthand for the valuation 
$\{X{=}a,Y{=}b,Z{=}c\}$.
The system of equations $\{X=Y\wedge Z,\ Y=X\vee Z,\ Z=\neg X\}$ 
is represented in our theory as the function
\[ \B \ :=\  
   \lambda (X,Y,Z)\in \Bool^3.\, (Y\wedge Z,\ X\vee Z,\ \neg X)\enspace. \] 
It is not monotonic, because as valuations, 
$(\bot,\bot,\bot)\leq(\top,\top,\top)$,  but 
\[ \B(\bot,\bot,\bot) = (\bot,\bot,\top) \not\leq (\top,\top,\bot) =
\B(\top,\top,\top)\enspace .\]
\end{exa}

Note that $\Eqs$ is isomorphic with $\X\to\Val\to U$. This motivates
the following slight abuse of notation: Given $\E\in\Eqs$, we will often write
$\E_X(\eta)$ for $\E(\eta)(X)$. This expression denotes the definition of $X$
in $\E$, possibly depending on other variables as represented by the
valuation $\eta$.
Similar to valuations, agreement on variables from $V \subseteq \X$ is denoted
$\E_1 \restrictedeq{V} \E_2$, defined as
$\forall \eta \in \Val, X \in V. \E_1(\eta)(X) = \E_2(\eta)(X)$.

The set of {\em specifications} consists of finite lists of signed variables:
$\Spec := (\{\mu,\nu\}\times \X)^*$. Note that a specification selects a
subset of variables to be considered, assigns a fixpoint sign to these
variables, and assigns an order to these variables. We use $\sigma X$ as a
notation for $(\sigma,X)$, write $\varepsilon$ for the empty list, and use
$\conc$ for list concatenation. We will identify a singleton list with its
element. For instance, $[\mu X,\nu Y] \conc \mu Z$ denotes the specification
$[(\mu,X),(\nu,Y), (\mu,Z)]$. We define
$\dom(\S)\subseteq\X$ as the set of variables that occur in some pair in $\S$.
Decidability of $X \in \dom(\S)$ follows from finiteness of $\S$ and
decidability of equality on variables.
We define $\disjoint(\S_1,\S_2)$ iff $\dom(\S_1)\cap\dom(\S_2)=\emptyset$.

We often require that valuations or equation systems agree on the variables
in a specification.
Accordingly, we overload $\restrictedeq{\cdot}$ so that
$\eta_1 \restrictedeq{\S} \eta_2$ (\emph{resp.} $\E_1 \restrictedeq{\S} \E_2$)
is defined as $\eta_1 \restrictedeq{\S} \eta_2$ (\emph{resp.}
$\E_1 \restrictedeq{\S} \E_2$).
This also applies when a complement is involved: $\eta_1 \restrictedeq{\compl{\S}} \eta_2$ is $\eta_1 \restrictedeq{\compl{\dom(\S)}} \eta_2$.

Finally, a {\em fixpoint equation system} (FES) 
$\F$ on $(U,\X)$ is simply a pair in $\Fes := \Eqs\times\Spec$.
Before we present the semantics of FES, we first consider several instances of FES.

\begin{exa}
The Boolean Equation System~\cite{Mader96} traditionally written as
\[\begin{array}{rcl}
\mu X & = & Y\wedge Z\\
\nu Y & = & X\vee Z\\
\nu Z & = & \neg X
\end{array}\]
is represented in our theory as the pair $(\B,[\mu X,\nu Y,\nu Z])$,
where $\B$ is from Example~\ref{exeqs}.
Note that this notation for BES integrates the set of equations and the specification into one, and these cannot be considered separately.
\end{exa}

\begin{exa}
A PBES (parameterized BES \cite{GrWiTCS05}, or predicate equation system
\cite{ZhCl05}) is a FES over the complete lattice 
$U:=(\P(D),\subseteq)$ for some data set $D$, or equivalently $(D \to \Bool, \leq)$.
PBES thus generalise BES: each variable is now a predicate over domain $D$, allowing one to create complex expressions over data.
For our example, let $\X = \{X,Y\}$ and $D = \Nat \times \Bool$.
Again using the shorthand $(X,Y) \in (D \to \Bool)^2$ for valuations, $(\B',[\mu X, \nu Y])$ is a PBES, where
\[\begin{array}{rcl}
	\B' & := & \lambda (X,Y) \in (D \to \Bool)^2.\\
	&& \quad(\lambda m\in \Nat,b \in \Bool.\
	(     b\rightarrow m>0\wedge Y(m,\bot)) \wedge 
	(\neg b\rightarrow m<5\wedge Y(m,\bot)),\\
	&& \quad\phantom{(}\lambda m\in \Nat,b \in \Bool.\ X(m-1,m>4) \vee Y(m+1,\bot))
\end{array}\]
The function on the last line, which defines $Y$, does not contain an occurrence of the argument $b$.
However, in our theory we are required to include it so that both variables in $\X := \{X,Y\}$ are predicates over the same $D:=\Nat\times \Bool$.
In the notation of~\cite{GrWiTCS05}, the same PBES is simply written as a pair of predicate definitions with accompanying fixpoint signs.
The argument $b$ of $Y$ may be left out:
\[\begin{array}{rcl}
\mu X(m\colon\Nat,b\colon\Bool) & = &
  (     b\rightarrow m>0\wedge Y(m)) \wedge 
  (\neg b\rightarrow m<5\wedge Y(m))\\
\nu Y(m\colon\Nat) &= & X(m-1,m>4) \vee Y(m+1)
\end{array}\]
Similar to BES, the PBES formalism as defined in~\cite{GrWiTCS05} does not consider the equations and the specification separately.
%% Given $\phi,\psi\in\P(D)$, we denote the valuation $\{X=\phi,Y=\psi\}$
%% by $(\phi,\psi)$. Then the PBES above is represented in our theory by
%% the FES $(\E,\S)$, where $\S := [\mu X,\nu Y]$, and $\E$ is defined by
%% \[\begin{array}{rl}
%% \lambda (X,Y)\in\P(D)^2 .
%%   \Big( &\!\! \lambda (m,b).(b\rightarrow m>0\wedge Y(m,\bot)) \wedge 
%%                         (\neg b\rightarrow m<5\wedge Y(m,\bot)),\\
%%         &\!\! \lambda (m,b).X(m-1,m>4)\vee Y(m+1,\bot)\Big)\enspace.
%% \end{array}\]
%% This is of type $(D\to\B)^2\to(D\to B)^2$, which is indeed
%% isomorphic to $(\X\to \P(D))\to(\X\to \P(D))$.
\end{exa}

\begin{rem}
	Our choice of separating the set of equations and the specification
	makes it easier to perform induction proofs over the specification
	(because one retains knowledge of all equations in the proof scope).
	However, we have not required that all variables in $\S$ are unique.  This is
	not needed in our formalization, because in Lemma~\ref{sanity}.3 we will
	show that if $X\in\dom(\S)$, then any set of equations $\E$ has the same
	semantics when combined with $S$ or with $\sigma X\conc S$, for any $\sigma \in \{\mu,\nu\}$.
	Do note that there is a hidden
	assumption: even if $X$ occurs multiple times in $\S$, possibly with
	different signs, there can only be one defining equation for it,
	because $\E$ is a function. So when transferring the results to
	traditional notation, one should add the (quite natural) requirement
	that all equations have unique variable names.
\end{rem}

\begin{exa}
\emph{Modal equation systems} (MES)~\cite{cleaveland_linear-time_1993} is a very similar FES instance to PBES since it also uses the powerset lattice $(\P(D),\subseteq)$.
However, a MES is interpreted on a \emph{labelled transition system} (LTS), so that $D$ is equal to the set of states in the LTS.
MES includes, for every action $a$ in the LTS, the modal operators $[a]\varphi$ (``$\varphi$ must hold after every possible transition labelled with $a$'') and $\langle a \rangle \varphi$ (``there exists an $a$-transition after which $\varphi$ holds'').
MES is an equational representation of the modal mu-calculus.
Adopting the notation from the previous examples, an example of a MES is:
\[\begin{array}{rcl}
	\nu X & = & Y\\
	\mu Y & = & [a] X \land [b] Y
\end{array}\]
This MES expresses that on every path in the LTS, action $a$ occurs infinitely often (and $b$ may happen only finitely often in between).
\end{exa}

\subsection{Semantics of FES and Basic Results}\label{semantics}
Next, the semantics of a FES, $\sem{\E,\S}:\Val\to\Val$, is 
defined recursively on $\S$:
\[\left\{\begin{array}{rcl}
\sem{\E,\varepsilon}(\eta)   & := & \eta \\
\sem{\E,\sigma X\conc\S}(\eta) & := & 
   \sem{\E,\S}(\eta[X:=\sigma(F)]),\\
&& \mbox{~~where } F \colon U \to U \mbox{~is defined as~}\\
&& \; F(P) := \E_X(\sem{\E,\S}(\eta[X:= P]))
\end{array}\right.\]

We now state the first results on the semantics of FES.
The lemma below states monotonicity properties for
the semantics. They ensure that fixpoints are well-behaved.
Below, recall that we use pointwise ordering, \emph{e.g.},
$\E_1 \leq \E_2$ iff $\E_1(\eta)(X) \leq \E_2(\eta)(X)$ for all $\eta$ and $X$.

\begin{lem}\label{monotonic}
Let $\E$, $\E_1$, $\E_2 \in \Eqs$ and $\S\in\Spec$, then
\begin{enumerate}
\item If $\E$ is monotonic, then $\sem{\E,\S}$ is monotonic.
\item If $\E_1$ is monotonic and $\E_1\leq \E_2$, 
      then $\sem{\E_1,\S}\leq \sem{\E_2,\S}$.
\end{enumerate}
\end{lem}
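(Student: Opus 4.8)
The plan is to prove both statements by induction on the specification $\S$, since $\sem{\E,\S}$ is defined by recursion on $\S$. The base case $\S=\varepsilon$ is immediate in both parts: $\sem{\E,\varepsilon}$ is the identity on $\Val$ (hence monotonic), and $\sem{\E_1,\varepsilon}=\sem{\E_2,\varepsilon}$ is the identity regardless of $\E_i$. I would prove part~1 first, because the inductive step of part~2 relies on it.

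For the inductive step of part~1, write $\S=\sigma X\conc\S'$ and, for each $\eta$, abbreviate $F_\eta(P):=\E_X(\sem{\E,\S'}(\eta[X:=P]))$, so that $\sem{\E,\sigma X\conc\S'}(\eta)=\sem{\E,\S'}(\eta[X:=\sigma(F_\eta)])$. Given $\eta_1\leq\eta_2$, the first step is to show $F_{\eta_1}\leq F_{\eta_2}$ pointwise: for each $P$ we have $\eta_1[X:=P]\leq\eta_2[X:=P]$ (update is monotonic), hence $\sem{\E,\S'}(\eta_1[X:=P])\leq\sem{\E,\S'}(\eta_2[X:=P])$ by the induction hypothesis, and applying the monotonic map $(\E)_X$ preserves this. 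Fixpoint monotonicity (Lemma~\ref{fixcalculus}.5) then yields $\sigma(F_{\eta_1})\leq\sigma(F_{\eta_2})$; combined with $\eta_1\leq\eta_2$ and monotonicity of update, this gives $\eta_1[X:=\sigma(F_{\eta_1})]\leq\eta_2[X:=\sigma(F_{\eta_2})]$, and a final application of the induction hypothesis (monotonicity of $\sem{\E,\S'}$) closes the step.

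For part~2, again write $\S=\sigma X\conc\S'$ and set $F^i_\eta(P):=(\E_i)_X(\sem{\E_i,\S'}(\eta[X:=P]))$ for $i\in\{1,2\}$. The crux is to establish $F^1_\eta\leq F^2_\eta$ pointwise, and here one must be careful: only $\E_1$, not $\E_2$, is assumed monotonic, so the comparison cannot be pushed through $(\E_2)_X$. Instead I route it through $\E_1$: for each $P$ the induction hypothesis gives $\sem{\E_1,\S'}(\eta[X:=P])\leq\sem{\E_2,\S'}(\eta[X:=P])$, applying the monotonic $(\E_1)_X$ preserves this, and finally $\E_1\leq\E_2$ lets me replace $(\E_1)_X$ by $(\E_2)_X$ on the right; chaining these three inequalities yields $F^1_\eta(P)\leq F^2_\eta(P)$. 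Fixpoint monotonicity then gives $\sigma(F^1_\eta)\leq\sigma(F^2_\eta)$, whence $\eta[X:=\sigma(F^1_\eta)]\leq\eta[X:=\sigma(F^2_\eta)]$. To finish I would chain $\sem{\E_1,\S'}(\eta[X:=\sigma(F^1_\eta)])\leq\sem{\E_1,\S'}(\eta[X:=\sigma(F^2_\eta)])$, using monotonicity of $\sem{\E_1,\S'}$ (part~1, which needs $\E_1$ monotonic), with $\sem{\E_1,\S'}(\eta[X:=\sigma(F^2_\eta)])\leq\sem{\E_2,\S'}(\eta[X:=\sigma(F^2_\eta)])$, from the induction hypothesis.

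I expect the main obstacle to be precisely this fixpoint comparison in part~2: since $F^2_\eta$ need not be monotonic, one cannot blindly invoke Lemma~\ref{fixcalculus}.5 as stated, because its hypotheses assume monotonic arguments. The saving observation is that the proof of fixpoint monotonicity uses only fixpoint induction and the definitions of $glb$/$lub$, never the monotonicity of $f$ or $g$; hence $f\leq g$ alone already suffices for $\sigma(f)\leq\sigma(g)$, and the step goes through. I would flag this explicitly, or equivalently re-derive the single instance $\sigma(F^1_\eta)\leq\sigma(F^2_\eta)$ directly from the definition of $\sigma$, to keep the argument honest.
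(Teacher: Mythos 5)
Your proposal is correct and takes essentially the same route as the paper: induction on $\S$ with part~1 proved first, the same pointwise comparison $F^1_\eta\leq F^2_\eta$ in part~2 obtained by applying the monotonic $\E_1$ to the induction hypothesis and then using $\E_1\leq\E_2$, and the same closing chain combining part~1 with the induction hypothesis. Your explicit flag about Lemma~\ref{fixcalculus}.5 is well taken and in fact more careful than the paper, which invokes fixpoint monotonicity at the same step even though $F^2_\eta$ need not be monotonic; as you observe, the proof of that lemma uses only fixpoint induction and never the monotonicity of its arguments, so the step is sound.
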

\begin{proof}\hfill
\begin{enumerate}
\item We prove $\forall \eta_1,\eta_2.\
\eta_1\leq\eta_2\Rightarrow \sem{\E,\S}(\eta_1)\leq\sem{\E,\S}(\eta_2)$ 
by induction on $\S$. The base case: assume $\eta_1\leq\eta_2$,
then
$\sem{\E,\varepsilon}(\eta_1)=\eta_1\leq\eta_2=\sem{\E,\varepsilon}(\eta_2)$.
Induction step: Let $\eta_1\leq \eta_2$ be given, then for any $P\in U$, 
we have $\eta_1[X:=P]\leq{\eta_2[X:=P]}$. So
\allowdisplaybreaks
\begin{align*}
 & \invRightarrow \mbox{~~~~~(induction hypothesis)}\\
 & \invRightarrow \sem{\E,\S}(\eta_1[X:=P])\leq\sem{\E,\S}(\eta_2[X:=P])\\
 & \Rightarrow \mbox{~~~~~($\E$ is monotonic)}\\
 & \invRightarrow \E_X(\sem{\E,\S}(\eta_1[X:=P])) \leq \E_X(\sem{\E,\S}(\eta_2[X:=P]))\\
 & \Rightarrow \mbox{~~~~~(fixpoint monotonicity, Lemma~\ref{fixcalculus}.5)}\\
 & \invRightarrow \sigma P.\,\E_X(\sem{\E,\S}(\eta_1[X:=P])) \leq 
   \sigma P.\,\E_X(\sem{\E,\S}(\eta_2[X:=P]))\\
 & \Rightarrow \mbox{~~~~~(define 
        $F_i(P) := \E_X(\sem{\E,\S}(\eta_i[X:=P]))$, for $i=1,2$)}\\
 & \invRightarrow \eta_1[X:=\sigma(F_1)] \leq \eta_2[X:=\sigma(F_2)]\\
 & \Rightarrow \mbox{~~~~~(induction hypothesis)}\\
 & \invRightarrow \sem{\E,\S}(\eta_1[X:=\sigma(F_1)]) \leq 
   \sem{\E,\S}(\eta_2[X:=\sigma(F_2)])\\
 & \Leftrightarrow \mbox{~~~~~(definition)} \\
 & \invRightarrow \sem{\E,\sigma X\conc\S}(\eta_1) \leq 
   \sem{\E,\sigma X\conc\S}(\eta_2)
\end{align*}
\item % monotonicity 2
Assume $\E_1$ is monotonic, and $\E_1\leq \E_2$ (pointwise). We prove 
by induction on $\S$ that
$\forall \eta.\,\sem{\E_1,\S}(\eta)\leq\sem{\E_2,\S}(\eta)$.
The base case is simple: for all $\eta$, we have
$\sem{\E_1,\varepsilon}(\eta)=\eta=\sem{\E_2,\varepsilon}(\eta)$.
For the induction step, let $\eta$ be given, then for any $P\in U$:
\allowdisplaybreaks
\begin{align*}
 & \invRightarrow \mbox{~~~~~(induction hypothesis)}\\
 & \invRightarrow \sem{\E_1,\S}(\eta[X:=P])\leq\sem{\E_2,\S}(\eta[X:=P])\\
 & \Rightarrow \mbox{~~~~~(monotonicity of $\E_1$ and $\E_1\leq \E_2$, pointwise))}\\
 & \invRightarrow \E_{1,X}(\sem{\E_1,\S}(\eta[X:=P]))\leq\E_{2,X}(\sem{\E_2,\S}(\eta[X:=P]))\\
 & \Rightarrow \mbox{~~~~~(fixpoint monotonicity, Lemma~\ref{fixcalculus}.5)}\\
 & \invRightarrow \sigma P.\,\E_{1,X}(\sem{\E_1,\S}(\eta[X:=P]))\leq
   \sigma P.\,\E_{2,X}(\sem{\E_2,\S}(\eta[X:=P]))\\
 & \Rightarrow
   \mbox{~~~~~(define $F_i(P) := \E_{i,X}(\sem{\E_i,\S}(\eta[X:=P]))$,
               for $i=1,2$)}\\
 & \invRightarrow \eta[X:=\sigma(F_1)] \leq \eta[X:=\sigma(F_2)] \\
 & \Rightarrow
   \begin{array}[t]{rl}
         & \sem{\E_1,\S}(\eta[X:=\sigma(F_1)])\\
    \leq & \mbox{~~~~~(monotonicity, part 1 of this lemma)}\\
         &\sem{\E_1,\S}(\eta[X:=\sigma(F_2)]) \\
    \leq & \mbox{~~~~~(induction hypothesis)} \\
         &\sem{\E_2,\S}(\eta[X:=\sigma(F_2)]) \\
   \end{array}\\
 & \Rightarrow \mbox{~~~(definition)} \\
 & \invRightarrow  \sem{\E_1,\sigma X\conc\S}(\eta) \leq 
   \sem{\E_2,\sigma X\conc\S}(\eta)
\qedhere
\end{align*}
\end{enumerate}
\end{proof}

The next lemma states three sanity properties of the semantics:
(1) The semantics only modifies the valuation on elements
in the domain; (2) the semantics only depends on equations mentioned
in the domain; (3) the input valuation is only used for variables
outside the domain.

\begin{lem}\label{sanity} Let $\E,\E_1,\E_2\in\Eqs$, 
$\S\in\Spec$, $\eta\in\Val$ and $X\in\X$.
\begin{enumerate}
\item If $X\notin\dom(\S)$ then $\sem{\E,\S}(\eta)(X) = \eta(X)$.
\smallskip
\item If $\E_1 \restrictedeq{\S} \E_2$, then 
$\sem{\E_1,\S}=\sem{\E_2,\S}$.
\smallskip
\item If $\eta_1 \restrictedeq{\compl{\S}} \eta_2$,
then $\sem{\E,\S}(\eta_1)=\sem{\E,\S}(\eta_2)$.
\end{enumerate}
\end{lem}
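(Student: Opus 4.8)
The plan is to prove all three statements by structural induction on the specification $\S$, following the recursive definition of $\sem{\E,\S}$. In each case the base case $\S=\varepsilon$ is immediate: since $\sem{\E,\varepsilon}(\eta)=\eta$, part (1) reads $\eta(X)=\eta(X)$, part (2) holds because the semantics does not consult $\E$ at all, and part (3) holds because $\compl{\dom(\varepsilon)}=\X$ forces $\eta_1=\eta_2$. All the real work is in the inductive step for a specification of the form $\sigma X\conc\S'$, where I may assume the statement for the tail $\S'$.

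For part (1), in the step with head $\sigma Y$ I would unfold $\sem{\E,\sigma Y\conc\S'}(\eta)(X)=\sem{\E,\S'}(\eta[Y:=\sigma(F)])(X)$; since $X\notin\dom(\sigma Y\conc\S')$ yields both $X\neq Y$ and $X\notin\dom(\S')$, the induction hypothesis rewrites this to $(\eta[Y:=\sigma(F)])(X)$, which equals $\eta(X)$ because $Y\neq X$. For parts (2) and (3) the common crux is to show that the fixpoint value assigned to the head variable $X$ coincides on both sides, i.e.\ $\sigma(F_1)=\sigma(F_2)$ for the auxiliary functions $F_i(P):=\E_{i,X}(\sem{\E_i,\S'}(\eta_i[X:=P]))$ arising from the semantics clause (with $\eta_1=\eta_2$ in part (2) and $\E_1=\E_2$ in part (3)). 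It suffices to prove $F_1=F_2$ pointwise; afterwards I apply the induction hypothesis once more to $\S'$ at the updated valuations $\eta_i[X:=\sigma(F_i)]$, which by then agree in the required sense.

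Establishing $F_1=F_2$ is where the two parts diverge and where the main difficulty lies. In part (2), where $\E$ varies but $\eta$ is fixed, $F_1=F_2$ follows from a single use of the induction hypothesis ($\sem{\E_1,\S'}=\sem{\E_2,\S'}$, valid because agreement on $\dom(\sigma X\conc\S')$ restricts to agreement on $\dom(\S')$) together with $\E_{1,X}=\E_{2,X}$, which holds since $X\in\dom(\sigma X\conc\S')$. In part (3), where $\eta$ varies, the argument is subtler: I must invoke the induction hypothesis on $\S'$ at the family of pairs $\eta_1[X:=P]$ and $\eta_2[X:=P]$ \emph{for every} $P\in U$. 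The key bookkeeping step is to check $\eta_1[X:=P]\restrictedeq{\compl{\S'}}\eta_2[X:=P]$: on $X$ both return $P$, and on any other variable outside $\dom(\S')$ they return $\eta_1$ resp.\ $\eta_2$, which agree because such a variable also lies outside $\dom(\sigma X\conc\S')$. This \emph{double} use of the induction hypothesis — once over all $P$ to equate the functions $F_1$ and $F_2$, and once more on the final updated valuations — is what I expect to demand the most care, precisely because the fixpoint operator $\sigma$ forces reasoning about the whole function $F_i$ rather than a single value. I would therefore prove (1) first as a standalone fact, and dispatch (2) and (3) with this shared $F_1=F_2$ template.
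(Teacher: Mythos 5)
Your proof is correct and follows essentially the same route as the paper: all three parts are proved by induction on $\S$, with the inductive steps of (2) and (3) resting on exactly the same double use of the induction hypothesis — first pointwise over all $P\in U$ to obtain $F_1=F_2$ and hence $\sigma(F_1)=\sigma(F_2)$, then once more at the updated valuations. Even the bookkeeping step you single out in part (3), namely $\eta_1[X:=P]\restrictedeq{\compl{\S'}}\eta_2[X:=P]$, is precisely the extensionality step in the paper's proof.
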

\begin{proof}\hfill
\begin{enumerate}
\item Induction on $\S$. The base case holds by definition. For the induction
step, assume $X\notin\dom(\sigma Y\conc S)$. Then
\begin{align*}
& \inveq \sem{\E,\sigma Y\conc\S}(\eta)(X)\\
& = \mbox{~~~~~(define $F(P) := \E_Y (\sem{\E,\S}(\eta[Y:=P])$)}\\
& \inveq \sem{\E,\S}(\eta[Y:= \sigma(F)])(X)\\
& = \mbox{~~~~~(induction hypothesis, note: $X\notin\dom(\S)$)}\\
& \inveq \eta[Y:=\sigma(F)](X)\\
& = \mbox{~~~~~($X\notin\dom(\sigma Y\conc S)$, so $X\neq Y$)}\\
& \inveq \eta(X)
\end{align*}
\item Induction on $\S$. The base case holds by definition. For the
induction step, let $\eta$ be given, and assume 
$\E_1 \restrictedeq{\sigma X\conc \S} \E_2$. 
Note that this implies $\E_1 \restrictedeq{\S} \E_2$, so
we can use the induction hypothesis. For any $P\in U$ we have:
\allowdisplaybreaks
\begin{align*}
& \invRightarrow \mbox{~~~~~(induction hypothesis)}\\
& \invRightarrow \sem{\E_1,\S}(\eta[X:=P]) = \sem{\E_2,\S}(\eta[X:=P])\\
& \Rightarrow \mbox{~~~~~($X\in\dom(\sigma X\conc\S)$, so $\E_{1,X}=\E_{2,X}$
 by assumption)}\\
& \invRightarrow \E_{1,X}(\sem{\E_1,\S}(\eta[X:=P])) = \E_{2,X}(\sem{\E_2,\S}(\eta[X:=P]))\\
& \Rightarrow 
   \mbox{~~~~~(define $F_i(P) := \E_{i,X}(\sem{\E_i,\S}(\eta[X:=P]))$
     for $i=1,2$)}\\
& \invRightarrow \eta[X:=\sigma(F_1)] = \eta[X:=\sigma(F_2)]\\
& \Rightarrow \mbox{~~~~~(induction hypothesis)}\\
& \invRightarrow \sem{\E_1,\S}(\eta[X:=\sigma(F_1)])=\sem{\E_2,\S}(\eta[X:=\sigma(F_2)])\\
& \Leftrightarrow \mbox{~~~~~(definition)}\\
& \invRightarrow \sem{\E_1,\sigma X\conc\S}(\eta)=\sem{\E_2,\sigma X\conc\S}(\eta)\\
\end{align*}
\item Induction on $\S$. The base case is trivial, because both
the assumption and the conclusion reduce to $\eta_1=\eta_2$.
The induction step is proved as follows (for arbitrary $P\in U$):
\begin{align*}
 & \invRightarrow \mbox{~~~~~(assumption)}\\
 & \invRightarrow \eta_1\restrictedeq{\compl{\sigma X\conc \S}}\eta_2\\
 & \Rightarrow \mbox{~~~~~(extensionality)} \\
 & \invRightarrow \eta_1[X:=P]\restrictedeq{\compl{\S}}
   \eta_2[X:=P]\\
 & \Rightarrow \mbox{~~~~~(induction hypothesis)}\\
 & \invRightarrow \sem{\E,\S}(\eta_1[X:=P]) 
 = \sem{\E,\S}(\eta_2[X:=P]) \\
 & \Rightarrow \mbox{~~~~~(define $F_i(P) := 
        \E_X(\sem{\E,\S}(\eta_i[X:=P]))$, for $i=1,2$)}\\
 & \invRightarrow \sigma(F_1) = \sigma(F_2)\\
 & \Rightarrow \mbox{~~~~~(extensionality)}\\
 & \invRightarrow \eta_1[X:=\sigma(F_1)]\restrictedeq{\compl{\S}}
   \eta_2[X:=\sigma(F_2)]\\
 & \Rightarrow \mbox{~~~~~(induction hypothesis)}\\
 & \invRightarrow \sem{\E,\S}(\eta_1[X:=\sigma(F_1)]) 
 = \sem{\E,\S}(\eta_2[X:=\sigma(F_2)]) \\
 & \Rightarrow \mbox{~~~~~(definition)}\\
 & \invRightarrow \sem{\E,\sigma X\conc\S}(\eta_1) 
 = \sem{\E,\sigma X\conc\S}(\eta_2)
\qedhere
\end{align*}
\end{enumerate}
\end{proof}

Next, we can prove that the semantics as defined above indeed 
solves the equations for those variables occurring in the specification:
\begin{lem}\label{solution}
Let $\E\in\Eqs$ be monotonic, $\S\in\Spec$, $\eta\in\Val$ and $X\in\X$.\\
If $X\in\dom(\S)$, 
then $\E_X(\sem{\E,\S}(\eta)) = \sem{\E,\S}(\eta)(X)$.
\end{lem}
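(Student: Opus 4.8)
The plan is to proceed by induction on the specification $\S$, keeping $\eta$ and $X$ universally quantified (so the induction hypothesis is available for every valuation). The base case $\S = \varepsilon$ is vacuous: since $\dom(\varepsilon) = \emptyset$, the hypothesis $X \in \dom(\S)$ can never hold.

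For the induction step I would write $\S = \sigma Y \conc \S'$ and set $F(P) := \E_Y(\sem{\E,\S'}(\eta[Y:=P]))$, so that by definition $\sem{\E,\sigma Y\conc\S'}(\eta) = \sem{\E,\S'}(\eta[Y:=\sigma(F)])$. A first observation is that $F$ is monotonic: valuation update is monotonic, $\sem{\E,\S'}$ is monotonic by Lemma~\ref{monotonic}.1 (using that $\E$ is monotonic), and $\E_Y$ is monotonic as a component of the monotonic function $\E$. This fact is what later licenses the use of the computation rule.

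The hypothesis $X \in \dom(\sigma Y\conc\S')$ unfolds to $X = Y \lor X \in \dom(\S')$, and I would split on whether $X \in \dom(\S')$. If $X \in \dom(\S')$, then applying the induction hypothesis to the valuation $\eta[Y:=\sigma(F)]$ yields $\E_X(\sem{\E,\S'}(\eta[Y:=\sigma(F)])) = \sem{\E,\S'}(\eta[Y:=\sigma(F)])(X)$, which is precisely the desired equation once both sides are folded back via the definition of $\sem{\E,\sigma Y\conc\S'}(\eta)$. Otherwise $X \notin \dom(\S')$, and then $X \in \dom(\sigma Y\conc\S')$ forces $X = Y$. Here Lemma~\ref{sanity}.1 computes the right-hand side as $\sem{\E,\S'}(\eta[Y:=\sigma(F)])(Y) = \sigma(F)$, while the left-hand side is $\E_Y(\sem{\E,\S'}(\eta[Y:=\sigma(F)])) = F(\sigma(F))$, which equals $\sigma(F)$ by the computation rule (Lemma~\ref{fixcalculus}.1) applied to the monotonic $F$, closing the case.

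The main obstacle is organizing the case analysis correctly, because the head variable $Y$ may itself reappear in $\S'$, so the two disjuncts of $X \in \dom(\sigma Y\conc\S')$ are not mutually exclusive. Prioritizing the case $X \in \dom(\S')$ (which already subsumes $X = Y$ whenever $Y$ also occurs in $\S'$) sidesteps this overlap and confines the genuine fixpoint reasoning — the appeal to the computation rule, and hence to the monotonicity of $F$ — to the single remaining case $X = Y \notin \dom(\S')$.
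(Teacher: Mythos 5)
Your proof is correct and matches the paper's own argument essentially step for step: induction on $\S$ with $\eta$ universally quantified, the same case split prioritizing $X\in\dom(\S')$ (the paper likewise handles $X\in\dom(\S)$ first and only then concludes $X=Y$ in the remaining case), the same appeal to Lemma~\ref{sanity}.1 to reduce the right-hand side to $\sigma(F)$, and the same use of the computation rule with monotonicity of $F$ justified via Lemma~\ref{monotonic}.1. Your explicit remark about the non-exclusive disjuncts when $Y$ recurs in $\S'$ is a sound reading of why the paper phrases its second case as ``otherwise, if $X\notin\dom(\S)$''.
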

\begin{proof}
Let $\E$ be monotonic.
By induction on $\S$, we will prove that for all
$X\in\dom(\S)$ and for all $\eta$, it holds that $\E_X(\sem{\E,\S}(\eta)) = \sem{\E,\S}(\eta)(X)$.
The base case trivially holds, because $X\notin\dom(\varepsilon)$.
For the induction step, assume $X\in\dom(\sigma Y\conc\S)$. 
We distinguish cases. 

If $X\in\dom(\S)$:
\allowdisplaybreaks
\begin{align*}
  & \inveq \sem{\E,\sigma Y\conc\S}(\eta)(X)\\
  & = \mbox{~~~~~(define $F(P) := \E_Y(\sem{\E,\S}(\eta[Y:=P]))$~)}\\
  & \inveq \sem{\E,\S}(\eta[Y:=\sigma(F)])(X)\\
  & = \mbox{~~~~~(induction hypothesis; $X\in\dom(\S)$)}\\
  & \inveq\E_X(\sem{\E,\S}(\eta[Y:=\sigma(F)]))\\
  & = \mbox{~~~~~(definition)}\\
  & \inveq \E_X(\sem{\E,\sigma Y\conc\S}(\eta))
\end{align*}

Otherwise, if $X\notin\dom(\S)$ then $X=Y$. We compute:
\begin{align*}
  & \inveq \sem{\E,\sigma X\conc\S}(\eta)(X)\\
  & = \mbox{~~~~~(define $F(P) := \E_X(\sem{\E,\S}(\eta[X:=P]))$~)}\\
  & \inveq \sem{\E,\S}(\eta[X:=\sigma(F)])(X)\\
  & = \mbox{~~~~~(Lemma~\ref{sanity}.1; $X\notin\dom(\S)$)}\\
  & \inveq \sigma(F)\\
  & =\begin{array}[t]{l}\mbox{~~~~~(computation rule, Lemma~\ref{fixcalculus}.1; 
               $F$ is monotonic because $\E_X$ }\\
   \quad\mbox{is monotonic by assumption, 
     and $\sem{\E,\S}$ is by Lemma~\ref{monotonic}.1) }\end{array}\\
  & \inveq F(\sigma(F)) \\
  & = \E_X(\sem{\E,\S}(\eta[X:=\sigma(F)]))\\
  & = \mbox{~~~~~(definition)}\\
  & \inveq \E_X(\sem{\E,\sigma X\conc\S}(\eta))
\qedhere
\end{align*}
\end{proof}

\noindent We have the following left-congruence result:
\begin{lem}\label{congruence}
For $\E_1,\E_2\in\Eqs$, $\S,\S_1,\S_2\in\Spec$, 
if $\E_1\restrictedeq{\S}\E_2$ and
$\sem{\E_1,\S_1}=\sem{\E_2,\S_2}$, then
$\sem{\E_1,\S\conc\S_1}=\sem{\E_2,\S\conc\S_2}$.
\end{lem}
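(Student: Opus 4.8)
The statement asserts a left-congruence property: prefixing two FES with a common specification $\S$ preserves semantic equality, provided the two equation systems agree on $\S$ (i.e. $\E_1 \restrictedeq{\S} \E_2$). My plan is to proceed by structural induction on the prefix specification $\S$, keeping the hypotheses $\E_1 \restrictedeq{\S} \E_2$ and $\sem{\E_1,\S_1} = \sem{\E_2,\S_2}$ available throughout. This matches the recursive definition of the semantics on the first argument of the specification list, and it mirrors the induction scheme used in Lemma~\ref{sanity}.

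For the base case $\S = \varepsilon$, the concatenation $\varepsilon \conc \S_i$ reduces to $\S_i$, so the goal $\sem{\E_1,\S_1} = \sem{\E_2,\S_2}$ is exactly the second hypothesis, and the agreement hypothesis $\E_1 \restrictedeq{\varepsilon} \E_2$ is vacuous. For the induction step, I would write $\S = \sigma X \conc \S'$ and unfold the semantics of $\sem{\E_i, \sigma X \conc \S' \conc \S_i}$ using the recursive clause. This produces, for each $i$, an update $\eta[X := \sigma(F_i)]$ where $F_i(P) := \E_{i,X}(\sem{\E_i, \S' \conc \S_i}(\eta[X := P]))$. The first hypothesis $\E_1 \restrictedeq{\sigma X \conc \S'} \E_2$ entails both $\E_{1,X} = \E_{2,X}$ (since $X \in \dom(\sigma X \conc \S')$) and the restricted agreement $\E_1 \restrictedeq{\S'} \E_2$, the latter feeding the induction hypothesis.

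The heart of the argument is showing $F_1 = F_2$ (as functions $U \to U$), which then yields $\sigma(F_1) = \sigma(F_2)$ and hence $\eta[X := \sigma(F_1)] = \eta[X := \sigma(F_2)]$. To obtain $F_1(P) = F_2(P)$ for arbitrary $P$, I apply the induction hypothesis to the inner calls: it gives $\sem{\E_1, \S' \conc \S_1}(\eta[X := P]) = \sem{\E_2, \S' \conc \S_2}(\eta[X := P])$, using $\E_1 \restrictedeq{\S'} \E_2$ and the unchanged second hypothesis $\sem{\E_1,\S_1} = \sem{\E_2,\S_2}$. Composing with $\E_{1,X} = \E_{2,X}$ closes the equality of $F_1$ and $F_2$. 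Finally, applying the induction hypothesis once more to $\sem{\E_i,\S' \conc \S_i}$ on the now-equal updated valuations, and folding back via the definition, delivers $\sem{\E_1, \sigma X \conc \S' \conc \S_1}(\eta) = \sem{\E_2, \sigma X \conc \S' \conc \S_2}(\eta)$.

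The main subtlety to get right is the bookkeeping of \emph{which} hypothesis is used where: the induction hypothesis must be invoked with $\S'$ as the prefix but the \emph{same} tail specifications $\S_1, \S_2$, so the second hypothesis $\sem{\E_1,\S_1} = \sem{\E_2,\S_2}$ is carried through unchanged at every level of the induction, while only the agreement hypothesis $\E_1 \restrictedeq{\cdot} \E_2$ shrinks from $\sigma X \conc \S'$ to $\S'$. I expect no genuine obstacle beyond this careful tracking; the monotonicity lemmas are not needed here, since the argument is purely about equality of the two semantics and rests only on $F_1 = F_2$ together with congruence of the fixpoint operator (equal functions have equal fixpoints).
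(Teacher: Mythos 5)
Your proof is correct and follows essentially the same route as the paper: structural induction on the prefix $\S$, defining $F_i(P) := \E_{i,X}(\sem{\E_i,\S'\conc\S_i}(\eta[X:=P]))$, using $\E_{1,X}=\E_{2,X}$ together with one application of the induction hypothesis to get $F_1=F_2$ (hence $\sigma(F_1)=\sigma(F_2)$), and a second application on the updated valuation to close the step. Your observation that monotonicity is not needed also matches the paper, which states the lemma without any monotonicity assumption.
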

\begin{proof}
Induction on $\S$. The base case is trivial.
For the induction step, assume $\sem{\E_1,\S_1}=\sem{\E_2,\S_2}$
and $\E_1\restrictedeq{\sigma X\conc \S}\E_2$.
Then also $\E_1\restrictedeq{\S}\E_2$. So, for any $P\in U$:
\begin{align*}
& \invRightarrow \mbox{~~~~~(induction hypothesis)}\\
& \invRightarrow \sem{\E_1,\S\conc\S_1}(\eta[X:=P])=
  \sem{\E_2,\S\conc\S_2}(\eta[X:=P])\\
& \Rightarrow \mbox{~~~~~(define $F_i(P) :=
  \E_{i,X}(\sem{\E_i,\S\conc\S_i}(\eta[X:=P]))$; note $\E_{1,X}=\E_{2,X}$)}\\
& \invRightarrow \eta[X:=\sigma(F_1)] = \eta[X:=\sigma(F_2)]\\
& \Rightarrow \mbox{~~~~~(induction hypothesis)}\\
& \invRightarrow \sem{\E_1,\S\conc\S_1}(\eta[X:=\sigma(F_1)]) = 
  \sem{\E_2,\S\conc\S_2}(\eta[X:=\sigma(F_2)])\\
& \Rightarrow \mbox{~~~~~(definition)}\\
& \invRightarrow \sem{\E_1,\sigma X\conc\S\conc\S_1}(\eta) = 
  \sem{\E_2,\sigma X\conc\S\conc\S_2}(\eta)
\qedhere
\end{align*}
\end{proof}

Remarkably, right-congruence doesn't hold in general.
Corollary~\ref{congright} will state a sufficient condition for
right-congruence.

\subsection{The Dependency Graph between Variables}
\label{depgraph}

Since we introduced a semantic notion of equations, avoiding syntactic
expressions, we also need a semantic notion of dependence between variables.
Given $V_1,V_2\subseteq\X$, we define that $V_1$ is {\em independent} of $V_2$
with respect to $\E$, notation $\indep(\E,V_1,V_2)$, as follows: $$\forall \eta_1,\eta_2.\,
(\eta_1\restrictedeq{\compl{V_2}} \eta_2) \Rightarrow
(\E(\eta_1)\restrictedeq{V_1} \E(\eta_2))\enspace.$$  That is: the solution
of variables $X\in V_1$ is the same for all those $\eta$ that differ
at most on the values assigned to $Y \in V_2$.  This is slightly more liberal
than the usual syntactic requirement that $Y$ {\em doesn't occur syntactically} 
in $\E_X$.
We overload the definition of $\indep$ for individual variables and
specifications (and any combination of those), \emph{e.g.}, 
$\indep(\E,\S_1,\S_2) = \indep(\E,\dom(\S_1),\dom(\S_2))$ and 
$\indep(\E, X, Y) = \indep(\E, \{X\},\{Y\})$.

This notion gives rise to the {\em variable dependency graph} of a
FES $(\E,\S)$. The variables in $\dom(\S)$ form the nodes of this graph; the edges
$X\depstep{\E,\S}Y$ are defined as $\neg\indep(\E,X,Y)$.
We define $X$ {\em depends (indirectly)} on $Y$ (written $X\dep{\E,\S}Y$) as the
reflexive, transitive closure of $\depstep{\E,\S}$. In other words,
there exists a path in the dependency graph from $X$ to $Y$. We also
use the notation $X\depnonempty{\E,\S}Y$ to denote the transitive
closure of $\depstep{\E,\S}$, i.e.~there is a {\em non-empty} path
from $X$ to $Y$ in the dependency graph.
We assume that $\indep(\E,X,Y)$ is decidable for all $\E$, $X$ and $Y$.
Since $\dom(\S)$ is finite, this also makes $\dep{\E, \S}$ and $\depnonempty{\E,\S}$ decidable.

\section{Substituting in FES Equations}\label{subst}
In this section, we define two substitution operations on the
equations of a FES, and study under which conditions these operations
preserve solutions.  The first operation allows substituting variables
by their definition.  We show that this substitution preserves
solutions in some new cases (cf. Section~\ref{related}). The second
operation replaces a variable in an equation by its solution.

\subsection{Unfolding Definitions}

We define $\unfold(\E,X,Y)$, where each occurrence of $Y$ in the
definition of $X$ is replaced by the definition of $Y$, as follows:
\[ \unfold(\E,X,Y)(\eta) := \E(\eta) [X := \E_X(\eta[Y:=\E_Y(\eta)])] \]

We will use the following observation several times. It basically states
that unfolding $Y$ in $X$ doesn't affect other equations than that for $X$.

\begin{lem}\label{unfoldaux}
If $X\notin\dom(\S)$, then 
\begin{enumerate}
\item $\unfold(\E,X,Y) \restrictedeq{\S} \E$
\item $\sem{\unfold(\E,X,Y),\S} = \sem{\E,\S}$
\end{enumerate}
\end{lem}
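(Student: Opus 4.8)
The plan is to prove (1) directly from the definitions of $\unfold$ and valuation update, and then to obtain (2) as an immediate consequence of (1) together with Lemma~\ref{sanity}.2. No induction on $\S$ is needed here, in contrast to the earlier lemmas, because the only place where $\unfold(\E,X,Y)$ differs from $\E$ is the single coordinate $X$, which by hypothesis lies outside $\dom(\S)$.

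For (1), I would unfold what $\unfold(\E,X,Y) \restrictedeq{\S} \E$ means on equation systems: it suffices to show that for every $\eta \in \Val$ and every $Z \in \dom(\S)$ we have $\unfold(\E,X,Y)(\eta)(Z) = \E(\eta)(Z)$. Fix such $\eta$ and $Z$. From $X\notin\dom(\S)$ and $Z\in\dom(\S)$ we conclude $Z\neq X$. Now I expand the definition $\unfold(\E,X,Y)(\eta) = \E(\eta)[X := \E_X(\eta[Y:=\E_Y(\eta)])]$; since $Z\neq X$, the valuation update leaves coordinate $Z$ untouched, so $\unfold(\E,X,Y)(\eta)(Z) = \E(\eta)(Z)$, as required.

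For (2), I observe that (1) is precisely the premise of Lemma~\ref{sanity}.2 instantiated with $\E_1 := \unfold(\E,X,Y)$ and $\E_2 := \E$, which immediately yields $\sem{\unfold(\E,X,Y),\S} = \sem{\E,\S}$.

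This lemma is essentially bookkeeping, so there is no substantial obstacle. The only points requiring care are: deriving the disequality $Z\neq X$ in the right direction from $X\notin\dom(\S)\ni Z$ (which uses decidability of $\dom(\S)$ membership), and keeping straight that $\restrictedeq{\S}$ on equation systems is interpreted modulo $\dom(\S)$, so that the perturbation of $\unfold$ at the off-domain variable $X$ is genuinely invisible to the relation. Once these are in place, (2) follows with no further work by appeal to the already-established Lemma~\ref{sanity}.2.
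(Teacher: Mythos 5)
Your proposal is correct and matches the paper's proof exactly: part (1) follows because $\unfold(\E,X,Y)$ only modifies the coordinate $X$, which lies outside $\dom(\S)$, and part (2) is then immediate from Lemma~\ref{sanity}.2 with $\E_1 := \unfold(\E,X,Y)$ and $\E_2 := \E$. Your version merely spells out the pointwise verification of $Z \neq X$ that the paper leaves implicit, which is fine.
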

\begin{proof}
(1) holds by definition of $\unfold$, as it only modifies the value
of $\E$ on variable $X$. Then (2) holds by Lemma~\ref{sanity}.2.
\end{proof}

It is known (cf.~Example~\ref{exunfold}) that in general one should
not unfold $Y$ in $X$, if $Y$ precedes $X$ in the specification. As a
new result we show that we can substitute $X$ in its own definition:

\begin{lem}\label{unfoldsame}
Let $\E\in\Eqs$ be monotonic. Let $\S=\sigma X\conc\S_1$,
and $X\notin\dom(\S_1)$.
Then $\sem{\unfold(\E,X,X),\S}=\sem{\E,\S}$.
\end{lem}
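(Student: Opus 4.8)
The plan is to unfold the recursive definition of the FES semantics on $\S = \sigma X\conc\S_1$ and reduce the claim to an equality of two fixpoints over the lattice $U$, which will then follow from the unfolding rule (Lemma~\ref{fixcalculus}.7). Write $\E' := \unfold(\E,X,X)$ and abbreviate $G := \sem{\E,\S_1}$. Since $X\notin\dom(\S_1)$, Lemma~\ref{unfoldaux}.2 gives $\sem{\E',\S_1} = G$, so both semantics share the same ``tail'' $G$. By definition of $\sem{\cdot,\cdot}$ we then have $\sem{\E,\S}(\eta) = G(\eta[X:=\sigma(F)])$ and $\sem{\E',\S}(\eta) = G(\eta[X:=\sigma(F')])$, where $F(P) := \E_X(G(\eta[X:=P]))$ and $F'(P) := \E'_X(G(\eta[X:=P]))$. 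As the outer $G$ is applied to the same-shaped argument in both cases, it suffices to prove $\sigma(F) = \sigma(F')$.

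The key step is to exhibit $F$ and $F'$ as the two sides of the unfolding rule for a single monotonic binary function. Define $H\colon U\times U\to U$ by $H(P,Q) := \E_X(G(\eta[X:=P])[X:=Q])$; it is monotonic because $G$ is monotonic (Lemma~\ref{monotonic}.1), valuation update is monotonic, and $\E$ is monotonic. First I would record the technical fact $G(\eta[X:=P])(X) = P$, which is exactly Lemma~\ref{sanity}.1 using $X\notin\dom(\S_1)$. Consequently $G(\eta[X:=P])[X:=P] = G(\eta[X:=P])$, so $H(P,P) = \E_X(G(\eta[X:=P])) = F(P)$. Next I would compute $F'$ from the definition of $\unfold$: since $\E'_X(\zeta) = \E_X(\zeta[X:=\E_X(\zeta)])$, taking $\zeta := G(\eta[X:=P])$ and using $\E_X(\zeta) = F(P) = H(P,P)$ yields $F'(P) = \E_X(G(\eta[X:=P])[X:=H(P,P)]) = H(P,H(P,P))$.

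With both identities in hand, the unfolding rule (Lemma~\ref{fixcalculus}.7), which holds for the given $\sigma\in\{\mu,\nu\}$, applied to $H$ gives $\sigma(F) = \sigma x.\,H(x,x) = \sigma x.\,H(x,H(x,x)) = \sigma(F')$. Substituting back, $\sem{\E,\S}(\eta) = G(\eta[X:=\sigma(F)]) = G(\eta[X:=\sigma(F')]) = \sem{\E',\S}(\eta)$, as required. The main obstacle is conceptual rather than computational: one must recognise the correct binary $H$ that separates the ``outer'' occurrence of $X$ introduced by $\unfold$ (the second argument $Q$) from the ``inner'' recursive occurrence carried by $G(\eta[X:=P])$ (the first argument $P$). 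Once $F(P) = H(P,P)$ and $F'(P) = H(P,H(P,P))$ are established, the unfolding rule does the rest; the hypothesis $X\notin\dom(\S_1)$ enters only through Lemma~\ref{unfoldaux}.2 and the identity $G(\eta[X:=P])(X)=P$ from Lemma~\ref{sanity}.1.
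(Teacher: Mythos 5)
Your proof is correct and follows essentially the same route as the paper's: the same auxiliary function $H(P,Q) = \E_X\bigl(\sem{\E,\S_1}(\eta[X:=P])[X:=Q]\bigr)$, the same use of Lemma~\ref{sanity}.1 to get $\sem{\E,\S_1}(\eta[X:=P])[X:=P]=\sem{\E,\S_1}(\eta[X:=P])$, the reduction of the two fixpoints to $\sigma x.\,H(x,x)$ versus $\sigma x.\,H(x,H(x,x))$ settled by the unfolding rule (Lemma~\ref{fixcalculus}.7), and Lemma~\ref{unfoldaux}.2 for the untouched tail. If anything, you are slightly more careful than the paper in explicitly verifying the monotonicity of $H$ needed for Lemma~\ref{fixcalculus}.7.
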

\begin{proof}
For arbitrary $\eta$, define
\begin{align*}
 F(P)   &:= \E_X\Big(\sem{\E,\S_1}(\eta[X:=P])\Big)\\
 G(P)   &:= \unfold(\E,X,X)_X\Big(\sem{\E,\S_1}(\eta[X:=P])\Big) \\
 H(P,Q) &:= \E_X\Big(\sem{\E,\S_1}(\eta[X:=P]) [X:=Q]\Big)
\end{align*}
By Lemma~\ref{sanity}.1 and $X\notin\dom(\S_1)$,
we obtain: $\sem{\E,\S_1}(\eta[X:=P])(X) = \eta[X:=P](X) = P$, so 
\begin{equation}
	\sem{\E,\S_1}(\eta[X:=P])[X:=P] ~=~ \sem{\E,\S_1}(\eta[X:=P])\tag{*}
\end{equation}

Next, we prove that $\sigma(F)=\sigma(G)$:
\allowdisplaybreaks
\begin{align*}
    & \inveq \sigma P.\,G(P)\\
    & = \mbox{~~~~~(by definition of $\unfold$)}\\
    & \inveq \sigma P.\, \E_X\Big(\sem{\E,\S_1}(\eta[X:=P]) 
           \Big[X := \E_X\big(\sem{\E,\S_1}(\eta[X:=P])\big)\Big]\Big)\\
    & = \mbox{~~~~~(by * above)}\\
    & \inveq \sigma P.\, \E_X\Big(\sem{\E,\S_1}(\eta[X:=P]) 
             \Big[X := \E_X\big(\sem{\E,\S_1}(\eta[X:=P])[X:=P]\big)\Big]\Big)\\
    & = \sigma P.\,H(P,H(P,P))\\
    & = \mbox{~~~~~(unfold rule, Lemma~\ref{fixcalculus}.7)}\\
    & \inveq \sigma P.\,H(P,P)\\
    & = \mbox{~~~~~(by * above)}\\
    & \inveq \sigma P.\,F(P)
\end{align*}

We can now finish the proof:
\begin{align*}
  & \inveq \sem{\unfold(\E,X,X),\sigma X\conc\S_1}(\eta)\\
  & = \mbox{~~~~~(by definition and Lemma~\ref{unfoldaux}.2)}\\
  & \inveq \sem{\E,\S_1}(\eta[X:=\sigma(G)])\\
  & = \mbox{~~~~~(by the computation before)}\\
  & \inveq \sem{\E,\S_1}(\eta[X:=\sigma(F)])\\
  & = \mbox{~~~~~(by definition)}\\
  & \inveq \sem{\E,\sigma X\conc\S_1}(\eta)
\qedhere
\end{align*}
\end{proof}

The full theorem allows to unfold $Y$ in the equations for those $X$ that precede that of $Y$, and
in the equation of $Y$ itself. So in particular, the case $X=Y$ is allowed.
\begin{thm}\label{unfold}
Let $\E\in\Eqs$ be monotonic. 
Let $\S=\S_1\conc\sigma Y\conc\S_2$ and $X\notin\dom(\S_2)$.
Then $\sem{\unfold(\E,X,Y),\S}=\sem{\E,\S}$.
\end{thm}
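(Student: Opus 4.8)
The plan is to prove the statement by induction on $\S_1$, keeping $\E$, $X$, $Y$, $\sigma$ and $\S_2$ fixed and quantifying over $\eta$, since we compare the two semantic functions pointwise. Write $\E' := \unfold(\E,X,Y)$, and recall from the definition of $\unfold$ that $\E'$ agrees with $\E$ on every variable except $X$, where $\E'_X(\zeta) = \E_X(\zeta[Y:=\E_Y(\zeta)])$. The condition $X\notin\dom(\S_2)$ will be maintained throughout the induction, since $\S_2$ never changes.

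For the base case $\S_1 = \varepsilon$ we have $\S = \sigma Y \conc \S_2$ with $X \notin \dom(\S_2)$, and I would split on whether $X = Y$. If $X = Y$, the claim is exactly Lemma~\ref{unfoldsame}, reading $\S_2$ as its tail. If $X \neq Y$, then $X \notin \dom(\sigma Y\conc\S_2) = \dom(\S)$, so $\E' \restrictedeq{\S} \E$ and Lemma~\ref{unfoldaux}.2 finishes it.

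For the induction step $\S_1 = \tau Z \conc \S_1'$, set $\S'' := \S_1'\conc\sigma Y\conc\S_2$, which has the same shape with a strictly shorter prefix and still satisfies $X\notin\dom(\S_2)$; the induction hypothesis therefore gives $\sem{\E',\S''}=\sem{\E,\S''}$. Unfolding the semantics of the leading entry $\tau Z$ on both sides, it remains to show that the two fixpoints assigned to $Z$ coincide, i.e.\ that $F = F'$, where $F(P) := \E_Z(\sem{\E,\S''}(\eta[Z:=P]))$ and, after rewriting the inner semantics with the induction hypothesis, $F'(P) = \E'_Z(\sem{\E,\S''}(\eta[Z:=P]))$. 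Writing $\zeta_P := \sem{\E,\S''}(\eta[Z:=P])$, this reduces to the pointwise identity $\E'_Z(\zeta_P) = \E_Z(\zeta_P)$.

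The only delicate point is the case $Z = X$ in this last step; for $Z \neq X$ the two equations are literally equal by the definition of $\unfold$. I expect $Z = X$ to look as though it might demand a fixpoint identity, but in fact it does not. Since $Y \in \dom(\S'')$ and $\E$ is monotonic, Lemma~\ref{solution} yields $\zeta_P(Y) = \E_Y(\zeta_P)$, hence $\zeta_P[Y:=\E_Y(\zeta_P)] = \zeta_P$, and therefore $\E'_X(\zeta_P) = \E_X(\zeta_P[Y:=\E_Y(\zeta_P)]) = \E_X(\zeta_P)$. Thus $F = F'$, the fixpoints agree, and combined with the induction hypothesis the two full semantics coincide. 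The entire non-routine content of the argument is therefore localised in the base case $X = Y$, which is precisely the self-unfolding Lemma~\ref{unfoldsame} (itself resting on the unfolding rule, Lemma~\ref{fixcalculus}.7); everywhere else, the point where the equation for $X$ actually changes is shielded because $Y$ has already reached its solution, by Lemma~\ref{solution}.
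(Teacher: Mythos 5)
Your proof is correct and follows essentially the same route as the paper's: induction on $\S_1$, the base case split into $X=Y$ (Lemma~\ref{unfoldsame}) and $X\neq Y$ (Lemma~\ref{unfoldaux}), and in the inductive step the key observation that for $Z=X$ the update $[Y:=\E_Y(\zeta_P)]$ is a no-op because $Y\in\dom(\S'')$ and Lemma~\ref{solution} give $\zeta_P(Y)=\E_Y(\zeta_P)$. The only cosmetic difference is that for the case $Z\neq X$ the paper invokes the congruence result (Lemma~\ref{congruence}) whereas you unfold the head of the specification uniformly and note $\E'_Z=\E_Z$; both are sound and equivalent in substance.
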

\begin{proof}
The proof is by induction on $\S_1$. The base case is $\S_1=\varepsilon$.
If $X=Y$, we have to prove 
$\sem{\unfold(\E,X,X),\sigma X\conc\S_2}=\sem{\E,\sigma X\conc\S_2}$, 
which is just Lemma~\ref{unfoldsame}. Otherwise, if $X\neq Y$,
then $X\notin\dom(\sigma Y\conc\S_2)$, so by Lemma~\ref{unfoldaux}
$\sem{\unfold(\E,X,Y),\sigma Y\conc\S_2} = \sem{\E,\sigma Y\conc\S_2}$.

Next, for $\S=\rho Z\conc\S_1\conc\sigma Y\conc\S_2$, define
$\S_3 := \S_1\conc\sigma Y\conc\S_2$, and
assume the induction hypothesis,
$\sem{\unfold(\E,X,Y),\S_3}=\sem{\E,\S_3}$.

If $Z\neq X$, then $\E\restrictedeq{\{Z\}} \unfold(\E,X,Y)$.
From the induction hypothesis, it follows by congruence 
(Lemma~\ref{congruence})
that $\sem{\unfold(\E,X,Y),\rho Z\conc\S_3}=\sem{\E,\rho Z\conc\S_3}$.

If $Z=X$, we compute for arbitrary $\eta\in\Val$:
\allowdisplaybreaks
\begin{align*}
    & \inveq \sem{\unfold(\E,X,Y),\sigma X\conc\S_3}(\eta)\\
    & = \mbox{~~~~~(by definition of the semantics)}\\
    & \inveq \sem{\unfold(\E,X,Y),\S_3}(\eta[X:=\sigma(F)])\mbox{, where}\\
    & \inveq ~~ F(P) := \unfold(\E,X,Y)_X(\sem{\unfold(\E,X,Y),\S_3}(\eta[X:=P])) \\
    & = \mbox{~~~~~(by induction hypothesis)}\\
    & \inveq \sem{\E,\S_3}(\eta[X:=\sigma(F)])\mbox{, where}\\
    & \inveq ~~ \begin{array}[t]{rl}
    F(P) := & \unfold(\E,X,Y)_X(\sem{\E,\S_3}(\eta[X:=P])) \\
    = & \mbox{~~~~~(definition $\unfold$)}\\
      & \E_X\Big(\sem{\E,\S_3}(\eta[X:=P]) 
              \Big[Y := \E_Y(\sem{\E,\S_3}(\eta[X:=P]))\Big]\Big)\\
    = & \mbox{~~~~~(Lemma~\ref{solution}; $Y\in\dom(\S_3)$)}\\
      & \E_X\Big(\sem{\E,\S_3}(\eta[X:=P]) 
            \Big[Y := \sem{\E,\S_3}(\eta[X:=P])(Y)\Big]\Big) \\
    = & \mbox{~~~~~(congruence and extensionality: $\zeta[Y:=\zeta(Y)]=\zeta$)}\\
      & \E_X(\sem{\E,\S_3}(\eta[X:=P]))
     \end{array}\\
    & = \sem{\E,\sigma X\conc\S_3}(\eta)
\qedhere
\end{align*}
\end{proof}

\subsection{Substituting a Partial Solution}

The following theorem is motivated in~\cite{Mader96} as follows. Assume
we know by some means the solution $a$ for a variable $X$ in $(\E,\S)$.
Then we can replace the definition of $X$ by simply putting $X=a$. We
simplify the proof in~\cite{Mader96}, which is based on an infinite
series of FESs. Instead, we just use induction on $\S$ and some properties
of complete lattices.

\begin{thm}\label{partial}
Let $\E\in\Eqs$ be monotonic and let $a := \sem{\E,\S}(\eta)(X)$.
Then \[\sem{\E,\S}(\eta) = \sem{\E[X\mapsto a],\S}(\eta)\]
\end{thm}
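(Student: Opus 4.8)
The plan is to prove the statement by induction on the specification $\S$, writing $\E' := \E[X\mapsto a]$ for the modified system (in which the defining equation of $X$ becomes the constant $a$) and abbreviating $a_\zeta := \sem{\E,\S_1}(\zeta)(X)$ for the partial solution of $X$ under a valuation $\zeta$. The base case $\S=\varepsilon$ is immediate, since both sides evaluate to $\eta$ irrespective of the equations. For the induction step let $\S = \sigma Y \conc \S_1$ and put $F(P) := \E_Y(\sem{\E,\S_1}(\eta[Y:=P]))$ with $b := \sigma(F)$, so that by definition $\sem{\E,\S}(\eta) = \sem{\E,\S_1}(\eta[Y:=b])$ and hence $a = \sem{\E,\S_1}(\eta[Y:=b])(X)$. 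Dually, put $F'(P) := \E'_Y(\sem{\E',\S_1}(\eta[Y:=P]))$, so that $\sem{\E',\S}(\eta) = \sem{\E',\S_1}(\eta[Y:=\sigma(F')])$; all these functions are monotonic by Lemmas~\ref{fixcalculus}.5 and~\ref{monotonic}.1.

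The first observation, which holds regardless of whether $Y=X$, is that applying the induction hypothesis at $\eta[Y:=b]$ yields $\sem{\E,\S_1}(\eta[Y:=b]) = \sem{\E',\S_1}(\eta[Y:=b])$: indeed $a_{\eta[Y:=b]} = a$ by the identity above, so the constant fixed in $\E'$ is exactly the one the induction hypothesis prescribes for this valuation. Given this, it suffices to prove $\sigma(F') = b$, because then $\sem{\E',\S}(\eta) = \sem{\E',\S_1}(\eta[Y:=b]) = \sem{\E,\S_1}(\eta[Y:=b]) = \sem{\E,\S}(\eta)$.

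For the case $Y = X$ the equation of the head variable is itself replaced, so $F'$ is the constant function $a$ and $\sigma(F') = a$ by the constant rule (Lemma~\ref{fixcalculus}.2). It then remains to see that $a = b$: since $X \in \dom(\S)$, Lemma~\ref{solution} gives $\E_X(\sem{\E,\S}(\eta)) = \sem{\E,\S}(\eta)(X) = a$, and the left-hand side is precisely $F(b)$, which equals $b$ by the computation rule (Lemma~\ref{fixcalculus}.1). Hence $\sigma(F') = a = b$ and the step is complete.

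The case $Y \neq X$ is the main obstacle. Here $\E'_Y = \E_Y$, yet one cannot hope for $F = F'$: the partial solution $a_{\eta[Y:=P]}$ of $X$ genuinely varies with the input $P$ for $Y$ (when $Y \notin \dom(\S_1)$), so $\sem{\E,\S_1}$ and $\sem{\E',\S_1}$ differ as functions, and $\sigma(F')=b$ has to be obtained by antisymmetry rather than by a pointwise equality. From the first observation, $F'(b) = \E_Y(\sem{\E',\S_1}(\eta[Y:=b])) = \E_Y(\sem{\E,\S_1}(\eta[Y:=b])) = F(b) = b$, so $b$ is a fixpoint of $F'$; for $\sigma=\mu$ this gives $\mu(F') \leq b$. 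For the reverse inequality I would set $c := \mu(F') \leq b$ and exploit that $P \mapsto a_{\eta[Y:=P]}$ is monotonic (Lemma~\ref{monotonic}.1), whence $a_{\eta[Y:=c]} \leq a_{\eta[Y:=b]} = a$; therefore $\E[X\mapsto a_{\eta[Y:=c]}] \leq \E'$ pointwise, and monotonicity of the semantics (Lemma~\ref{monotonic}.2) together with the induction hypothesis at $\eta[Y:=c]$ gives $\sem{\E,\S_1}(\eta[Y:=c]) \leq \sem{\E',\S_1}(\eta[Y:=c])$. Applying the monotonic $\E_Y$ yields $F(c) \leq F'(c) = c$, so $b = \mu(F) \leq c$ by $\mu$-fixpoint induction, and antisymmetry gives $b = c = \mu(F')$. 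The case $\sigma = \nu$ is exactly dual (reversing all inequalities, using that $\E'$ is monotonic and $\nu$-fixpoint induction), which completes the induction.
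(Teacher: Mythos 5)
Your proof is correct, and while it shares the paper's overall skeleton (induction on $\S$, with the induction hypothesis quantified over all valuations, and a case split on whether the head variable equals $X$), the heart of the argument is genuinely different. In the main case $Y\neq X$, the paper introduces the two-argument function $H(P,Q) := \E_Y(\sem{\E[X\mapsto a(\eta[Y:=P])],\S_1}(\eta[Y:=Q]))$, observes via the induction hypothesis that $F(P)=H(P,P)$, and then closes the gap purely equationally with the solve rule (Lemma~\ref{fixcalculus}.8), $\sigma x.\,H(x,x) = \sigma x.\,H(\sigma x.\,H(x,x),x)$; indeed the paper advertises exactly this reduction in Section~\ref{related}. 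You instead prove the needed instance $\sigma(F')=\sigma(F)$ from scratch by a two-sided order-theoretic argument: $b$ is a fixpoint of $F'$ (via the induction hypothesis at $\eta[Y:=b]$, your ``first observation''), giving one inequality by fixpoint induction, and the reverse inequality comes from monotonicity of $P\mapsto a(\eta[Y:=P])$ together with Lemma~\ref{monotonic}.2 applied to $\E[X\mapsto a_{\eta[Y:=c]}]\leq\E'$ and the induction hypothesis at $\eta[Y:=c]$ — in effect you inline a proof of the relevant solve-rule instance. The paper's route buys a cleaner equational step that handles $\mu$ and $\nu$ uniformly (the solve rule is stated for both signs), while yours is more self-contained, avoids the auxiliary binary function, and shows the theorem needs only bare fixpoint induction plus Lemma~\ref{monotonic}.2 (which the paper's proof of this theorem never invokes), at the cost of a separate dual argument for $\sigma=\nu$ — where you correctly note that it is monotonicity of $\E'$, as the smaller system in Lemma~\ref{monotonic}.2, that is required. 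A further small difference: you merge the paper's two subcases for $Y=X$ ($X\in\dom(\S_1)$ versus $X\notin\dom(\S_1)$) into a single argument by applying Lemma~\ref{solution} to the whole specification $\sigma X\conc\S_1$ and the computation rule to conclude $a=b$, which is a mild streamlining of the paper's treatment via Lemmas~\ref{sanity}.1--3. The only blemish is cosmetic: monotonicity of $F$ and $F'$ follows from monotonicity of $\E$, $\E'$, valuation update, and Lemma~\ref{monotonic}.1, not really from Lemma~\ref{fixcalculus}.5 as you cite, but all the needed facts are available and the argument stands.
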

\begin{proof}
We prove the theorem by induction on $\S$. The base case
is trivial, for 
$\sem{\E,\varepsilon}(\eta) = \eta = \sem{\E[X\mapsto a],\varepsilon}(\eta)$.
For the induction step ($\sigma Y\conc\S$), we need to define the functions
$a,b \colon \Val \to U$ and $F,G \colon U \to U$ and $H \colon (U \times U) \to U$:
\[\begin{array}{rcl}
a(\eta') & := & \sem{\E,\S}(\eta')(X)\\
b(\eta)  & := & \sem{\E,\sigma Y\conc\S}(\eta)(X)\\
F(P)     & := & \E_Y(\sem{\E,\S}(\eta[Y:=P])) \\
G(P)     & := & \E_Y(\sem{\E[X\mapsto b(\eta)],\S})\\
H(P,Q)   & := & \E_Y(\sem{\E[X\mapsto a(\eta[Y:=P])],\S}(\eta[Y:=Q]))
\end{array}\]
Then from the induction hypothesis
$ \forall \eta'.~
   \sem{\E,\S}(\eta') = \sem{\E [X\mapsto a(\eta')],\S}(\eta')$,
we must prove: 
$ \forall \eta.~ 
  \sem{\E,\sigma Y\conc\S}(\eta)=
  \sem{\E [X\mapsto b(\eta)],\sigma Y\conc\S}(\eta)$.

We distinguish three cases:

\noindent If $X=Y$ and $X\notin\dom(\S)$, 
we compute:
\begin{align*}
  & \inveq \sem{\E[X\mapsto b(\eta)],\sigma X\conc\S}(\eta) \\
  & = \sem{\E[X\mapsto b(\eta)],\S}(\eta [X:=\sigma P.\,b(\eta)])\\
  & = \mbox{~~~~~(by the constant rule, Lemma~\ref{fixcalculus}.2)}\\
  & \inveq \sem{\E[X\mapsto b(\eta)],\S}(\eta [X := b(\eta)])\\
  & = \mbox{~~~~~(by Lemma~\ref{sanity}.2 and $X\notin\dom(\S)$)}\\
  & \inveq \sem{\E,\S}(\eta [X := b(\eta)])\\
  & = \mbox{~~~~~(unfold definition of $\sem{\ }$ in $b(\eta)$.)}\\
  & \inveq \sem{\E,\S}(\eta [X := \sem{\E,\S}(\eta[X:=\sigma(F)])(X)])\\\
  & = \mbox{~~~~~(by Lemma~\ref{sanity}.1 and $X\notin\dom(\S)$)}\\
  & \inveq \sem{\E,\S}(\eta [X := \sigma(F)])\\
  & = \sem{\E,\sigma X\conc\S}(\eta)
\end{align*}

\noindent If $X=Y$ and $X\in \dom(\S)$, then note that
using Lemma~\ref{sanity}.3, for any $\E'$ and appropriate $F'$,
we have:
\[\sem{\E',\sigma X\conc\S}(\eta)=
  \sem{\E',\S}(\eta[X:=\sigma F']) = \sem{\E',\S}(\eta)\]
So in particular, $b(\eta)=a(\eta)$, and we can apply the induction
hypothesis:
Hence
\allowdisplaybreaks
\begin{align*}
    & \sem{\E[X\mapsto b(\eta)],\sigma X\conc\S}(\eta) \\
=\; & \mbox{~~~~~(by the equality above)}\\
    & \sem{\E[X\mapsto b(\eta)],\S}(\eta)\\
=\; & \sem{\E[X\mapsto a(\eta)],\S}(\eta)\\
=\; & ~~~~~\mbox{(induction hypothesis)} \\
    & \sem{\E,\S}(\eta)\\
=\; & \mbox{~~~~~(by the equality above)}\\
    & \sem{\E,\sigma X\conc\S}(\eta)
\end{align*}

\noindent Finally, if $X\neq Y$, then we can compute:
\begin{align*}
    & \sem{\E,\sigma Y\conc\S}(\eta) \\
=\; & \sem{\E,\S}(\eta [Y:=\sigma(F)])\\
=\; & ~~~~~\mbox{(induction hypothesis)} \\
    & \sem{\E [X\mapsto a(\eta[Y:=\sigma(F)])],\S}(\eta [Y:=\sigma(F)])\\
=\; & \mbox{~~~~~(unfold definition of $\sem{\ }$ in $b(\eta)$.)}\\
    & \sem{\E [X\mapsto b(\eta)],\S}(\eta [Y:=\sigma(F)])\\
=\; & ~~~~~\mbox{(will be proved below)}\\
    & \sem{\E [X\mapsto b(\eta)],\S}(\eta [Y:=\sigma(G)])\\
=\; & \sem{\E [X\mapsto b(\eta)],\sigma Y\conc\S}(\eta)
\end{align*}
We must still prove that $\sigma(F)=\sigma(G)$.
\begin{align*}
    &\sigma P.\,F(P) \\
=\; & ~~~~~\mbox{(induction hypothesis)} \\
    & \sigma P.\,\E_Y(\sem{\E[X\mapsto a(\eta[Y:=P])],\S}(\eta[Y:=P]))\\
=\; & \sigma P.\,H(P,P)\\
=\; & ~~~~~\mbox{Lemma~\ref{fixcalculus}.8 (solve rule) on $\lambda P,Q.\,H(Q,P)$} \\
    & \sigma P.\,H(\sigma P.\,H(P,P),P)\\
=\; & ~~~~~\mbox{($F(P)=H(P,P)$ as above)}\\
    & \sigma P.\,H(\sigma P.\,F(P),P)\\
=\; & ~~~~~\mbox{(unfold definition of $\sem{~}$ in $b$ in $G$)}\\
    & \sigma P.\,G(P)
\qedhere
\end{align*}
\end{proof}

\section{Swapping Variables in FES specifications}\label{swap}
We now study swapping the order of variables in a specification.  In
general, this operation doesn't exactly preserve solutions. We first
show that adjacent variables with the same sign may be swapped without
changing the semantics (Section~\ref{samesign}). Subsequently, we will prove
that we can swap the order between blocks of equations, under certain
independence criteria (Section~\ref{independent}). The main theorem of
that section is new (cf. Section~\ref{related}). Finally, we show
that swapping a $\mu/\nu$ sequence by the corresponding $\nu/\mu$ in
general leads to a greater or equal solution (Section~\ref{diffsign}).

\subsection{Swapping Equations with the same Sign}
\label{samesign}
\begin{thm}\label{swapsame}
Assume $\E\in\Eqs$ is monotonic. Then
\[\sem{\E,\S_1\conc\sigma X\conc\sigma Y\conc\S_2}=
 \sem{\E,\S_1\conc\sigma Y\conc\sigma X\conc\S_2}\]
\end{thm}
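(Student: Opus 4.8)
The plan is to first peel off the common prefix $\S_1$ and then reduce the statement to the commutation of two nested fixpoints of the same sign. Since $\E \restrictedeq{\S_1} \E$ holds trivially, the left-congruence result (Lemma~\ref{congruence}) reduces the claim to the case $\S_1 = \varepsilon$, i.e. it suffices to prove $\sem{\E, \sigma X \conc \sigma Y \conc \S_2} = \sem{\E, \sigma Y \conc \sigma X \conc \S_2}$. If $X = Y$ the two specifications are syntactically identical, so there is nothing to prove; hence I may assume $X \neq Y$, which in particular makes the valuation updates for $X$ and $Y$ commute, $\eta[X:=P][Y:=Q] = \eta[Y:=Q][X:=P]$.

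Next I would abstract away the shared tail. Fixing an arbitrary $\eta$, define $T(P,Q) := \sem{\E,\S_2}(\eta[X:=P][Y:=Q])$, which is monotonic in $(P,Q)$ by Lemma~\ref{monotonic}.1 together with monotonicity of valuation update. Unfolding the two leading equations on each side with the semantic definition expresses the left-hand valuation as $T(P^*, Q^*)$, where $Q^* = \sigma Q.\,\E_Y(T(P^*,Q))$ and $P^* = \sigma P.\,\E_X(T(P,\sigma Q.\,\E_Y(T(P,Q))))$, and symmetrically the right-hand valuation as $T(P^{**}, Q^{**})$, where $P^{**} = \sigma P.\,\E_X(T(P,Q^{**}))$ and $Q^{**} = \sigma Q.\,\E_Y(T(\sigma P.\,\E_X(T(P,Q)),Q))$. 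Since $\eta$ is arbitrary, it now suffices to prove $P^* = P^{**}$ and $Q^* = Q^{**}$; the claim then follows because both sides equal $T$ of the same pair.

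The heart of the argument is that these two interleaved single-variable fixpoints both coincide with the simultaneous fixpoint of the pair. I would equip $U \times U$ with the componentwise order and greatest lower bound, making it a complete lattice, and consider the monotonic map $\mathcal H(P,Q) := (\E_X(T(P,Q)),\, \E_Y(T(P,Q)))$ together with its $\sigma$-fixpoint $(\hat P, \hat Q) := \sigma(\mathcal H)$. Using the computation rule (Lemma~\ref{fixcalculus}.1) one checks that both $(P^*, Q^*)$ and $(P^{**}, Q^{**})$ are fixpoints of $\mathcal H$, so for $\sigma = \mu$ we get $(\hat P, \hat Q) \leq (P^*, Q^*)$ and $(\hat P, \hat Q) \leq (P^{**}, Q^{**})$ by minimality. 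For the reverse inequalities I would feed the fixpoint equations $\hat P = \E_X(T(\hat P, \hat Q))$ and $\hat Q = \E_Y(T(\hat P, \hat Q))$ into fixpoint induction, layer by layer: $\hat Q$ is a pre-fixpoint of $\lambda Q.\,\E_Y(T(\hat P, Q))$, hence $\sigma Q.\,\E_Y(T(\hat P, Q)) \leq \hat Q$, which by monotonicity makes $\hat P$ a pre-fixpoint of $\lambda P.\,\E_X(T(P,\sigma Q.\,\E_Y(T(P,Q))))$, giving $P^* \leq \hat P$ and then $Q^* \leq \hat Q$; the same bookkeeping gives $P^{**} \leq \hat P$ and $Q^{**} \leq \hat Q$. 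Antisymmetry then yields $(P^*,Q^*) = (\hat P, \hat Q) = (P^{**}, Q^{**})$, and the case $\sigma = \nu$ follows by duality in $(U,\geq,lub)$.

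The main obstacle is precisely this last step: it is a Beki\v{c}-style decomposition showing that interleaved same-sign scalar fixpoints agree with the vectorised fixpoint, and it must be carried out carefully because $\mathcal H_1$ and $\mathcal H_2$ each depend on both coordinates. The reduction and abstraction steps are routine, but getting the nested fixpoint-induction inequalities in the right order, and verifying that no hidden assumption on whether $X$ or $Y$ lie in $\dom(\S_2)$ is needed, since the abstraction through $T$ keeps the argument uniform, is where the care lies.
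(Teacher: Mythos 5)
Your proposal is correct, but at the decisive step it takes a genuinely different route from the paper. The setup is the same: the paper also reduces to $\S_1=\varepsilon$ via Lemma~\ref{congruence} (applied at the end rather than the start), dismisses $X=Y$ as trivial, and abstracts the tail into exactly your $T$ (called $A$ there), arriving at the same two interleaved fixpoint pairs. The difference is how the identity $P^*=P^{**}$, $Q^*=Q^{**}$ is established. The paper stays entirely within scalar fixpoints over $U$: after massaging terms it applies its Beki\v{c} Rule (Lemma~\ref{fixcalculus}.9, with $H(p,q):=\E_X(A(p,q))$ and $K(p,q):=\E_Y(A(q,p))$) once to get $\sigma(F_2)=\sigma(G_3)$, and obtains the other equality by symmetry. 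You instead re-derive the needed commutation from scratch via the simultaneous fixpoint $(\hat P,\hat Q)=\sigma(\mathcal H)$ on the product lattice: both interleaved pairs are fixpoints of $\mathcal H$, hence above $\mu(\mathcal H)$, and your layered fixpoint-induction argument for the reverse inequalities is sound (note that fixpoint induction itself needs no monotonicity, being immediate from the definition of $\lfp$, though you correctly use Lemma~\ref{fixcalculus}.5 and monotonicity of $T$ where order is propagated). This is precisely the classical simultaneous-fixpoint formulation of Beki\v{c}'s lemma that the paper mentions in Section~\ref{related} as the form found in de~Bakker and Beki\v{c}, so in effect you prove a (slightly stronger) vectorised Beki\v{c} statement inline, whereas the paper factored the same content into Lemma~\ref{fixcalculus}.9 beforehand and applies it as a one-liner. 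One small cost of your route: you quietly need that $U\times U$ with componentwise order and componentwise $glb$ is again a complete lattice so that $\sigma(\mathcal H)$ exists and the fixpoint laws apply there; this is routine, but the paper's development only ever takes fixpoints over $U$ itself (its $H,K\in U\times U\to U$ still have scalar fixpoints), which is exactly what its scalar Beki\v{c} rule buys. What your approach buys in exchange is self-containedness and the explicit intermediate fact that both interleaved solutions coincide with the vectorised fixpoint, which some readers may find more conceptually transparent than the equational derivation of Lemma~\ref{fixcalculus}.9 from the rolling and diagonal rules.
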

\begin{proof}
We first compute for arbitrary $\eta\in\Val$:
\allowdisplaybreaks
\begin{align*}
     & \sem{\E,\sigma X\conc\sigma Y\conc\S_2}(\eta) \\
 =\; & \sem{\E,\sigma Y\conc\S_2}(\eta[X := \sigma(F_2)])\mbox{, where}\\
     &\qquad\begin{aligned}
        F_2(P) &= \E_X(\sem{\E,\sigma Y\conc\S_2}(\eta[X:= P]))\\
      \end{aligned}\\
 =\; & \sem{\E,\S_2}(\eta[X:=\sigma(F_2),\, Y:=\sigma(F_3)])\mbox{, where}\\
     &\qquad\begin{aligned}
        F_2(P) &= \E_X(\sem{\E,\sigma Y\conc\S_2}(\eta[X:= P]))\\
        F_3(Q) &= \E_Y(\sem{\E,\S_2}(\eta[X:=\sigma(F_2),\, Y:=Q]))\\
      \end{aligned}\\
 =\; & \mbox{~~~~~(unfold definition of $\sem{~}$ in $F_2$, introduce $F_1$)}\\
     & \sem{\E,\S_2}(\eta[X:=\sigma(F_2),\, Y:=\sigma(F_3)])\mbox{, where}\\
     &\qquad\begin{aligned}
         F_1(P)(Q) &= \E_Y(\sem{\E,\S_2}(\eta[X:= P,\, Y:=Q]))\\
         F_2(P)    &= \E_X(\sem{\E,\S_2}(\eta[X:= P,\, Y:=\sigma(F_1(P))]))\\
         F_3(Q)    &= \E_Y(\sem{\E,\S_2}(\eta[X:=\sigma(F_2),\, Y:=Q]))\\
      \end{aligned}\\
 =\; & A(\sigma(F_2),\sigma(F_3))\mbox{, where}\\
     &\qquad\begin{aligned}
         A(P,Q)    &= \sem{\E,\S_2}(\eta[X:= P,\, Y:=Q]) \\
         F_1(P)(Q) &= \E_Y(A(P,Q)) \\
         F_2(P)    &= \E_X(A(P,\sigma(F_1(P))))\\
         F_3(Q)    &= \E_Y(A(\sigma(F_2),Q))\\
      \end{aligned}
\end{align*}
Symmetrically, we get:
\begin{align*}
     & \sem{\E,\sigma Y\conc\sigma X\conc\S_2}(\eta) \\
 =\; & B(\sigma(G_2),\sigma(G_3))\mbox{, where}\\
     &\qquad\begin{aligned}
        B(Q,P)    &= \sem{\E,\S_2}(\eta[Y:= Q,\, X:=P]) \\
        G_1(Q)(P) &= \E_X(B(Q,P)) \\
        G_2(Q)    &= \E_Y(B(Q,\sigma(G_1(Q))))\\
        G_3(P)    &= \E_X(B(\sigma(G_2),P))\\
      \end{aligned}
\end{align*}
Note that the theorem is trivial when $X=Y$.
So we may assume $X\neq Y$. Hence $A(P,Q)=B(Q,P)$, and we have:
\begin{align*}
    & \sigma     (F_2)\\
=\; & \sigma P.\,\E_X(A(P,\sigma(F_1(P)         ))) \\
=\; & \sigma P.\,\E_X(A(P,\sigma Q.\,\E_Y(A(P,Q)))) \\
=\; & \begin{array}[t]{l}
	  \mbox{~~~~~(Beki\v{c} rule, Lemma~\ref{fixcalculus}.9, 
                  with $H(p,q) := \E_X(A(p,q))$ }\\
      \quad\mbox{~~~~~~and $K(p,q):=\E_Y(A(q,p))$, which are monotonic, because~}\\
      \quad\mbox{~~~~~~$\E$ is by assumption, and 
                  $\sem{\E,\S_2}$ by Lemma~\ref{monotonic}.1)}
      \end{array}\\
    & \sigma P.\,\E_X(A(P,\sigma Q.\,\E_Y(A(\sigma P.\,\E_X(A(P,Q)),Q))))\\
=\; & \mbox{~~~~~(while $A(P,Q)=B(Q,P)$~)}\\
    & \sigma P.\,\E_X(B(\sigma Q.\,\E_Y(B(Q,\sigma P.\,\E_X(B(Q,P))))),P)\\
=\; & \sigma P.\,\E_X(B(\sigma Q.\,\E_Y(B(Q,\sigma     (G_1(Q))))     ,P))\\
=\; & \sigma P.\,\E_X(B(\sigma     (G_2)                              ,P))\\
=\; & \sigma     (G_3)
\end{align*}

We can now finish the proof:
\begin{align*}
             & \mbox{~~~~~(computation above, and full symmetry)}\\
             & \sigma(F_2)=\sigma(G_3) \mbox{~~and~~} \sigma(F_3)=\sigma(G_2)\\
\Rightarrow\;& \mbox{~~~~~(because $A(P,Q)=B(Q,P)$~)}\\
             & A(\sigma(F_2),\sigma(F_3)) = B(\sigma(G_2),\sigma(G_3)) \\
\Rightarrow\;& \sem{\E,\sigma X\conc\sigma Y\conc\S_2}(\eta) =
               \sem{\E,\sigma Y\conc\sigma X\conc\S_2}(\eta)\\
\Rightarrow\;& \mbox{~~~~~(Lemma~\ref{congruence}})\\
             & \sem{\E,\S_1\conc\sigma X\conc\sigma Y\conc\S_2}(\eta) =
               \sem{\E,\S_1\conc\sigma Y\conc\sigma X\conc\S_2}(\eta)
\qedhere
\end{align*}
\end{proof}

\subsection{Migrating Independent Blocks of Equations}
\label{independent}

Our aim here is to investigate swapping blocks of equations that
are independent. We first need two technical lemmas. 
The first lemma enables to commute updates to valuations with computing
solutions:
\begin{lem}\label{indepenv}
If $X\notin\dom(\S)$ and $\indep(\E,\S,X)$, then
\[\sem{\E,\S}(\eta[X:=P]) = (\sem{\E,\S}(\eta))[X:=P]\]
\end{lem}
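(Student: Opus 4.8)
The plan is to argue by induction on the specification $\S$, following the recursive definition of $\sem{\cdot}$. The base case $\S=\varepsilon$ is immediate: both sides equal $\eta[X:=P]$, since $\sem{\E,\varepsilon}$ is the identity on valuations.

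For the inductive step I would write $\S=\sigma Y\conc\S'$. From $X\notin\dom(\sigma Y\conc\S')$ I get $X\neq Y$ together with $X\notin\dom(\S')$. From $\indep(\E,\sigma Y\conc\S',X)$ I extract two weaker independence facts by restricting its first argument to subsets of $\dom(\sigma Y\conc\S')=\{Y\}\cup\dom(\S')$: namely $\indep(\E,\S',X)$, which supplies the induction hypothesis on the tail, and $\indep(\E,Y,X)$, which says that $\E_Y$ ignores the value stored in $X$, that is $\E_Y(\zeta[X:=P])=\E_Y(\zeta)$ for every valuation $\zeta$ (instantiate the definition of $\indep$ with $\eta_1:=\zeta$ and $\eta_2:=\zeta[X:=P]$, which agree off $X$).

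Unfolding the semantics, the left-hand side is $\sem{\E,\S'}(\eta[X:=P][Y:=\sigma(F_1)])$ with $F_1(Q):=\E_Y(\sem{\E,\S'}(\eta[X:=P][Y:=Q]))$, while the right-hand side is built from $F(Q):=\E_Y(\sem{\E,\S'}(\eta[Y:=Q]))$. The heart of the argument is to show $F_1=F$ as functions, hence $\sigma(F_1)=\sigma(F)$: for each $Q$, I would commute the two independent updates (legal since $X\neq Y$), apply the induction hypothesis on $\S'$ to pull the update $[X:=P]$ outside the recursive call, and then invoke $\indep(\E,Y,X)$ to discard that trailing $[X:=P]$ before $\E_Y$ is applied. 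Once $\sigma(F_1)=\sigma(F)$ is in hand, I reassemble the left-hand side: replacing $\sigma(F_1)$ by $\sigma(F)$, commuting $X$ and $Y$ one last time, and applying the induction hypothesis a second time yields $(\sem{\E,\S'}(\eta[Y:=\sigma(F)]))[X:=P]$, which by definition equals $(\sem{\E,\sigma Y\conc\S'}(\eta))[X:=P]$, as required.

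I expect the only real obstacle to be the bookkeeping in establishing $\sigma(F_1)=\sigma(F)$: three ingredients --- commuting the independent updates, the induction hypothesis on the tail, and head-independence of $Y$ from $X$ --- must be interleaved in the correct order, keeping careful track of which update sits innermost. Notably, no fixpoint identity from Lemma~\ref{fixcalculus} is needed beyond the defining recursion of the semantics; the difficulty is entirely in the disciplined manipulation of valuation updates under the two independence assumptions.
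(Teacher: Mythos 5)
Your proposal is correct and follows essentially the same route as the paper's proof: induction on $\S$, unfolding the head $\sigma Y$, proving the two fixpoint-defining functions equal (your $F_1$ and $F$ are the paper's $F$ and $G$) by interleaving the commuted updates, the induction hypothesis, and the head-independence $\indep(\E,Y,X)$, and then applying the induction hypothesis once more to pull $[X:=P]$ outside. Your observation that no fixpoint identity from Lemma~\ref{fixcalculus} is needed also matches the paper, which likewise uses only the defining recursion of the semantics here.
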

\begin{proof}
Induction on $\S$. The base case is trivial:
\[ \sem{\E,\varepsilon}(\eta[X:=P])
 = \eta[X:=P]
 = \sem{\E,\varepsilon}(\eta) [X:=P]\]
Case $\S=\sigma Y\conc\S_1$. Assume $X\notin\dom(\S)$ and $\indep(\E,\S, X)$,
then $X\neq Y$, and also $X\notin\dom(\S_1)$ and $\indep(\E,\S_1, X)$, so the
induction hypothesis can be applied. Then
\allowdisplaybreaks
\begin{align*}
     & \sem{\E,\sigma Y\conc\S_1}(\eta[X:= P])\\
 =\; & \sem{\E,\S_1}(\eta[X:=P,\,Y:=\sigma(F)])\mbox{, where}\\
       &\qquad\begin{aligned}
           F(Q) :=\: &\; \E_Y(\sem{\E,\S_1}(\eta[X:= P,\, Y:=Q])) \\
                \:=\:&\;  ~~~~~\mbox{(induction hypothesis, and $X\neq Y$)} \\
                     &\; \E_Y(\sem{\E,\S_1}(\eta[Y:=Q])[X:=P])\\ 
                \:=\:&\;  ~~~~~\mbox{($Y\in\dom(\S)$ is independent of $X$)}\\
                     &\; \E_Y(\sem{\E,\S_1}(\eta[Y:=Q]))\\
                \:=: &\; G(Q)\\
       \end{aligned}\\
 =\; & \sem{\E,\S_1}(\eta[X:=P,\,Y:=\sigma(G)])\\
 =\; & ~~~~~\mbox{(induction hypothesis, and $X \neq Y$)} \\
     & \sem{\E,\S_1}(\eta[Y:=\sigma(G)])[X:=P]\\
 =\; & \sem{\E,\sigma Y\conc\S_1}(\eta)[X:=P]
\qedhere
\end{align*}
\end{proof}

The next lemma states that independent specifications can be solved
independently.
\begin{lem}\label{indepsolve}
Let $\E\in\Eqs$ and $\S_1,\S_2\in\Spec$. If
$\indep(\E,\S_1,\S_2)$, then for all $\eta\in\Val$,
$\sem{\E,\S_1\conc\S_2}(\eta) = \sem{\E,\S_2}(\sem{\E,\S_1}(\eta))$.
\end{lem}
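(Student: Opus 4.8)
The plan is to proceed by induction on $\S_1$, exactly as in the other semantics lemmas, keeping $\E$, $\S_2$ and the independence hypothesis fixed. For the base case $\S_1 = \varepsilon$, both sides reduce immediately: the left-hand side is $\sem{\E,\S_2}(\eta)$ by the definition of the semantics on $\varepsilon$, and the right-hand side is $\sem{\E,\S_2}(\sem{\E,\varepsilon}(\eta)) = \sem{\E,\S_2}(\eta)$, so they agree. The interesting work is in the induction step.

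For the induction step $\S_1 = \sigma X\conc\S_1'$, I would first note that the hypothesis $\indep(\E,\sigma X\conc\S_1',\S_2)$ entails both $\indep(\E,\S_1',\S_2)$ (so the induction hypothesis applies) and $\indep(\E,X,\S_2)$, i.e.\ the solutions for variables in $\S_2$ are insensitive to the value assigned to $X$. Unfolding the definition of the semantics on the left, $\sem{\E,\sigma X\conc\S_1'\conc\S_2}(\eta) = \sem{\E,\S_1'\conc\S_2}(\eta[X:=\sigma(F)])$ where $F(P) := \E_X(\sem{\E,\S_1'\conc\S_2}(\eta[X:=P]))$. The goal is to rewrite this into $\sem{\E,\S_2}(\sem{\E,\sigma X\conc\S_1'}(\eta))$. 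Applying the induction hypothesis to the inner occurrence $\sem{\E,\S_1'\conc\S_2}(\cdot)$ turns the left side into $\sem{\E,\S_2}(\sem{\E,\S_1'}(\eta[X:=\sigma(F)]))$, and the same rewriting inside $F$ lets me re-express the fixpoint binder so that it only mentions $\sem{\E,\S_1'}$ rather than $\sem{\E,\S_1'\conc\S_2}$.

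The main obstacle, and the place where independence is genuinely used, is showing that the fixpoint function $F$ may be replaced by the ``shorter'' function $F'(P) := \E_X(\sem{\E,\S_1'}(\eta[X:=P]))$ without changing its extremal fixpoint, so that $\sigma(F) = \sigma(F')$ and the inner $\sem{\E,\S_1'}$ blocks line up. Here I expect to invoke Lemma~\ref{indepenv}: since $X\notin\dom(\S_1')$ (which must be ensured — either it follows from a standing assumption that variables in a specification are distinct, or one first uses Lemma~\ref{sanity}.3 to absorb a repeated $X$) and $\indep(\E,\S_1',X)$ is available, one can commute the update $[X:=P]$ with $\sem{\E,\S_1'}$, and then the independence $\indep(\E,X,\S_2)$ removes the dependence of the $\S_2$-block on $X$. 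Concretely, I would show that along the whole computation the $\S_2$-part of the valuation never feeds back into the function defining $\sigma(F)$, so the fixpoint computed with the combined block equals the one computed with $\S_1'$ alone.

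Once $\sigma(F)=\sigma(F')$ is established, the finish is purely definitional: I substitute back to obtain
\[
  \sem{\E,\S_2}\bigl(\sem{\E,\S_1'}(\eta[X:=\sigma(F')])\bigr)
  = \sem{\E,\S_2}\bigl(\sem{\E,\sigma X\conc\S_1'}(\eta)\bigr),
\]
using the definition of the semantics on $\sigma X\conc\S_1'$ to fold the inner expression, which is exactly the desired right-hand side. The only subtlety worth flagging is the bookkeeping around $X\notin\dom(\S_1')$; if the formalization does not assume uniqueness of variables in specifications, I would prepend an appeal to Lemma~\ref{sanity}.3 to reduce to that case before running the argument above.
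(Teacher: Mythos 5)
Your induction scheme and base case match the paper's proof, and you correctly identify the crux: replacing $F(P)=\E_X(\sem{\E,\S_1'\conc\S_2}(\eta[X:=P]))$ by $F'(P)=\E_X(\sem{\E,\S_1'}(\eta[X:=P]))$. But the way you propose to establish this has a genuine gap: you invoke Lemma~\ref{indepenv}, which requires $X\notin\dom(\S_1')$ and $\indep(\E,\S_1',X)$, and the latter hypothesis is simply not available. The lemma assumes only $\indep(\E,\S_1,\S_2)$, i.e.\ variables of $\S_1$ are insensitive to the values of variables of $\S_2$; it says nothing about variables of $\S_1'$ being insensitive to $X$. Since $X\in\dom(\S_1)$, arbitrary mutual dependencies \emph{inside} $\S_1$ are permitted (the paper stresses this point in the discussion of Theorem~\ref{migration2}), so commuting the update $[X:=P]$ past $\sem{\E,\S_1'}$ is unjustified. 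The side condition $X\notin\dom(\S_1')$ is likewise not guaranteed, as the paper deliberately does not assume distinctness of variables in specifications; Lemma~\ref{indepenv} belongs to the proof of Theorem~\ref{migration}, where disjointness and the appropriate independence are actually assumed, and plays no role here.

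The repair is simpler and is exactly the paper's argument: prove $F=F'$ pointwise, rather than merely $\sigma(F)=\sigma(F')$. By the induction hypothesis, $\sem{\E,\S_1'\conc\S_2}(\eta[X:=P]) = \sem{\E,\S_2}(\zeta)$ where $\zeta := \sem{\E,\S_1'}(\eta[X:=P])$. By Lemma~\ref{sanity}.1, $\sem{\E,\S_2}(\zeta)$ agrees with $\zeta$ on every variable outside $\dom(\S_2)$, i.e.\ $\sem{\E,\S_2}(\zeta)\restrictedeq{\compl{\S_2}}\zeta$. Now the independence that \emph{is} available, namely $\indep(\E,X,\S_2)$ (which follows from $X\in\dom(\S_1)$ and the hypothesis), yields $\E_X(\sem{\E,\S_2}(\zeta)) = \E_X(\zeta)$, hence $F(P)=F'(P)$ for all $P$ --- with no commuting of updates, no distinctness assumption, and no independence within $\S_1$. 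With $F=F'$ in hand, the remainder of your computation (unfolding the semantics, applying the induction hypothesis to the outer term, and folding back to $\sem{\E,\S_2}(\sem{\E,\sigma X\conc\S_1'}(\eta))$) goes through as you wrote it.
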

\begin{proof}
Induction on $\S_1$. In case $\S_1=\varepsilon$ we obtain indeed:
\[\sem{\E,\varepsilon\conc\S_2}(\eta)
 =\sem{\E,\S_2}(\eta) 
 = \sem{\E,\S_2}(\sem{\E,\varepsilon}(\eta))\]
Next, consider $\S_1=\sigma X\conc\S$. Assume $\indep(\E,\S_1,\S_2)$,
then it follows that $\indep(\E,\S,\S_2)$,
so we can use the induction hypothesis. Define:
\[\begin{array}{rcl}
      F(P) & := & \E_X(\sem{\E,\S\conc\S_2}(\eta[X:= P]))\\
      G(P) & := & \E_X(\sem{\E,\S}(\eta[X:= P]))\\
\end{array}\]

In order to show that $F=G$, it suffices (because $\E_X$ is 
{\em independent} of $\S_2$) to show that for any $P\in U$
and $Y\notin\dom(\S_2)$:
\begin{align*}
    & \sem{\E,\S\conc\S_2}(\eta[X:= P])(Y)\\
 =\;& ~~~~~\mbox{(induction hypothesis)} \\
    & \sem{\E,\S_2}(\sem{\E,\S}(\eta[X:=P]))(Y)\\ 
 =\;& ~~~~~\mbox{(Lemma~\ref{sanity}.1)} \\
    & \sem{\E,\S}(\eta[X:=P])(Y)
\end{align*}
Next, we finish the proof with the following calculation:
\begin{align*}
    & \sem{\E,\sigma X\conc\S\conc\S_2}(\eta)  \\
=\; & \sem{\E,\S\conc\S_2}(\eta [X:=\sigma(F)])\\
=\; & ~~~~~\mbox{(induction hypothesis)} \\
    & \sem{\E,\S_2}(\sem{\E,\S}(\eta[X:=\sigma(F)]))\\ 
=\; & ~~~~~\mbox{($F=G$, see above)} \\
    & \sem{\E,\S_2}(\sem{\E,\S}(\eta[X:=\sigma(G)]))\\ 
=\; & \sem{\E,\S_2}(\sem{\E,\sigma X\conc\S}(\eta))
\qedhere
\end{align*}
\end{proof}

The next theorem shows that two disjoint blocks of equations can be
swapped, provided one of them doesn't depend on the other.  Note that
a dependence in one direction is allowed, and that it doesn't matter
in which direction by symmetry. Theorem~\ref{migration2} will
generalize this by adding left- and right-contexts under certain
conditions.

\begin{thm}\label{migration}
Let $\disjoint(\S_1,\S_2)$ and $\indep(\E,\S_1,\S_2)$.
Then $\sem{\E,\S_1\conc\S_2} = \sem{\E,\S_2\conc\S_1}$.
\end{thm}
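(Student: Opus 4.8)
The plan is to reduce both sides of the equation to the common form $\sem{\E,\S_2}(\sem{\E,\S_1}(\eta))$, in which $\S_1$ is solved first and its result is fed in as the input valuation for solving $\S_2$. For the left-hand side this is immediate: since $\indep(\E,\S_1,\S_2)$ holds, Lemma~\ref{indepsolve} gives $\sem{\E,\S_1\conc\S_2}(\eta) = \sem{\E,\S_2}(\sem{\E,\S_1}(\eta))$ for every $\eta$. The entire difficulty is therefore concentrated in a ``reverse'' identity, $\sem{\E,\S_2\conc\S_1}(\eta) = \sem{\E,\S_2}(\sem{\E,\S_1}(\eta))$; composing the two identities then yields the theorem. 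Note that this reverse identity is genuinely asymmetric: we do \emph{not} assume $\indep(\E,\S_2,\S_1)$, so Lemma~\ref{indepsolve} cannot be applied to $\S_2\conc\S_1$ directly, and indeed $\sem{\E,\S_2\conc\S_1}(\eta)$ need not equal $\sem{\E,\S_1}(\sem{\E,\S_2}(\eta))$.

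I would prove the reverse identity by induction on $\S_2$. The base case $\S_2=\varepsilon$ is trivial. For the step $\S_2=\sigma Y\conc\S_2'$, I first record the facts inherited from the hypotheses: since being independent of (resp.\ disjoint from) a set implies the same for every subset, from $\indep(\E,\S_1,\sigma Y\conc\S_2')$ and $\disjoint(\S_1,\sigma Y\conc\S_2')$ I extract $\indep(\E,\S_1,\S_2')$ and $\disjoint(\S_1,\S_2')$ (to feed the induction hypothesis), together with $\indep(\E,\S_1,Y)$ and $Y\notin\dom(\S_1)$. Writing $\zeta := \sem{\E,\S_1}(\eta)$, I unfold the semantics of $\sem{\E,\sigma Y\conc\S_2'\conc\S_1}(\eta)$ into $\sem{\E,\S_2'\conc\S_1}(\eta[Y:=\sigma(F)])$ with $F(P) := \E_Y(\sem{\E,\S_2'\conc\S_1}(\eta[Y:=P]))$, apply the induction hypothesis to rewrite the inner $\sem{\E,\S_2'\conc\S_1}$ as $\sem{\E,\S_2'}\circ\sem{\E,\S_1}$, and then invoke Lemma~\ref{indepenv} — this is where $Y\notin\dom(\S_1)$ and $\indep(\E,\S_1,Y)$ are needed — to commute the update $[Y:=\cdot]$ through $\sem{\E,\S_1}$, turning $\sem{\E,\S_1}(\eta[Y:=P])$ into $\zeta[Y:=P]$.

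This computation shows $F(P) = \E_Y(\sem{\E,\S_2'}(\zeta[Y:=P]))$, which is exactly the function defining the outermost $\sigma Y$-fixpoint in the target expression $\sem{\E,\sigma Y\conc\S_2'}(\zeta) = \sem{\E,\S_2'}(\zeta[Y:=\sigma(F)])$. A final application of the induction hypothesis together with Lemma~\ref{indepenv} reduces $\sem{\E,\S_2'\conc\S_1}(\eta[Y:=\sigma(F)])$ to $\sem{\E,\S_2'}(\zeta[Y:=\sigma(F)])$, matching the target and closing the step.

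I expect the main obstacle to be the bookkeeping in this inductive step: one must keep the two fixpoint-defining functions in scope and check that they are literally the same function, which only succeeds once the $Y$-update has been pushed through $\sem{\E,\S_1}$ by Lemma~\ref{indepenv}. The conceptual crux is recognizing the clean division of labour between the two hypotheses: disjointness is used solely to guarantee $Y\notin\dom(\S_1)$, so that the $Y$-update is invisible to the $\S_1$-solution, while independence guarantees that the value supplied by that update is irrelevant — and these are precisely the two side conditions of Lemma~\ref{indepenv}.
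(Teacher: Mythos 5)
Your proof is correct, and it takes a genuinely different decomposition from the paper's. The paper proves $\sem{\E,\S_1\conc\S_2} = \sem{\E,\S_2\conc\S_1}$ by a single induction on $\S_2$: in the step it converts $\sem{\E,\S_1\conc\sigma X\conc\S}$ into $\sem{\E,\sigma X\conc\S_1\conc\S}$ via the same interplay of Lemma~\ref{indepsolve} and Lemma~\ref{indepenv} that you use, and then needs Lemma~\ref{congruence} together with the induction hypothesis to commute the trailing $\S_1$ past $\S$. You instead isolate the composition identity $\sem{\E,\S_2\conc\S_1}(\eta) = \sem{\E,\S_2}(\sem{\E,\S_1}(\eta))$, prove it by its own induction on $\S_2$, and obtain the theorem by composing with Lemma~\ref{indepsolve}. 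Two points are worth making explicit. First, your intermediate identity is exactly the paper's Corollary~\ref{indepsolve2} (with the roles of $\S_1$ and $\S_2$ renamed), which the paper derives \emph{from} Theorem~\ref{migration}; your route inverts that dependency and shows the corollary admits a direct inductive proof, so there is no circularity in your citing only Lemmas~\ref{indepsolve} and~\ref{indepenv}. Second, because your induction hypothesis is a pointwise statement about compositions, applied at $\eta[Y:=P]$ and at $\eta[Y:=\sigma(F)]$, you never need the congruence lemma, whereas the paper's formulation of the IH as $\sem{\E,\S_1\conc\S} = \sem{\E,\S\conc\S_1}$ forces the extra appeal to Lemma~\ref{congruence}. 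Your bookkeeping is right where it matters: the division of labour you describe --- disjointness yielding $Y\notin\dom(\S_1)$ and independence yielding $\indep(\E,\S_1,Y)$, precisely the two side conditions of Lemma~\ref{indepenv} --- is the same extraction the paper performs, and, like the paper's argument, yours nowhere requires monotonicity of $\E$.
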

\begin{proof}
Induction on $\S_2$.
The base case is trivial.
For the induction step, let $\S_2 = \sigma X\conc\S$.
If we assume $\disjoint(\S_1,\S_2)$ and $\indep(\E,\S_1,\S_2)$, then we 
also obtain $\disjoint(\S_1,\S)$ and $\indep(\E,\S_1,\S)$, so we may 
apply the induction hypothesis $\sem{\E,\S_1\conc\S} = 
\sem{\E,\S\conc\S_1}$.
Furthermore, from the same assumptions, we also
get $X \notin \dom(\S_1)$ and $\indep(\E,\S_1,X)$.
Let $\eta$ be arbitrary.
\allowdisplaybreaks
\begin{align*}
    & \sem{\E,\S_1\conc\sigma X\conc\S}(\eta)  \\
=\; & \mbox{~~~~~(Lemma~\ref{indepsolve})}\\
    & \sem{\E,\sigma X\conc\S}(\sem{\E,\S_1}(\eta))\\
=\; & \sem{\E,\S}(\sem{\E,\S_1}(\eta)[X:=\sigma(F)])\mbox{, where}\\
        &\qquad\begin{aligned}
          F(P) :=\: &\; \E_X(\sem{\E,\S}(\sem{\E,\S_1}(\eta)[X:= P]) \\
                =\: &\; \mbox{~~~~~(Lemma~\ref{indepenv})}\\
                    &\; \E_X(\sem{\E,\S}(\sem{\E,\S_1}(\eta[X:=P])))\\
                =\: &\; \mbox{~~~~~(Lemma~\ref{indepsolve})}\\
                    &\; \E_X(\sem{\E,\S_1\conc\S}(\eta[X:=P]))\\
                =:  &\; G(P)\\
        \end{aligned}\\
=\; & \sem{\E,\S}(\sem{\E,\S_1}(\eta)[X:=\sigma(G)])\\
=\; & \mbox{~~~~~(Lemma~\ref{indepenv})}\\
    & \sem{\E,\S}(\sem{\E,\S_1}(\eta[X:=\sigma(G)]))\\
=\; & \mbox{~~~~~(Lemma~\ref{indepsolve})}\\
    & \sem{\E,\S_1\conc\S}(\eta[X:=\sigma(G)])\\
=\; & \sem{\E,\sigma X\conc\S_1\conc\S}(\eta)\\
=\; & \mbox{~~~~~(Lemma~\ref{congruence} and induction hypothesis)}\\
=\; & \sem{\E,\sigma X\conc\S\conc\S_1}(\eta)
\qedhere
\end{align*}
\end{proof}

This migration theorem has several interesting corollaries.
First, we get right-congruence for independent specifications.
\begin{cor}\label{congright}
Assume that $\indep(\E_1,\S_1,\S)$, $\indep(\E_2,\S_2,\S)$,
$\disjoint(\S,\S_1\conc\S_2)$ and 
that $\E_1 \restrictedeq{\S} \E_2$.
Then $\sem{\E_1,\S_1}=\sem{\E_2,\S_2}$ implies
$\sem{\E_1,\S_1\conc\S}=\sem{\E_2,\S_2\conc\S}$.
\end{cor}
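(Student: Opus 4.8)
The plan is to reduce this right-congruence statement to the already-established left-congruence (Lemma~\ref{congruence}), using the Migration Theorem (Theorem~\ref{migration}) as a preprocessing step to reposition the common block $\S$. The key tension is that Lemma~\ref{congruence} only relates equation systems that share a common \emph{prefix}, whereas here $\S$ occurs as a \emph{suffix} in both $\S_1\conc\S$ and $\S_2\conc\S$. So the first thing I would do is migrate $\S$ to the front on each side.

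Concretely, I would apply Theorem~\ref{migration} on the $\E_1$-side (instantiating its two blocks as $\S_1$ and $\S$) to get $\sem{\E_1,\S_1\conc\S} = \sem{\E_1,\S\conc\S_1}$, and symmetrically on the $\E_2$-side to get $\sem{\E_2,\S_2\conc\S} = \sem{\E_2,\S\conc\S_2}$. The side conditions of Theorem~\ref{migration} are read off from the hypotheses: $\indep(\E_1,\S_1,\S)$ and $\indep(\E_2,\S_2,\S)$ are assumed directly, while $\disjoint(\S_1,\S)$ and $\disjoint(\S_2,\S)$ follow from $\disjoint(\S,\S_1\conc\S_2)$, since $\dom(\S_1)$ and $\dom(\S_2)$ are both contained in $\dom(\S_1\conc\S_2)$ and $\disjoint$ is symmetric.

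With $\S$ now a common prefix on both reformulated sides, the middle equality $\sem{\E_1,\S\conc\S_1} = \sem{\E_2,\S\conc\S_2}$ is a single application of Lemma~\ref{congruence}: its two premises are exactly $\E_1\restrictedeq{\S}\E_2$ (given) and $\sem{\E_1,\S_1}=\sem{\E_2,\S_2}$ (the hypothesis of the implication). Chaining the three equalities
\[
\sem{\E_1,\S_1\conc\S} = \sem{\E_1,\S\conc\S_1} = \sem{\E_2,\S\conc\S_2} = \sem{\E_2,\S_2\conc\S}
\]
then delivers the claim. I expect no real computation to be involved, and in particular no monotonicity assumption is needed, since neither Theorem~\ref{migration} nor Lemma~\ref{congruence} requires one. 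The only genuine insight — and thus the ``obstacle'' — is recognizing that the trailing block $\S$ must be migrated into prefix position before left-congruence becomes applicable; once that is seen, every side condition is immediate from the stated hypotheses.
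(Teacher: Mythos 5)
Your proof is correct and is essentially the derivation the paper intends: Corollary~\ref{congright} is stated without explicit proof as a consequence of Theorem~\ref{migration}, and your two applications of that theorem to move $\S$ into prefix position, followed by a single application of Lemma~\ref{congruence}, with $\disjoint(\S_1,\S)$ and $\disjoint(\S_2,\S)$ extracted from $\disjoint(\S,\S_1\conc\S_2)$ exactly as you do, is the natural instantiation, including your (correct) observation that no monotonicity is needed. As a minor aside, Section~\ref{sub:other} of the paper remarks that the corollary also ``follows directly from Lemma~\ref{indepsolve}'': chaining $\sem{\E_i,\S_i\conc\S}(\eta)=\sem{\E_i,\S}(\sem{\E_i,\S_i}(\eta))$ for $i=1,2$ with Lemma~\ref{sanity}.2 yields an even shorter argument that does not use the disjointness hypothesis at all.
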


We also get the near-reverse of Lemma~\ref{indepsolve}:
\begin{cor}\label{indepsolve2}
Let $\indep(\E,\S_2,\S_1)$ and $\disjoint(\S_1,\S_2)$. 
Then for all $\eta\in\Val$, we have
$\sem{\E,\S_1\conc\S_2}(\eta) = \sem{\E,\S_1}(\sem{\E,\S_2}(\eta))$.
\end{cor}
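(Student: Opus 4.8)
The plan is to derive the identity by chaining two results already established, each invoked with the block names $\S_1$ and $\S_2$ interchanged. I would first observe that the hypotheses of the corollary, namely $\disjoint(\S_1,\S_2)$ and $\indep(\E,\S_2,\S_1)$, are — up to the symmetric reordering of the arguments of $\disjoint$ — exactly the hypotheses of Theorem~\ref{migration} read with $\S_2$ in the role of its first block and $\S_1$ in the role of its second. Applying Theorem~\ref{migration} under this renaming therefore yields the equality of semantics
\[\sem{\E,\S_2\conc\S_1} = \sem{\E,\S_1\conc\S_2}.\]

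Next I would apply Lemma~\ref{indepsolve} under the same interchange: from the premise $\indep(\E,\S_2,\S_1)$ it gives, for every $\eta\in\Val$, the factorisation $\sem{\E,\S_2\conc\S_1}(\eta) = \sem{\E,\S_1}(\sem{\E,\S_2}(\eta))$. Combining this with the previous equality, read at an arbitrary $\eta$, gives
\[\sem{\E,\S_1\conc\S_2}(\eta) = \sem{\E,\S_2\conc\S_1}(\eta) = \sem{\E,\S_1}(\sem{\E,\S_2}(\eta)),\]
which is precisely the claimed identity.

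There is no genuine obstacle in this argument: all the substantive work — the induction on the specification and the appeals to the lattice-theoretic fixpoint laws — is already packaged inside Theorem~\ref{migration} and Lemma~\ref{indepsolve}. The only point that deserves a moment's care is bookkeeping: checking that each hypothesis of the corollary matches the correspondingly renamed hypothesis of the result being invoked. In particular one uses that $\disjoint$ is symmetric, so that $\disjoint(\S_1,\S_2)$ supplies the disjointness premise of Theorem~\ref{migration} in the required argument order, while the single premise $\indep(\E,\S_2,\S_1)$ simultaneously serves as the independence premise of both Theorem~\ref{migration} and Lemma~\ref{indepsolve} after the swap.
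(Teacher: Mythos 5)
Your proof is correct and follows essentially the same route as the paper: the paper's own proof likewise applies Theorem~\ref{migration} (read in the swapped direction, using symmetry of $\disjoint$) to pass from $\sem{\E,\S_1\conc\S_2}$ to $\sem{\E,\S_2\conc\S_1}$, and then factors the latter via Lemma~\ref{indepsolve} instantiated with the blocks interchanged, exactly as you do. Your bookkeeping of the renamed hypotheses is accurate, so nothing further is needed.
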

\begin{proof}
Under the given assumptions, we obtain from Theorem~\ref{migration}
(applied from right to left) and Lemma~\ref{indepsolve}:
\[\sem{\E,\S_1\conc\S_2}(\eta) = 
  \sem{\E,\S_2\conc\S_1}(\eta) = 
  \sem{\E,\S_1}(\sem{\E,\S_2}(\eta))\qedhere\]
\end{proof}

\begin{thm}\label{migration2}
Assume that $\disjoint(\S_1,\S_2\conc\S_3)$.
Also, assume that either $\indep(\E,\S_1,\S_2\conc\S_3)$, or 
$\indep(\E,\S_2\conc\S_3,\S_1)$.
Then 
${\sem{\E,\S_0\conc\S_1\conc\S_2\conc\S_3} =
           \sem{\E,\S_0\conc\S_2\conc\S_1\conc\S_3}}$.
\end{thm}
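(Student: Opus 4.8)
The plan is to strip away the outer left context $\S_0$ and the inner right context $\S_3$ with two uses of left-congruence, reducing everything to two applications of the block-migration theorem (Theorem~\ref{migration}). First, applying Lemma~\ref{congruence} with $\E_1 = \E_2 = \E$ (so the side condition $\E \restrictedeq{\S_0} \E$ is trivial) and left context $\S_0$, it suffices to prove
\[\sem{\E,\S_1\conc\S_2\conc\S_3} = \sem{\E,\S_2\conc\S_1\conc\S_3}.\]

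The key idea is to treat $\S_2\conc\S_3$ as a single block. Since $\disjoint(\S_1,\S_2\conc\S_3)$ holds by assumption, and we are given independence between $\S_1$ and $\S_2\conc\S_3$ in one of the two directions, Theorem~\ref{migration} yields
\[\sem{\E,\S_1\conc\S_2\conc\S_3} = \sem{\E,\S_2\conc\S_3\conc\S_1}.\]
Here the conclusion of migration is symmetric, so whichever direction of independence we are given applies: if $\indep(\E,\S_1,\S_2\conc\S_3)$, I would instantiate migration with $\S_1$ as the first block; if instead $\indep(\E,\S_2\conc\S_3,\S_1)$, I would instantiate it with $\S_2\conc\S_3$ as the first block. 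It then remains to move $\S_3$ back past $\S_1$, that is, to show $\sem{\E,\S_2\conc\S_3\conc\S_1} = \sem{\E,\S_2\conc\S_1\conc\S_3}$. By Lemma~\ref{congruence} with left context $\S_2$, this reduces to $\sem{\E,\S_3\conc\S_1} = \sem{\E,\S_1\conc\S_3}$, which is again an instance of Theorem~\ref{migration}. Chaining the two migrations and reinstating $\S_0$ by the first congruence step then gives the result.

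To discharge the side conditions of this second migration, I would first record that both $\disjoint$ and $\indep$ restrict to subblocks. From $\disjoint(\S_1,\S_2\conc\S_3)$ together with $\dom(\S_3)\subseteq\dom(\S_2\conc\S_3)$ we immediately obtain $\disjoint(\S_1,\S_3)$. For independence, note that $\indep(\E,V_1,V_2)$ is preserved when $V_1$ is shrunk (the conclusion $\restrictedeq{V_1}$ weakens to $\restrictedeq{V_1'}$ for $V_1'\subseteq V_1$) and when $V_2$ is shrunk (for $V_2'\subseteq V_2$ the hypothesis $\eta_1\restrictedeq{\compl{V_2'}}\eta_2$ is stronger than $\eta_1\restrictedeq{\compl{V_2}}\eta_2$); hence the given independence restricts to $\indep(\E,\S_1,\S_3)$ in the first case and to $\indep(\E,\S_3,\S_1)$ in the second, crucially keeping the \emph{same} direction, so the case split stays uniform. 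The proof is essentially bookkeeping; the only real content is the grouping trick --- swapping $\S_1$ past the whole block $\S_2\conc\S_3$ and then correcting by swapping $\S_3$ back --- and I expect the one point needing explicit care to be this monotonicity of $\indep$ under restriction of its arguments, which must be stated and invoked to justify the second migration.
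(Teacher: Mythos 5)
Your proof is correct and follows essentially the same route as the paper's: both reduce via left-congruence (Lemma~\ref{congruence}) and apply Theorem~\ref{migration} twice, first swapping $\S_1$ past the whole block $\S_2\conc\S_3$ to get $\sem{\E,\S_0\conc\S_2\conc\S_3\conc\S_1}$ and then swapping $\S_3$ back past $\S_1$, using that $\disjoint(\S_1,\S_3)$ and the same-direction independence $\indep(\E,\S_1,\S_3)$ or $\indep(\E,\S_3,\S_1)$ follow by restricting the hypotheses to subblocks. The only difference is presentational: the paper leaves the restriction facts and the symmetry of Theorem~\ref{migration} implicit (``in either direction''), where you spell them out.
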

\begin{proof}
We also have $\disjoint(\S_1,\S_3)$, and either
$\indep(\E,\S_1,\S_3)$ or $\indep(\E,\S_3,\S_1)$.
Using Lemma~\ref{congruence} and Theorem~\ref{migration} twice
(in either direction) we get:
\[ \sem{\E,\S_0\conc\S_1\conc\S_2\conc\S_3}
 = \sem{\E,\S_0\conc\S_2\conc\S_3\conc\S_1}
 = \sem{\E,\S_0\conc\S_2\conc\S_1\conc\S_3}\qedhere
\]
\end{proof}

\subsection{Inequalities by Swapping or Changing Signs}
\label{diffsign}

In this section, we will prove a few inequalities. Theorem~\ref{migineq}
shows the consequence of swapping variables with a different sign; 
Theorem~\ref{signineq} shows the effect of changing the sign of a variable.
But first, it will be shown that $\leq$ is a left congruence:

\begin{lem}\label{left_ineq_cong}
Let $\E_1,\E_2\in\Eqs$ and $\S,\S_1,\S_2\in\Spec$.
If $\E_1$ is monotonic, $\E_1 \restrictedeq{\S} \E_2$, and
$\sem{\E_1,\S_1}\leq\sem{\E_2,\S_2}$, then
$\sem{\E_1,\S\conc\S_1}\leq\sem{\E_2,\S\conc\S_2}$.
\end{lem}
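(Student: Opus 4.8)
The plan is to prove this by induction on $\S$, following exactly the same structural pattern as the left-congruence result for equality (Lemma~\ref{congruence}), but tracking inequalities instead. The statement is the inequality-analogue of Lemma~\ref{congruence}, with two additional hypotheses: $\E_1$ is monotonic, which we will need because inequalities on fixpoints require monotonic functions to behave well, and the use of fixpoint monotonicity (Lemma~\ref{fixcalculus}.5) rather than mere congruence.

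First I would set up the induction. The base case $\S=\varepsilon$ is immediate, since $\sem{\E_1,\varepsilon\conc\S_1} = \sem{\E_1,\S_1} \leq \sem{\E_2,\S_2} = \sem{\E_2,\varepsilon\conc\S_2}$ by the hypothesis. For the induction step, write $\S = \sigma X\conc\S'$ and assume $\E_1\restrictedeq{\sigma X\conc\S'}\E_2$, which immediately gives $\E_1\restrictedeq{\S'}\E_2$ so that the induction hypothesis applies, yielding $\sem{\E_1,\S'\conc\S_1}\leq\sem{\E_2,\S'\conc\S_2}$. Fix an arbitrary $\eta$ and define, for $i=1,2$, the functions $F_i(P) := \E_{i,X}(\sem{\E_i,\S'\conc\S_i}(\eta[X:=P]))$. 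The goal unfolds to $\sem{\E_1,\S'\conc\S_1}(\eta[X:=\sigma(F_1)]) \leq \sem{\E_2,\S'\conc\S_2}(\eta[X:=\sigma(F_2)])$.

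The key chain of reasoning runs as follows. For any $P$, the induction hypothesis gives $\sem{\E_1,\S'\conc\S_1}(\eta[X:=P]) \leq \sem{\E_2,\S'\conc\S_2}(\eta[X:=P])$; since $X\in\dom(\sigma X\conc\S')$ means $\E_{1,X}=\E_{2,X}$ and $\E_1$ is monotonic, applying $\E_{1,X}$ to the left and $\E_{2,X}$ to the right preserves the inequality, giving $F_1(P)\leq F_2(P)$ pointwise, i.e.\ $F_1\leq F_2$. By fixpoint monotonicity (Lemma~\ref{fixcalculus}.5) we then get $\sigma(F_1)\leq\sigma(F_2)$, hence $\eta[X:=\sigma(F_1)]\leq\eta[X:=\sigma(F_2)]$ by monotonicity of valuation update. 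Finally I would close the gap by combining two facts: monotonicity of $\sem{\E_1,\S'\conc\S_1}$ (Lemma~\ref{monotonic}.1, using that $\E_1$ is monotonic) to move from the smaller to the larger argument, and the induction hypothesis at the point $\eta[X:=\sigma(F_2)]$ to switch from $\E_1$-semantics to $\E_2$-semantics. This is precisely the two-step interpolation used in the proof of Lemma~\ref{monotonic}.2.

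The main obstacle, and the reason both extra hypotheses are essential, is this last interpolation step: unlike in the equality case, we cannot simply substitute equal arguments, so we must route through the intermediate term $\sem{\E_1,\S'\conc\S_1}(\eta[X:=\sigma(F_2)])$, which requires $\E_1$ monotonic (for Lemma~\ref{monotonic}.1) and the induction hypothesis holding uniformly in the valuation argument. I would verify that the induction hypothesis as stated quantifies over all valuations (it does, since $\sem{\E_1,\S_1}\leq\sem{\E_2,\S_2}$ is a pointwise ordering of functions), so that instantiating it at $\eta[X:=\sigma(F_2)]$ is legitimate. Unwinding the definition of the semantics then yields the desired $\sem{\E_1,\sigma X\conc\S'\conc\S_1}(\eta)\leq\sem{\E_2,\sigma X\conc\S'\conc\S_2}(\eta)$.
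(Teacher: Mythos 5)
Your proof is correct and matches the paper's argument essentially step for step: induction on $\S$, the pointwise comparison $F_1\leq F_2$ via the induction hypothesis, monotonicity of $\E_1$ and $\E_{1,X}=\E_{2,X}$, then fixpoint monotonicity (Lemma~\ref{fixcalculus}.5) followed by the two-step interpolation through $\sem{\E_1,\S'\conc\S_1}(\eta[X:=\sigma(F_2)])$ using Lemma~\ref{monotonic}.1 and the induction hypothesis. Your observation that the induction hypothesis holds uniformly in the valuation (being a pointwise inequality of functions) is exactly the justification the paper's proof relies on implicitly.
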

\begin{proof}
Induction on $\S$. The base case is trivial.
\allowdisplaybreaks
\begin{align*}
       & \sem{\E_1,\sigma X\conc\S\conc\S_1}(\eta)\\
=\;    & \sem{\E_1,\S\conc\S_1}(\eta [X:=\sigma(F)]), where\\
       &\qquad\begin{aligned}
          F(P) :=\: &\; \E_{1,X}(\sem{\E_1,\S\conc\S_1}(\eta [X:=P]))\\
              \leq\:&\; \mbox{~~~~~(by induction hypothesis and $\E_1$ monotonic)}\\
                    &\; \E_{1,X}(\sem{\E_2,\S\conc\S_2}(\eta [X:=P]))\\
               =\:  &\; \E_{2,X}(\sem{\E_2,\S\conc\S_2}(\eta [X:=P]))\\
                =:  &\; G(P)
       \end{aligned}\\
\leq\; & \mbox{~~~~~(Using Lemma~\ref{monotonic}.1)}\\
       &    \sem{\E_1,\S\conc\S_1}(\eta [X:=\sigma(G)]) \\
\leq\; & \mbox{~~~~~(by induction hypothesis)}\\
       &       \sem{\E_2,\S\conc\S_2}(\eta [X:=\sigma(G)]) \\
=\;    & \sem{\E_2,\sigma X\conc\S\conc\S_2}(\eta)
\qedhere
\end{align*}
\end{proof}
Note that, by duality, the above lemma may also be applied if $\E_2$
is monotonic instead of $\E_1$.
Moreover, note that (only) for monotonic $\E_1$, Lemma~\ref{congruence} 
would follow from Lemma~\ref{left_ineq_cong}.

\begin{thm}\label{migineq}
Assume $\E\in\Eqs$ is monotonic and $X\neq Y$. Then
\[\sem{\E,\S_1\conc\mu X\conc\nu Y\conc\S_2}\leq
 \sem{\E,\S_1\conc\nu Y\conc\mu X\conc\S_2}\]
\end{thm}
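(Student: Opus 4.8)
The plan is to first strip off the common left-context $\S_1$, and then to attack the remaining inequality with the two halves of the Beki\v{c} Inequality (Lemma~\ref{fixinequalities}.4), which is precisely the mixed-sign analogue of the equalities used in Theorem~\ref{swapsame}. For the reduction I would instantiate the left-congruence Lemma~\ref{left_ineq_cong} with $\E_1=\E_2=\E$ and the context $\S:=\S_1$. Since $\E$ is monotonic and $\E\restrictedeq{\S_1}\E$ holds trivially, it suffices to prove the core inequality $\sem{\E,\mu X\conc\nu Y\conc\S_2}\leq\sem{\E,\nu Y\conc\mu X\conc\S_2}$; prepending $\S_1$ is then automatic.

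Next, I would fix an arbitrary $\eta$ and unfold both sides exactly as in the proof of Theorem~\ref{swapsame}. Writing
$A(P,Q):=\sem{\E,\S_2}(\eta[X:=P,\,Y:=Q])$, $H(x,y):=\E_X(A(x,y))$ and $K(y,x):=\E_Y(A(x,y))$, the symmetric unfolding (using $X\neq Y$, so that $A(P,Q)=B(Q,P)$) yields
\begin{align*}
\sem{\E,\mu X\conc\nu Y\conc\S_2}(\eta) &= A(\mu(F_2),\nu(F_3)),\\
\sem{\E,\nu Y\conc\mu X\conc\S_2}(\eta) &= A(\mu(G_3),\nu(G_2)),
\end{align*}
where $\mu(F_2)=\mu x.\,H(x,\nu y.\,K(y,x))$, $\nu(F_3)=\nu y.\,K(y,\mu(F_2))$, $\nu(G_2)=\nu y.\,K(y,\mu x.\,H(x,y))$ and $\mu(G_3)=\mu x.\,H(x,\nu(G_2))$. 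Here $H$ and $K$ are monotonic because $\E$ is monotonic and $\sem{\E,\S_2}$ is monotonic by Lemma~\ref{monotonic}.1; for the same reason $A$ is monotonic in both arguments.

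Since $A$ is monotonic, it then suffices to establish the two coordinate inequalities $\mu(F_2)\leq\mu(G_3)$ and $\nu(F_3)\leq\nu(G_2)$. The first is exactly Beki\v{c} Inequality~(a), Lemma~\ref{fixinequalities}.4, read with these $H,K$: its left-hand side is $\mu x.\,H(x,\nu y.\,K(y,x))=\mu(F_2)$, and its right-hand side is $\mu x.\,H(x,\nu y.\,K(y,\mu x.\,H(x,y)))=\mu x.\,H(x,\nu(G_2))=\mu(G_3)$. The second inequality is the dual Beki\v{c} Inequality~(b) applied with the roles of $H$ and $K$ interchanged, i.e.\ to $\tilde H(a,z):=K(a,z)$ and $\tilde K(b,a):=H(b,a)$: its left-hand side then becomes $\nu y.\,K(y,\mu x.\,H(x,y))=\nu(G_2)$, and its right-hand side becomes $\nu y.\,K\big(y,\mu x.\,H(x,\nu y.\,K(y,x))\big)=\nu y.\,K(y,\mu(F_2))=\nu(F_3)$, so that (b) reads $\nu(G_2)\geq\nu(F_3)$. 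Feeding both coordinate inequalities through monotonicity of $A$ closes the $\S_1=\varepsilon$ case, and with the reduction above this proves the theorem.

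I expect the only real obstacle to be bookkeeping: matching the nested fixpoint expressions produced by the symmetric unfolding against the precise shapes of the two Beki\v{c} Inequalities, and in particular recognising that the ``correction term'' $\nu y.\,K(y,\mu x.\,H(x,y))$ on the right-hand side of~(a) is literally the $Y$-solution $\nu(G_2)$ of the swapped system (and dually that the correction term on the right of~(b), after the $H/K$ swap, is the $Y$-solution $\nu(F_3)$ of the original system). Once the dictionary $A,H,K$ is fixed, no fresh fixpoint induction is needed beyond what is already packaged inside Lemma~\ref{fixinequalities}.4; the asymmetry of the signs is entirely absorbed by using (a) for the $\mu$-coordinate and its dual (b) for the $\nu$-coordinate.
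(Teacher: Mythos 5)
Your proof is correct and takes essentially the same route as the paper: both unfold the two semantics exactly as in Theorem~\ref{swapsame}, obtain $\mu(F_2)\leq\mu(G_3)$ from Lemma~\ref{fixinequalities}.4(a) and $\nu(F_3)\leq\nu(G_2)$ from 4(b) with the roles of $H$ and $K$ interchanged, and conclude via monotonicity of $A$ together with the left-congruence Lemma~\ref{left_ineq_cong}. The only differences are cosmetic: you discharge the context $\S_1$ at the start rather than at the end, and you make explicit the $H/K$-swap instantiation and the monotonicity facts that the paper leaves implicit.
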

\begin{proof}
As in Theorem~\ref{swapsame}, and using $X\neq Y$, we obtain:
\begin{align*}
   \sem{\E,\mu X\conc\nu Y\conc\S_2}(\eta)
&= A(\mu(F_2),\nu(F_3))\mbox{, where}\\
  & \qquad\begin{aligned}
      A(P,Q)    &= \sem{\E,\S_2}(\eta[X:= P,\, Y:=Q]) \\
      F_1(P)(Q) &= \E_Y(A(P,Q)) \\
      F_2(P)    &= \E_X(A(P,\nu(F_1(P))))\\
      F_3(Q)    &= \E_Y(A(\mu(F_2),Q))\\
    \end{aligned}
\end{align*}
\begin{align*}
   \sem{\E,\nu Y\conc\mu X\conc\S_2}(\eta)
 &= A(\mu(G_3),\nu(G_2))\mbox{, where}\\
   &\qquad\begin{aligned}
       G_1(Q)(P) &= \E_X(A(P,Q)) \\
       G_2(Q)    &= \E_Y(A(\mu(G_1(Q)),Q))\\
       G_3(P)    &= \E_X(A(P,\nu(G_2)))\\
    \end{aligned}
\end{align*}
By Lemma~\ref{fixinequalities}.4(a), $\mu(F_2)\leq\mu(G_3)$, and
by Lemma~\ref{fixinequalities}.4(b), $\nu(F_3)\leq\nu(G_2)$, whence
it follows that $\sem{\E,\mu X\conc\nu Y\conc\S_2}(\eta)\leq
 \sem{\E,\nu Y\conc\mu X\conc\S_2}(\eta)$. The theorem then follows
by Lemma~\ref{left_ineq_cong}.
\end{proof}

We end this section with another inequality:
\begin{thm}\label{signineq}
If $\E$ is monotonic, then 
$\sem{\E,\S_1\conc\mu X\conc\S_2} \leq \sem{\E,\S_1\conc\nu X\conc\S_2}$.
\end{thm}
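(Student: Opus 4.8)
The plan is to peel off the common left context $\S_1$ using the left-congruence lemma, and then observe that changing the sign of a single equation only changes which extremal fixpoint is taken of one and the same defining functional. Concretely, I would first apply Lemma~\ref{left_ineq_cong}, taking the shared prefix to be $\S_1$ and the two suffixes to be $\mu X\conc\S_2$ and $\nu X\conc\S_2$. Since $\E$ is monotonic and trivially $\E\restrictedeq{\S_1}\E$, this reduces the claim to the context-free inequality $\sem{\E,\mu X\conc\S_2}\leq\sem{\E,\nu X\conc\S_2}$.

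To establish that, I would unfold the semantics of both sides at an arbitrary $\eta\in\Val$. By definition we have $\sem{\E,\mu X\conc\S_2}(\eta) = \sem{\E,\S_2}(\eta[X:=\mu(F)])$ and $\sem{\E,\nu X\conc\S_2}(\eta) = \sem{\E,\S_2}(\eta[X:=\nu(F)])$, where in \emph{both} cases $F(P) := \E_X(\sem{\E,\S_2}(\eta[X:=P]))$ is literally the same function: the fixpoint sign $\sigma$ enters only as the outer operator $\sigma(F)$ and never occurs inside $F$. This functional $F$ is monotonic, because $\E_X$ is monotonic by assumption and $\sem{\E,\S_2}$ is monotonic by Lemma~\ref{monotonic}.1; hence Lemma~\ref{fixinequalities}.1 gives $\mu(F)\leq\nu(F)$.

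Finally, since valuation update is monotonic we get $\eta[X:=\mu(F)]\leq\eta[X:=\nu(F)]$, and one more application of Lemma~\ref{monotonic}.1 yields $\sem{\E,\S_2}(\eta[X:=\mu(F)])\leq\sem{\E,\S_2}(\eta[X:=\nu(F)])$, which is exactly the required context-free inequality. Note that no case distinction on whether $X\in\dom(\S_2)$ is needed, since the argument only uses $\leq$ (not equality) and the shared functional $F$ is unaffected by that distinction. The only real content is the middle step's observation that both specifications share the same $F$; everything else is an immediate appeal to monotonicity and the lattice inequality $\mu(F)\leq\nu(F)$, so I anticipate no genuine obstacle — the core argument is essentially a two-line calculation.
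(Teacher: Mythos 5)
Your proposal is correct and follows essentially the same route as the paper's proof: both establish $\mu(F)\leq\nu(F)$ for the shared functional $F(P) := \E_X(\sem{\E,\S_2}(\eta[X:=P]))$ via Lemma~\ref{fixinequalities}.1, lift this through $\sem{\E,\S_2}$ using Lemma~\ref{monotonic}.1, and handle the prefix $\S_1$ with Lemma~\ref{left_ineq_cong}. The only difference is presentational (you peel off $\S_1$ first, the paper does it last), and your explicit remark that $F$ is monotonic is a small point the paper leaves implicit.
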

\begin{proof}
Let $\eta$ be an arbitrary valuation, and define $F:U\to U$ by\\
$F(P) := \E_X(\sem{\E,\S_2}(\eta[X:=P]))$. We then have:
\begin{align*}
             & \mbox{~~~~(Theorem~\ref{fixinequalities}.1)}\\
             & \mu P.\,F(P) \leq \nu P.\,F(P)\\
\Rightarrow\;& \mbox{~~~~(Monotonicity, Theorem~\ref{monotonic}.1)}\\
             & \sem{\E,\S_2}(\eta[X := \mu P.\,F(P)]) \leq
               \sem{\E,\S_2}(\eta[X := \nu P.\,F(P)])\\
\Rightarrow\;& \mbox{~~~~(Definition semantics)}\\
             & \sem{\E,\mu X\conc\S}(\eta) \leq \sem{\E,\nu X\conc\S_2}(\eta)\\
\Rightarrow\;& \mbox{~~~~($\eta$ was arbitrary)}\\
             & \sem{\E,\mu X\conc\S} \leq \sem{\E,\nu X\conc\S_2}\\
\Rightarrow\;& \mbox{~~~~(Congruence, Theorem~\ref{left_ineq_cong})}\\
             & \sem{\E,\S_1\conc\mu X\conc\S_2} \leq \sem{\E,\S_1\conc\nu X\conc\S_2}
\qedhere
\end{align*}
\end{proof}

In the next section, we will see sufficient conditions under which the
inequality signs of these theorems can be turned into
equalities. These conditions will be phrased in terms of the
dependency graph.

\section{Indirect Dependencies and Loops}\label{sec:independent}
In Section~\ref{independent}, we studied direct dependencies between
variables. Basically, a direct dependency of $X$ on $Y$ means
that $Y$ occurs in the definition of $X$. We will now study the
effect of indirect dependencies, written $X\dep{\E,\S}Y$
(cf.~the definitions in Section~\ref{depgraph})

Given a specification $\S$ and a computable predicate $f$ on variables, we define $\splitname_f(\S) = (\S_1,\S_2)$, where $\S_1$ is the sublist of $\S$ with those $X$ for which $f(X)$ does \emph{not} hold and $\S_2$ is the sublist of $\S$ with those $X$ for which $f(X)$ \emph{does} hold.
Notice that, within $\S_1$ and $\S_2$, variables keep their order from $\S$.

The following basic facts follow directly from the definition of $\splitname$.
\begin{lem}\label{splitfacts1}
Let $\splitname_f(\S)=(\S_1,\S_2)$, then $\dom(\S)=\dom(\S_1)\cup\dom(\S_2)$ and\linebreak$\disjoint(\S_1,\S_2)$.
\end{lem}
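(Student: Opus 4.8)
The plan is to prove Lemma~\ref{splitfacts1} directly from the definition of $\splitname_f$, by induction on the structure of the specification $\S$. Since $\Spec := (\{\mu,\nu\}\times\X)^*$ is defined as a list, and $\splitname_f$ is itself defined by recursion on this list structure, induction on $\S$ is the natural proof technique and matches the computer-checked style used throughout the paper.

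\begin{proof}
We prove both conjuncts simultaneously by induction on $\S$. In the base case $\S=\varepsilon$, the function $\splitname_f$ returns $(\varepsilon,\varepsilon)$, so $\dom(\S)=\emptyset=\emptyset\cup\emptyset=\dom(\S_1)\cup\dom(\S_2)$, and $\disjoint(\varepsilon,\varepsilon)$ holds since $\emptyset\cap\emptyset=\emptyset$.

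For the induction step, consider $\S=\sigma Z\conc\S'$ and let $\splitname_f(\S')=(\S_1',\S_2')$, so by the induction hypothesis $\dom(\S')=\dom(\S_1')\cup\dom(\S_2')$ and $\disjoint(\S_1',\S_2')$. We distinguish on whether $f(Z)$ holds. If $f(Z)$ does not hold, then $\splitname_f(\S)=(\sigma Z\conc\S_1',\S_2')$, so $\dom(\S_1)=\{Z\}\cup\dom(\S_1')$ and $\dom(\S_2)=\dom(\S_2')$; then $\dom(\S)=\{Z\}\cup\dom(\S')=\{Z\}\cup\dom(\S_1')\cup\dom(\S_2')=\dom(\S_1)\cup\dom(\S_2)$ by the induction hypothesis. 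For disjointness, $Z\notin\dom(\S_2')$, because $\S_2'$ contains only variables $Y$ of $\S'$ with $f(Y)$, whereas $f(Z)$ fails; combined with $\disjoint(\S_1',\S_2')$ this yields $\disjoint(\sigma Z\conc\S_1',\S_2')$. The case where $f(Z)$ holds is symmetric, with $\splitname_f(\S)=(\S_1',\sigma Z\conc\S_2')$.
\end{proof}

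The main subtlety, which I expect to be the only point requiring care, lies in justifying $Z\notin\dom(\S_2')$ (respectively $Z\notin\dom(\S_1')$) in the induction step: this does not follow from the induction hypothesis alone but from the defining property of $\splitname_f$, namely that $\S_2'$ collects precisely those variables $Y$ of $\S'$ satisfying $f(Y)$. One must therefore argue that if $Z$ appeared in $\S_2'$ it would satisfy $f$, contradicting the case assumption $\neg f(Z)$. This relies on the fact that $\splitname_f$ preserves the set of variables (a membership-characterization of $\splitname$), so strictly speaking one might prefer to first establish the auxiliary characterization $X\in\dom(\S_2')\Leftrightarrow (X\in\dom(\S')\wedge f(X))$ as a separate invariant, either folded into the same induction or stated beforehand.
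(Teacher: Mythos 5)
Your proof is correct, and it matches the paper's approach: the paper offers no explicit proof at all, stating only that the lemma ``follows directly from the definition of $\splitname$,'' and your structural induction (with the auxiliary membership characterization $X\in\dom(\S_2')\Leftrightarrow X\in\dom(\S')\wedge f(X)$ to justify $Z\notin\dom(\S_2')$) is precisely the routine formalization of that claim. Your observation that the disjointness conjunct needs this characterization rather than the stated induction hypothesis alone is the right subtlety to flag, and your handling of it is sound, since $\splitname_f$ routes every occurrence of a variable according to $f$ applied to its name.
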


We first show how the equations in a FES may be rearranged if the specification is
split in such a way that certain independence conditions are fulfilled.

\begin{lem}\label{splitsolve_aux}
Let $f$ be a predicate and $\S$ a specification such that $\splitname_f(\S) = (\S_1, \S_2)$ and $\indep(\E,\S_2,\S_1)$.
Then $\sem{\E, \S} = \sem{\E, \S_1\conc\S_2}$.
\end{lem}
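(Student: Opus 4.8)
The plan is to avoid the naive strategy of bubbling the $\S_2$-variables rightward past the $\S_1$-variables one adjacent swap at a time. Such a swap would require migrating a variable past the \emph{entire} remaining suffix, and the available migration results (Theorem~\ref{migration}, Theorem~\ref{migration2}) demand independence of the moving block from that whole right-context, which the single hypothesis $\indep(\E,\S_2,\S_1)$ does not supply: it says nothing about independence \emph{within} $\S_2$, nor does it forbid repeated occurrences of an $\S_2$-variable in the suffix. Instead I would route through the ``solve-independently'' identity. Since $\disjoint(\S_1,\S_2)$ holds by Lemma~\ref{splitfacts1} and $\indep(\E,\S_2,\S_1)$ is assumed, Corollary~\ref{indepsolve2} gives $\sem{\E,\S_1\conc\S_2}(\eta)=\sem{\E,\S_1}(\sem{\E,\S_2}(\eta))$ for every $\eta$. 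So it suffices to prove the intermediate claim
\[\sem{\E,\S}(\eta)=\sem{\E,\S_1}(\sem{\E,\S_2}(\eta))\quad\text{for all }\eta,\ \text{where}\ (\S_1,\S_2)=\splitname_f(\S),\]
after which the lemma follows by chaining the two equalities.

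I would prove this intermediate claim by induction on $\S$, carrying the assumption $\indep(\E,\S_2,\S_1)$. The base case $\S=\varepsilon$ is immediate. For $\S=\sigma X\conc\S'$ with $\splitname_f(\S')=(\S_1',\S_2')$, note first that $\indep(\E,\S_2',\S_1')$ is inherited, since shrinking either argument of $\indep$ preserves it and $\dom(\S_1')\subseteq\dom(\S_1)$, $\dom(\S_2')\subseteq\dom(\S_2)$; hence the induction hypothesis $\sem{\E,\S'}(\zeta)=\sem{\E,\S_1'}(\sem{\E,\S_2'}(\zeta))$ is available for all $\zeta$. The two cases are whether $f(X)$ fails (so $\S_1=\sigma X\conc\S_1'$, $\S_2=\S_2'$) or holds (so $\S_1=\S_1'$, $\S_2=\sigma X\conc\S_2'$).

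In the case $\neg f(X)$, I would unfold $\sem{\E,\sigma X\conc\S'}(\eta)$, rewrite its body with the induction hypothesis, and then observe that updating $X$ commutes with solving $\S_2'$: indeed $X\notin\dom(\S_2')$ (as $f(X)$ is false) and $\indep(\E,\S_2',X)$ (from $\indep(\E,\S_2,\S_1)$ with $X\in\dom(\S_1)$), so Lemma~\ref{indepenv} yields $\sem{\E,\S_2'}(\eta[X:=P])=\sem{\E,\S_2'}(\eta)[X:=P]$. Writing $\zeta_0:=\sem{\E,\S_2'}(\eta)$, both the fixpoint argument computed on the left and the one arising from $\sem{\E,\sigma X\conc\S_1'}(\zeta_0)$ reduce to $P\mapsto\E_X(\sem{\E,\S_1'}(\zeta_0[X:=P]))$, so they coincide and the two sides agree. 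In the case $f(X)$, after applying the induction hypothesis the left-hand side becomes $\sem{\E,\S_1'}(\sem{\E,\S_2'}(\eta[X:=\sigma F]))$ with $F(P)=\E_X(\sem{\E,\S_1'}(\sem{\E,\S_2'}(\eta[X:=P])))$, whereas the target $\sem{\E,\S_1'}(\sem{\E,\sigma X\conc\S_2'}(\eta))$ unfolds to $\sem{\E,\S_1'}(\sem{\E,\S_2'}(\eta[X:=\sigma F']))$ with $F'(P)=\E_X(\sem{\E,\S_2'}(\eta[X:=P]))$; it remains to see $F=F'$.

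The main obstacle is exactly this identity $F=F'$, i.e.\ that the inner solve of $\S_1'$ may be deleted from inside $\E_X$. This is where independence is genuinely used: from $\indep(\E,X,\S_1')$ (a consequence of $\indep(\E,\S_2,\S_1)$, since $X\in\dom(\S_2)$ and $\dom(\S_1')\subseteq\dom(\S_1)$) together with Lemma~\ref{sanity}.1, which guarantees $\sem{\E,\S_1'}(\zeta)\restrictedeq{\compl{\dom(\S_1')}}\zeta$, one obtains $\E_X(\sem{\E,\S_1'}(\zeta))=\E_X(\zeta)$ for every $\zeta$, and instantiating $\zeta:=\sem{\E,\S_2'}(\eta[X:=P])$ gives $F=F'$. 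I would stress that the whole argument needs neither monotonicity of $\E$ nor any disjointness of $X$ from $\S_2'$, so repeated occurrences of variables across $\S$ cause no trouble, in line with the generality of the statement.
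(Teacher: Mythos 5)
Your proof is correct, but it takes a genuinely different route from the paper's. The paper proves $\sem{\E,\S}=\sem{\E,\S_1\conc\S_2}$ directly by induction on $\S$: in the $\neg f(Y)$ case it prepends $\sigma Y$ by left-congruence (Lemma~\ref{congruence}), and in the $f(Y)$ case it additionally applies Theorem~\ref{migration2} (with empty $\S_0$, moving $\sigma Y$ past $\S_1$ with right context $\S_2$), whose hypotheses $\disjoint(\S_1,\sigma Y\conc\S_2)$ and $\indep(\E,\sigma Y\conc\S_2,\S_1)$ are exactly what Lemma~\ref{splitfacts1} and the assumption supply at that stage. In particular, your opening diagnosis---that a migration-based route is blocked because $\indep(\E,\S_2,\S_1)$ says nothing about the right context---is not quite right: in the inductive step the moved variable always belongs to the second split component, so the needed independence of the moving block together with its right context is available, since that whole suffix $\sigma Y\conc\S_2$ lies inside the independent part. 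Your alternative---reducing via Corollary~\ref{indepsolve2} (justified by Lemma~\ref{splitfacts1}) to the compositional claim $\sem{\E,\S}(\eta)=\sem{\E,\S_1}(\sem{\E,\S_2}(\eta))$ and proving that by induction---checks out in every detail: independence is indeed preserved under shrinking either argument, $X\notin\dom(\S_2')$ follows because $f$ is a predicate on variables (so all occurrences of $X$ land on the same side of the split), Lemma~\ref{indepenv} applies in the $\neg f(X)$ case, and the key identity $\E_X(\sem{\E,\S_1'}(\zeta))=\E_X(\zeta)$ in the $f(X)$ case follows correctly from $\indep(\E,X,\S_1')$ combined with Lemma~\ref{sanity}.1; you are also right that no monotonicity is needed, in line with the paper. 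What your route buys is a mildly stronger intermediate statement---a generalization of Corollary~\ref{indepsolve2} from the concatenation $\S_1\conc\S_2$ to the interleaved specification $\S$---at the cost of redoing inline some bookkeeping that Theorem~\ref{migration2} packages; note, though, that you do not fully escape the migration machinery, since Corollary~\ref{indepsolve2} is itself proved from Theorem~\ref{migration} and Lemma~\ref{indepsolve}.
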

\begin{proof}
We perform induction on $\S$.
The base case is trivial.
Let $\splitname_f(\S) = (\S_1,\S_2)$ and assume as induction hypothesis that $\indep(\E,\S_2,\S_1)$ implies $\sem{\E,\S} = \sem{\E,\S_1\conc\S_2}$.
For the induction step, we consider the specification $\sigma Y\conc\S$ and distinguish two cases.

If $f(Y)$ does not hold, then we have $\splitname_f(\sigma Y\conc\S) = (\sigma Y\conc\S_1,\S_2)$.
Accordingly, we assume $\indep(\E,\sigma Y \conc \S_2,\S_1)$, which implies $\indep(\E,\S_2,\S_1)$.
We can thus apply the induction hypothesis and congruence (Lemma~\ref{congruence}) to obtain
$\sem{\E,\sigma Y\conc\S} = \sem{\E,\sigma Y\conc\S_1\conc\S_2}$.

Otherwise, if $f(Y)$ holds, we obtain $\splitname_f(\sigma Y\conc\S) = (\S_1,\sigma Y\conc\S_2)$.
Now we assume that $\indep(\E,\S_2,\sigma Y \conc \S_1)$, which again implies $\indep(\E,\S_2,\S_1)$. 
We also have $\disjoint(\S_1,\sigma Y \conc \S_2)$ (Lemma~\ref{splitfacts1}), so we may apply Theorem~\ref{migration2} below:
\begin{align*}
   & \sem{\E,\sigma Y \conc\S}\\
=\;& \mbox{~~~~~(by induction hypothesis and Lemma~\ref{congruence})}\\
   & \sem{\E,\sigma Y \conc\S_1\conc\S_2}\\
=\;& \mbox{~~~~~(by Theorem~\ref{migration2})}\\
   & \sem{\E,\S_1\conc\sigma Y \conc\S_2}
\qedhere
\end{align*}
\end{proof}

We simplify notation and write $\splitS{X}{\E}{\S}$ for 
$\splitname_{\mathit{dep}^X_{\E,\S}}(\S)$, defining the predicate
$\smash{\mathit{dep}^X_{\E,\S}}(Y) = X \dep{\E,\S} Y$.
If $\indep(\E,X,Y)$ is computable (which we assume henceforward), then $X \dep{\E,\S} Y$ is also computable since $\S$ is finite.
Intuitively, if $\splitS{X}{\E}{\S}=(\S_1,\S_2)$, then
$\S_1$ is the sublist of $\S$ with those $Y$ on which $X$ 
\emph{does not} depend indirectly; and $\S_2$
is the sublist of $\S$ with those $Z$ on which $X$ \emph{does} depend 
indirectly.
Notice that, if $X \notin \dom(\S)$, then $\splitS{X}{\E}{\S} = (\S, \varepsilon)$.

We have the following lemma about splitting a specification based on the dependencies of $X$:

\begin{lem}\label{splitfacts2}
	If $\splitS{X}{\E}{\S}=(\S_1,\S_2)$, then $\indep(\E,\S_2,\S_1)$.
\end{lem}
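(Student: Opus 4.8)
The goal unfolds, by the overloading convention, to $\indep(\E,\dom(\S_2),\dom(\S_1))$, i.e.\ that changing the input values of variables in $\dom(\S_1)$ never affects the value $\E$ assigns to any variable in $\dom(\S_2)$. The plan is to first reduce this set-level statement to the pairwise statement $\indep(\E,Z,Y)$ for every $Z\in\dom(\S_2)$ and $Y\in\dom(\S_1)$, and then to read off each such pairwise independence directly from the dependency graph.

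For the reduction I would use two observations about $\indep$. First, independence in the first argument is determined variable by variable: since $\E(\eta_1)\restrictedeq{\dom(\S_2)}\E(\eta_2)$ holds exactly when $\E(\eta_1)(Z)=\E(\eta_2)(Z)$ for every $Z\in\dom(\S_2)$, we have $\indep(\E,\dom(\S_2),\dom(\S_1))$ iff $\indep(\E,Z,\dom(\S_1))$ for all $Z\in\dom(\S_2)$. Second, independence from a finite set in the second argument can be assembled from independence from its singletons: given $\indep(\E,Z,Y)$ for every $Y\in\dom(\S_1)$, and two valuations with $\eta_1\restrictedeq{\compl{\dom(\S_1)}}\eta_2$, I would transform $\eta_1$ into $\eta_2$ by updating the (finitely many) variables of $\dom(\S_1)$ one at a time; each single update leaves $\E(\cdot)(Z)$ unchanged by the corresponding singleton independence, and transitivity of equality then yields $\E(\eta_1)(Z)=\E(\eta_2)(Z)$. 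This step is where finiteness of $\dom(\S_1)$, inherited from finiteness of $\S$, is essential.

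The heart of the proof is the pairwise claim $\indep(\E,Z,Y)$, which I would establish by contradiction. Suppose it fails for some $Z\in\dom(\S_2)$ and $Y\in\dom(\S_1)$; then $\neg\indep(\E,Z,Y)$, which (as $Z,Y\in\dom(\S)$ by Lemma~\ref{splitfacts1}) is precisely the edge $Z\depstep{\E,\S}Y$ of the dependency graph. Since $Z\in\dom(\S_2)$, the splitting predicate $\mathit{dep}^X_{\E,\S}(Z)$ holds, i.e.\ $X\dep{\E,\S}Z$. Concatenating this path with the edge $Z\depstep{\E,\S}Y$ gives $X\dep{\E,\S}Y$, by transitivity of the reflexive, transitive closure $\dep{\E,\S}$. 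But $Y\in\dom(\S_1)$ means exactly that $\mathit{dep}^X_{\E,\S}(Y)$ fails, i.e.\ $\neg(X\dep{\E,\S}Y)$, a contradiction. Hence $\indep(\E,Z,Y)$ holds for all relevant pairs, and the reduction above completes the proof.

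I expect the main obstacle to be the second-argument lifting step. Because $\indep$ is defined semantically over arbitrary pairs of valuations rather than through syntactic occurrence, passing from independence of a variable from each single $Y\in\dom(\S_1)$ to independence from the whole block $\dom(\S_1)$ requires the explicit one-variable-at-a-time interpolation and genuinely relies on finiteness; this is also the part most likely to need care in the mechanised proof. By contrast, the graph-theoretic core is a short reachability argument once the edge $Z\depstep{\E,\S}Y$ has been identified with $\neg\indep(\E,Z,Y)$.
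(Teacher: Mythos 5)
Your proposal is correct and takes essentially the same approach as the paper: the paper's one-sentence proof is exactly your graph-theoretic contradiction (assuming some $Z\in\dom(\S_2)$ uses some $Y\in\dom(\S_1)$, it derives $X\dep{\E,\S}Z\depstep{\E,\S}Y$, hence $X\dep{\E,\S}Y$, contradicting $Y\in\dom(\S_1)$). The only difference is that you make explicit the reduction from set-level independence to pairwise independence via finitely many single-variable valuation updates, a step the paper leaves implicit in the informal phrase ``some $Z$ would use some $Y$ in its definition''---you are right that this lifting, and its reliance on finiteness of $\S$, is where the care is needed in a mechanised proof.
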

\begin{proof}
Assume that some $Z\in\dom(\S_2)$
would use some $Y\in\dom(\S_1)$ in its definition in $\E$. 
Then $X\dep{\E,\S} Z \depstep{\E,\S} Y$, so $X\dep{\E,\S}Y$,
and $Y$ would be in $\S_2$ and not in $\S_1$.
\end{proof}

The key theorem of this section states that the equations in a FES can
be rearranged, such that all equations that $X$ depends on precede all
other equations, or vice versa.  This is useful, because those parts
can be solved independently, using Lemma~\ref{indepsolve}. By
repeatedly picking a variable in a terminal strongly connected component
of the remaining variable dependency graph, one can thus solve all
SCCs one by one.
This idea already appeared in~\cite{jurdzinski_small_2000} for parity games, although it
does not always provide performance benefits in practice~\cite{friedmann_solving_2009}.
The theorem may also be used to reduce the number of fixpoint alternations
in a FES.

\begin{thm}\label{splitsolve}
Let $\splitS{X}{\E}{\S}=(\S_1,\S_2)$. Then 
$\sem{\E,\S} = \sem{\E,\S_1\conc\S_2} = \sem{\E,\S_2\conc\S_1}$.
\end{thm}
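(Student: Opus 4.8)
The plan is to assemble the two equalities from results already established, since the substantive work sits in the auxiliary lemmas. Writing $(\S_1,\S_2) = \splitS{X}{\E}{\S}$, I would first record the two structural facts about this split: Lemma~\ref{splitfacts1} gives $\disjoint(\S_1,\S_2)$, and Lemma~\ref{splitfacts2} gives $\indep(\E,\S_2,\S_1)$. These are precisely the hypotheses needed by the two results I intend to invoke.

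For the first equality, $\sem{\E,\S} = \sem{\E,\S_1\conc\S_2}$, I would observe that $\splitS{X}{\E}{\S}$ is by definition $\splitname_f(\S)$ for the predicate $f = \mathit{dep}^X_{\E,\S}$. Thus, taking $f$ to be this predicate and using $\indep(\E,\S_2,\S_1)$ from Lemma~\ref{splitfacts2}, the equality is exactly the conclusion of Lemma~\ref{splitsolve_aux}.

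For the second equality, $\sem{\E,\S_1\conc\S_2} = \sem{\E,\S_2\conc\S_1}$, I would appeal to Theorem~\ref{migration}. Since that theorem is phrased with independence in the direction $\indep(\E,\S_1,\S_2)$ whereas we have $\indep(\E,\S_2,\S_1)$, I would instantiate it with the blocks in swapped roles: take the theorem's first block to be $\S_2$ and its second to be $\S_1$. Its hypotheses then read $\disjoint(\S_2,\S_1)$ (which holds, as disjointness is symmetric) and $\indep(\E,\S_2,\S_1)$, both available; its conclusion is $\sem{\E,\S_2\conc\S_1} = \sem{\E,\S_1\conc\S_2}$, i.e. the desired equality.

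Combining the two equalities yields the chain $\sem{\E,\S} = \sem{\E,\S_1\conc\S_2} = \sem{\E,\S_2\conc\S_1}$. There is no genuine obstacle here beyond bookkeeping: the only point requiring care is tracking the \emph{direction} of the one-sided independence $\indep(\E,\S_2,\S_1)$ and checking that Theorem~\ref{migration}'s built-in symmetry (a dependence in one direction is allowed, and either direction works) covers our case. All the real content---the induction showing that an independent block can be migrated, and the graph-theoretic argument that $\S_2$ is independent of $\S_1$---was already discharged in Lemma~\ref{splitsolve_aux} and Lemma~\ref{splitfacts2}.
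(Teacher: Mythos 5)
Your proof is correct and follows exactly the paper's own route: the first equality from Lemma~\ref{splitsolve_aux} combined with Lemma~\ref{splitfacts2}, and the second from Theorem~\ref{migration} together with Lemmas~\ref{splitfacts1} and~\ref{splitfacts2}. The only difference is that you spell out the instantiation of Theorem~\ref{migration} with the blocks in swapped roles, which the paper leaves implicit.
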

\begin{proof}
The first equality follows from Lemmas~\ref{splitsolve_aux} and~\ref{splitfacts2}.
From Theorem~\ref{migration} and Lemmas~\ref{splitfacts1} and~\ref{splitfacts2} it
follows that also $\sem{\E,\S_1\conc\S_2} = \sem{\E,\S_2\conc\S_1}$.
\end{proof}

Based on this reordering principle, we can prove three more interesting
results, which we will do in the next subsections. 

\subsection{Swapping Signs and Dependency Loops}
The first result (Theorem~\ref{signloop}) states that the sign of a variable
$X$ is only relevant if it depends on itself, \emph{i.e.}, $X$ is on a cycle in $\depnonempty{\E,\S}$ (recall that $\depnonempty{\E,\S}$ indicates a
non-empty path in the variable dependency graph).
We first need a couple of auxiliary lemmas:

\begin{lem}\label{signindep}
If $X\notin\dom(\S)$ and we have $\indep(\E,\S,X)$ as well as $\indep(\E,X,X)$,
then $\sem{\E,\mu X\conc\S} = \sem{\E,\nu X\conc S}$.
\end{lem}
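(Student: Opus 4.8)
The plan is to fix an arbitrary valuation $\eta$ and unfold the semantics of both sides to the common form $\sem{\E,\S}(\eta[X:=\sigma(F)])$, where $F(P) := \E_X(\sem{\E,\S}(\eta[X:=P]))$ is literally the same function for $\sigma=\mu$ and $\sigma=\nu$ (only the applied fixpoint operator differs). Thus it suffices to prove $\lfp(F) = \gfp(F)$, after which both sides coincide. Since $\eta$ is arbitrary, this yields the desired equality of the two semantics.

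The key step is to show that $F$ is a constant function, exploiting the two independence hypotheses in turn. First I would apply Lemma~\ref{indepenv}, whose hypotheses $X\notin\dom(\S)$ and $\indep(\E,\S,X)$ are exactly those assumed, to pull the update out of the solution: $\sem{\E,\S}(\eta[X:=P]) = (\sem{\E,\S}(\eta))[X:=P]$. Substituting this gives $F(P) = \E_X\big((\sem{\E,\S}(\eta))[X:=P]\big)$. Now for any $P,Q\in U$, the valuations $(\sem{\E,\S}(\eta))[X:=P]$ and $(\sem{\E,\S}(\eta))[X:=Q]$ agree on $\compl{\{X\}}$, so the hypothesis $\indep(\E,X,X)$ gives $\E_X$ the same value on both, i.e.\ $F(P)=F(Q)$. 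Hence $F$ is constant; let $A$ denote its common value.

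With $F$ constant, the constant rule (Lemma~\ref{fixcalculus}.2) applies to both operators, yielding $\lfp(F) = A = \gfp(F)$. Substituting back, $\sem{\E,\mu X\conc\S}(\eta) = \sem{\E,\S}(\eta[X:=A]) = \sem{\E,\nu X\conc\S}(\eta)$, which closes the argument.

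I do not expect a serious obstacle here; the only delicate point is bookkeeping about where each hypothesis is consumed — $X\notin\dom(\S)$ together with $\indep(\E,\S,X)$ precisely to invoke Lemma~\ref{indepenv}, and $\indep(\E,X,X)$ to collapse $F$ to a constant — and confirming that both premises of Lemma~\ref{indepenv} are genuinely available before using it.
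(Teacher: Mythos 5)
Your proof is correct and follows essentially the same route as the paper's: unfold the semantics to $\sem{\E,\S}(\eta[X:=\sigma(F)])$, invoke Lemma~\ref{indepenv} (consuming exactly $X\notin\dom(\S)$ and $\indep(\E,\S,X)$), use $\indep(\E,X,X)$ to collapse $F$ to a constant, and apply the constant rule (Lemma~\ref{fixcalculus}.2) so the $\mu$- and $\nu$-fixpoints coincide — noting, as you implicitly do, that none of this requires monotonicity of $\E$. The only cosmetic difference is that the paper shows $F(P)=\E_X(\sem{\E,\S}(\eta))$ directly rather than $F(P)=F(Q)$ for all $P,Q$, which is immaterial.
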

\begin{proof}
For $\sigma\in\{\mu,\nu\}$ and arbitrary valuation $\eta$, we have:
\allowdisplaybreaks
\begin{align*}
  &\sem{\E,\sigma X\conc\S}(\eta)\\
=\; &\mbox{~~~~~(by definition of semantics)}\\
  & \sem{\E,\S}(\eta[X:=\sigma(F)])\mbox{, where}\\
  & \qquad\begin{aligned}
      F(P) := &\; \E_X(\sem{\E,\S}(\eta[X:=P])) \\
            = &\; \mbox{~~~~~(Lemma~\ref{indepenv}, and $X\notin\dom(\S)$ and $\indep(\E,\S,X)$}\\
              &\; \E_X((\sem{\E,\S}\eta)[X:=P])\\
            = &\; \mbox{~~~~~(by definition of $\indep(\E,X,X)$)}\\
              &\; \E_X(\sem{\E,\S}\eta)\\
    \end{aligned}\\
=\; & \mbox{~~~~~(constant rule, Lemma~\ref{fixcalculus}.2)}\\
  & \sem{\E,\S}(\eta[X := \E_X(\sem{\E,\S}\eta)])
\end{align*}
So indeed $\sem{\E,\mu X\conc\S}=\sem{\E,\nu X\conc\S}$.
\end{proof}

\begin{lem}\label{sign}
If not $X\depnonempty{\E,\mu X\conc\S}X$, and $X\notin\dom(\S)$, then
$\sem{\E,\mu X\conc\S} = \sem{\E,\nu X\conc\S}$.
\end{lem}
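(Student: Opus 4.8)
The plan is to reduce this to Lemma~\ref{signindep}, which already settles the case of \emph{direct} independence, by first rearranging the specification so that $X$ heads a self-contained block consisting only of the variables it depends on. Concretely, I would split $\mu X\conc\S$ according to the indirect dependencies of $X$: write $\splitS{X}{\E}{\mu X\conc\S}=(\S_1,\S_2)$. Since $X\dep{\E,\mu X\conc\S}X$ holds reflexively and $\splitname$ preserves the order and signs of its argument, $X$ is the first entry of $\S_2$, so $\S_2=\mu X\conc\S_2'$ for a sublist $\S_2'$ of $\S$, and $X\notin\dom(\S_1)\cup\dom(\S_2')$. Because the edge relation $\depstep{\E,\cdot}$ depends only on $\E$, while the node set $\dom(\mu X\conc\S)=\dom(\nu X\conc\S)$ is unchanged, the dependency graphs of $\mu X\conc\S$ and $\nu X\conc\S$ coincide; hence $\splitS{X}{\E}{\nu X\conc\S}=(\S_1,\nu X\conc\S_2')$ with the \emph{same} $\S_1$ and $\S_2'$.

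Next I would verify that the block $\mu X\conc\S_2'$ meets the hypotheses of Lemma~\ref{signindep}, so that $\sem{\E,\mu X\conc\S_2'}=\sem{\E,\nu X\conc\S_2'}$. We already have $X\notin\dom(\S_2')$. For $\indep(\E,X,X)$: a failure $\neg\indep(\E,X,X)$ is exactly the edge $X\depstep{\E,\mu X\conc\S}X$, giving $X\depnonempty{\E,\mu X\conc\S}X$ and contradicting the hypothesis. For $\indep(\E,\S_2',X)$ it suffices to check $\indep(\E,Y,X)$ for each $Y\in\dom(\S_2')$; such a $Y$ satisfies $X\depnonempty{\E,\mu X\conc\S}Y$ (the path is non-empty since $Y\neq X$), so a direct dependence $Y\depstep{\E,\mu X\conc\S}X$ would close a cycle $X\depnonempty{\E,\mu X\conc\S}X$, again contradicting the hypothesis. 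Lemma~\ref{signindep} then yields the block equality.

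Finally I would reattach the remaining equations $\S_1$. The obstacle is that right-congruence fails in general, so the block equality cannot simply be extended by appending $\S_1$. Instead I would exploit that the block is independent of $\S_1$: Lemma~\ref{splitfacts2} gives $\indep(\E,\S_2,\S_1)$, so by Lemma~\ref{indepsolve} the semantics factors, for $\sigma\in\{\mu,\nu\}$ and arbitrary $\eta$, as
\[
\sem{\E,(\sigma X\conc\S_2')\conc\S_1}(\eta)=\sem{\E,\S_1}\bigl(\sem{\E,\sigma X\conc\S_2'}(\eta)\bigr).
\]
As the inner factor is sign-independent by the previous step, the two results (for $\sigma=\mu$ and $\sigma=\nu$) coincide. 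Combining this with Theorem~\ref{splitsolve}, which rearranges $\sigma X\conc\S$ into $\S_2\conc\S_1=(\sigma X\conc\S_2')\conc\S_1$ without changing the solution, gives for arbitrary $\eta$
\[
\begin{aligned}
\sem{\E,\mu X\conc\S}(\eta)
&=\sem{\E,(\mu X\conc\S_2')\conc\S_1}(\eta)\\
&=\sem{\E,(\nu X\conc\S_2')\conc\S_1}(\eta)
=\sem{\E,\nu X\conc\S}(\eta),
\end{aligned}
\]
which is the claim.
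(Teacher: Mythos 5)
Your proof is correct, and its core is the same as the paper's: both split the specification via $\splitS{X}{\E}{\sigma X\conc\S}=(\S_1,\sigma X\conc\S_2')$, note that the split is insensitive to the sign of $X$ (the dependency graph depends only on $\E$ and the domain), verify the hypotheses of Lemma~\ref{signindep} for the dependent block exactly as you do (a self-edge on $X$, or a return edge $Y\depstep{\E,\mu X\conc\S}X$ from some $Y\in\dom(\S_2')$ with $X\dep{\E,\mu X\conc\S}Y$, would close a non-empty cycle through $X$, contradicting the hypothesis), and so obtain $\sem{\E,\mu X\conc\S_2'}=\sem{\E,\nu X\conc\S_2'}$. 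The only divergence is the reassembly of $\S_1$. The paper uses the \emph{other} ordering supplied by Theorem~\ref{splitsolve}, namely $\sem{\E,\sigma X\conc\S}=\sem{\E,\S_1\conc\sigma X\conc\S_2'}$, which places the modified block in suffix position, so the sign change lifts by the unconditional left congruence (Lemma~\ref{congruence}) and no further independence reasoning is needed. You instead place the block in prefix position and, correctly anticipating that right congruence fails there, factor the semantics through Lemma~\ref{indepsolve}, invoking $\indep(\E,\S_2,\S_1)$ from Lemma~\ref{splitfacts2} once for each sign. Both assemblies are sound; the paper's is marginally leaner (one congruence application instead of two factorizations), while yours has the virtue of making explicit that the independence furnished by the split is precisely what substitutes for the failing right congruence. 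One small remark: your observation that the path witnessing $X\dep{\E,\mu X\conc\S}Y$ is non-empty because $Y\neq X$ is true but not needed, since the concatenated cycle $X\dep{\E,\mu X\conc\S}Y\depstep{\E,\mu X\conc\S}X$ is non-empty already on account of its final edge.
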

\begin{proof}
Let $\S_1$ and $\S_2$ be such that $\splitS{X}{\E}{\sigma X\conc\S} = 
(\S_1,\sigma X\conc\S_2)$, for $\sigma\in\{\mu,\nu\}$.
Note that if not $\indep(\E,\mu X\conc\S_2,X)$, then for some
$Y \in \dom(\mu X\conc\S_2)$, by definition of $\splitname$, 
$X\dep{\E,\mu X\conc\S}Y\depstep{\E}X$, which contradicts the assumption not 
$X\depnonempty{\E,\mu X\conc\S}X$.
From $\dom(\S_2) \subseteq \dom(\S)$, we obtain $X \notin \dom(\S_2)$.
Hence, $X\notin\dom(\S_2)$ and $\indep(\E,\mu X\conc\S_2,X)$, so 
Lemma~\ref{signindep} applies. Together with Theorem~\ref{splitsolve} and 
Lemma~\ref{congruence},
we then compute:
\[ \sem{\E,\mu X\conc\S} =
   \sem{\E,\S_1\conc\mu X\conc\S_2} =
   \sem{\E,\S_1\conc\nu X\conc\S_2} =
   \sem{\E,\nu X\conc\S}
\qedhere
\]
\end{proof}

Intuitively, the sign of $X$ is only relevant if $X$ is the most
relevant variable (i.e. leftmost in the specification) on some loop in
the dependency graph.
So in the full theorem, we can restrict to
dependencies through variables right from $X$:

\begin{thm}\label{signloop}
Assume that not $X\depnonempty{\E,\mu X\conc\S_2}X$, and 
$X\notin\dom(\S_2)$.
Then \[\sem{\E,\S_1\conc\mu X\conc\S_2} = \sem{\E,\S_1\conc\nu X\conc\S_2}\]
\end{thm}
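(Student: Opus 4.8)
The plan is to recognize this theorem as the left-context generalization of Lemma~\ref{sign}, which already establishes exactly the desired sign-swap equality for the case of an empty left context. All the genuine work---relating the sign of $X$ to the absence of a dependency loop through $X$, via the split $\splitS{X}{\E}{\mu X\conc\S_2}$ and Lemma~\ref{signindep}---has been carried out inside Lemma~\ref{sign}. What remains is only to prepend the arbitrary prefix $\S_1$, which is a routine application of left-congruence.

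Concretely, I would first invoke Lemma~\ref{sign} with $\S:=\S_2$. Its two hypotheses, that not $X\depnonempty{\E,\mu X\conc\S_2}X$ and that $X\notin\dom(\S_2)$, are precisely the assumptions of the present theorem, so it yields directly
\[ \sem{\E,\mu X\conc\S_2} = \sem{\E,\nu X\conc\S_2}. \]
Then I would lift this to the full specification by Lemma~\ref{congruence}, instantiated with the prefix taken to be $\S_1$ and the two suffixes $\mu X\conc\S_2$ and $\nu X\conc\S_2$. Since the equation system is the same $\E$ on both sides, the side condition $\E\restrictedeq{\S_1}\E$ holds trivially, and the required inner equality is the one just obtained. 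This produces
\[ \sem{\E,\S_1\conc\mu X\conc\S_2} = \sem{\E,\S_1\conc\nu X\conc\S_2}, \]
as desired.

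The one point worth flagging is why no additional hypothesis about $\S_1$ is needed, and it is here that the ``main obstacle'' has in fact been sidestepped by the statement itself. The loop condition is phrased relative to $\mu X\conc\S_2$, not to the full $\S_1\conc\mu X\conc\S_2$, so the left context is deliberately excluded from the dependency analysis; intuitively, only variables to the right of $X$ can witness a loop on which $X$ is the leftmost (highest-priority) variable. Consequently there is essentially no remaining difficulty: the localized phrasing of the hypothesis is exactly what allows congruence to finish the proof without any reasoning about how $\S_1$ might extend the dependency graph. Had the loop condition instead referred to the entire specification, the real work would have been to establish that prepending $\S_1$ cannot create a new dependency loop through $X$, and that is precisely the complication this formulation avoids.
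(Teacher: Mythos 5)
Your proposal is correct and matches the paper's own proof exactly: it invokes Lemma~\ref{sign} (with $\S:=\S_2$) to obtain $\sem{\E,\mu X\conc\S_2}=\sem{\E,\nu X\conc\S_2}$ and then lifts this through the prefix $\S_1$ by left-congruence, Lemma~\ref{congruence}, using the trivial fact that $\E\restrictedeq{\S_1}\E$. Your closing observation---that the hypothesis is deliberately phrased relative to $\mu X\conc\S_2$ so that no dependency analysis of $\S_1$ is needed---is also consistent with the paper's discussion of the result.
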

\begin{proof}
By Lemma~\ref{sign}, 
$\sem{\E,\mu X\conc\S_2} = \sem{\E,\nu X\conc\S_2}$.
The result follows by congruence,\linebreak Lemma~\ref{congruence}.
\end{proof}

\subsection{Reordering Variables and Dependency Loops}

The second result (Theorem~\ref{swaploop}) allows to swap any two neighbouring variables
that don't occur on a loop in the dependency graph.

\begin{lem}\label{swaplemma}
Let not $X\dep{\E,\sigma X\conc\rho Y\conc\S} Y$.
Then $\sem{\E,\sigma X\conc\rho Y\conc\S} 
= \sem{\E,\rho Y\conc\sigma X\conc\S}$.
\end{lem}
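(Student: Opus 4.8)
The plan is to collapse both orderings to one and the same specification by invoking the splitting theorem (Theorem~\ref{splitsolve}), exploiting the fact that indirect dependence is insensitive to the order in which the variables are listed.

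First I would record that $\sigma X\conc\rho Y\conc\S$ and $\rho Y\conc\sigma X\conc\S$ share the same $\E$ and the same domain. Since the edge relation $\depstep{\E,\cdot}$ is defined as $\neg\indep(\E,\cdot,\cdot)$ --- and hence its reflexive, transitive closure as well --- the relations $\dep{\E,\sigma X\conc\rho Y\conc\S}$ and $\dep{\E,\rho Y\conc\sigma X\conc\S}$ coincide. Consequently the splitting predicate $\mathit{dep}^X_{\E,\cdot}$ is literally the same function in both cases, and the hypothesis $\neg(X\dep{\E,\sigma X\conc\rho Y\conc\S}Y)$ transfers to the swapped specification unchanged.

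Next I would set $(\S_1,\S_2):=\splitS{X}{\E}{\sigma X\conc\rho Y\conc\S}$ and show that $\splitS{X}{\E}{\rho Y\conc\sigma X\conc\S}$ produces the very same pair. By reflexivity $X\dep{}X$, so the entry $\sigma X$ is routed into $\S_2$; by the hypothesis $X$ does not depend on $Y$, so the entry $\rho Y$ is routed into $\S_1$. The two input lists differ only by transposing their first two entries, and since these two entries go to opposite blocks while every entry of the shared tail $\S$ keeps its relative position, $\splitname$ distributes everything identically. Hence both splits yield $(\S_1,\S_2)$, and two applications of Theorem~\ref{splitsolve} give $\sem{\E,\sigma X\conc\rho Y\conc\S}=\sem{\E,\S_1\conc\S_2}=\sem{\E,\rho Y\conc\sigma X\conc\S}$. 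I would note that no monotonicity of $\E$ is needed, matching the statement, because Theorem~\ref{splitsolve} and the entire chain it rests on (Theorem~\ref{migration}, Lemmas~\ref{indepsolve} and~\ref{indepenv}) are monotonicity-free.

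The step I expect to be the main obstacle is the middle one: pinning down precisely that the two splits coincide. This relies on two facts that are intuitively obvious but deserve an explicit argument --- that indirect dependence ignores the ordering of the specification (so the same predicate drives both splits), and that $\splitname$, acting entry-wise, sends the shared tail $\S$ to the same places in both lists because the only positional discrepancy concerns $X$ and $Y$, which land in different blocks. Degenerate cases require no special care: if $X=Y$ the hypothesis is vacuously false by reflexivity, and repeated occurrences of $X$ or $Y$ inside $\S$ are handled automatically, since $\splitname$ is entry-wise and Theorem~\ref{splitsolve} is stated for arbitrary specifications.
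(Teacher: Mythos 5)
Your proof is correct and takes essentially the same route as the paper: the paper likewise observes that $\splitS{X}{\E}{\sigma X\conc\rho Y\conc\S}=(\rho Y\conc\S_1,\sigma X\conc\S_2)=\splitS{X}{\E}{\rho Y\conc\sigma X\conc\S}$ and concludes by applying Theorem~\ref{splitsolve} twice. The details you flag as the main obstacle --- that $\dep{\E,\cdot}$ depends on the specification only through $\E$ and $\dom(\cdot)$, and that $\splitname$ acts entry-wise so the two splits coincide --- are exactly the justifications the paper leaves implicit.
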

\begin{proof}
Note that for some $\S_1$ and $\S_2$, we have
\[ \splitS{X}{\E}{\sigma X\conc\rho Y\conc\S}=(\rho Y\conc\S_1,\sigma X\conc\S_2)
=\splitS{X}{\E}{\rho Y\conc \sigma X\conc\S}\]
Hence, by applying Theorem~\ref{splitsolve} twice, we obtain:
\[  \sem{\E,\sigma X\conc\rho Y\conc\S} 
  = \sem{\E,\rho Y\conc\S_1\conc\sigma X\conc\S_2} 
  = \sem{\E,\rho Y\conc\sigma X\conc\S}
\qedhere
\]
\end{proof}

Again, we can strengthen this, by observing that $X$ and $Y$ can be swapped,
when there is no loop that has either $X$ or $Y$ as its most relevant
variable in the specification:

\begin{thm}\label{swaploop}
Assume that not $X\dep{\E, \sigma X\conc\rho Y\conc\S_2} Y$. 
Then we have
\[
	\sem{\E,\S_1\conc\sigma X\conc\rho Y\conc\S_2} 
    = \sem{\E,\S_1\conc\rho Y\conc\sigma X\conc\S_2}
\]
\end{thm}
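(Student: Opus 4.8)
The plan is to follow exactly the same pattern by which Theorem~\ref{signloop} was obtained from Lemma~\ref{sign}: establish the context-free version first (here already available as Lemma~\ref{swaplemma}), and then lift it to an arbitrary left-context $\S_1$ by left-congruence.

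First I would observe that the hypothesis ``not $X\dep{\E,\sigma X\conc\rho Y\conc\S_2}Y$'' is precisely the hypothesis of Lemma~\ref{swaplemma} with its $\S$ instantiated to $\S_2$ (this in particular forces $X\neq Y$, as the empty path would otherwise witness the forbidden dependence). Applying Lemma~\ref{swaplemma} therefore yields directly
\[
	\sem{\E,\sigma X\conc\rho Y\conc\S_2}
	= \sem{\E,\rho Y\conc\sigma X\conc\S_2}.
\]

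Next I would prepend the left-context $\S_1$ using the left-congruence Lemma~\ref{congruence}. Since both sides of the equality above use the same equation system $\E$, the side condition $\E\restrictedeq{\S_1}\E$ holds trivially, so congruence applies with $\E_1=\E_2=\E$ and gives
\[
	\sem{\E,\S_1\conc\sigma X\conc\rho Y\conc\S_2}
	= \sem{\E,\S_1\conc\rho Y\conc\sigma X\conc\S_2},
\]
which is exactly the claim.

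I do not expect any real obstacle here: the mathematical content lives entirely in Lemma~\ref{swaplemma} (which in turn rests on the split-and-reorder machinery of Theorem~\ref{splitsolve}), and the present theorem only extends it with a harmless left-context. The single point worth checking is that the dependency hypothesis is stated relative to $\sigma X\conc\rho Y\conc\S_2$ rather than the full specification $\S_1\conc\sigma X\conc\rho Y\conc\S_2$. This is deliberate and convenient, precisely because the left-context is added purely by congruence and plays no role in the dependency analysis underlying Lemma~\ref{swaplemma}.
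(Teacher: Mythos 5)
Your proposal is correct and matches the paper's proof exactly: the paper likewise applies Lemma~\ref{swaplemma} with its $\S$ instantiated to $\S_2$ and then prepends $\S_1$ via left-congruence (Lemma~\ref{congruence}). Your side observations---that the dependency hypothesis coincides with that of Lemma~\ref{swaplemma} and that reflexivity of $\dep{\E,\S}$ forces $X\neq Y$---are also accurate.
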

\begin{proof}
By Lemma~\ref{swaplemma},
$\sem{\E,\sigma X\conc\rho Y\conc\S_2} = \sem{\E,\rho Y\conc\sigma X\conc\S_2}$.
The result then follows by congruence, Lemma~\ref{congruence}.
\end{proof}

Note that this result strengthens Theorem~\ref{swapsame} (the signs
may now be different), Theorem~\ref{migration2} (we can have mutual
dependencies on $\S$, as long as no loop is introduced)
and Theorem~\ref{migineq} (we have here equality rather than inequality).

\subsection{Forward Substitution and Dependency Loops}
The final result strengthens Theorem~\ref{unfold} by
allowing unfolding of $Y$ in the definition of $X$, even if
$Y$ precedes $X$ in the specification, provided $Y$ doesn't depend on $X$:

\begin{thm}\label{unfoldloop}
Let $\E\in\Eqs$ be monotonic. Let $\S=\S_1,\sigma Y,\S_2$. 
Assume that not $Y\dep{\E,\S}X$.
Then  $\sem{\E,\S}=\sem{\unfold(\E,X,Y),\S}$.
\end{thm}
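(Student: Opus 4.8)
The plan is to reduce this statement to Theorem~\ref{unfold} by first reordering the specification so that the equation for $Y$ is moved behind the equation for $X$, which is exactly the situation that Theorem~\ref{unfold} already covers. The reordering tool is Theorem~\ref{splitsolve}, applied with respect to the variable $Y$: writing $\splitS{Y}{\E}{\S}=(\S_1',\S_2')$, we have $\sem{\E,\S}=\sem{\E,\S_1'\conc\S_2'}$, where $\S_1'$ collects the variables on which $Y$ does not depend and $\S_2'$ those on which it does. Since $\dep{\E,\S}$ is reflexive, $Y\in\dom(\S_2')$; and since we assume not $Y\dep{\E,\S}X$, the variable $X$ (if it is in $\dom(\S)$ at all) lands in $\S_1'$. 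Consequently, in $\S_1'\conc\S_2'$ the equation for $X$ precedes that for $Y$: writing $\S_1'\conc\S_2'=\S_a\conc\sigma Y\conc\S_b$, disjointness of the split (Lemma~\ref{splitfacts1}) yields $X\notin\dom(\S_b)$, so Theorem~\ref{unfold} will be applicable to this reordered system.

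The crux is to show that the \emph{same} split is obtained for the unfolded system, i.e.\ $\splitS{Y}{\unfold(\E,X,Y)}{\S}=(\S_1',\S_2')$. This reduces to proving that $Y$ has the same indirect dependencies in $\E$ and in $\unfold(\E,X,Y)$. The key observation is that unfolding $Y$ in $X$ alters only the equation for $X$, hence only the outgoing edges of $X$ in the dependency graph can change, while all edges into $X$ and all edges between the remaining nodes are preserved (because $\indep(\unfold(\E,X,Y),U,W)=\indep(\E,U,W)$ whenever $U\neq X$). Since by assumption $X$ is unreachable from $Y$ in the $\E$-graph, no path out of $Y$ can ever visit $X$; therefore no path out of $Y$ can traverse a modified edge, so the set of nodes reachable from $Y$ is identical in both graphs. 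I would isolate this as a small claim and prove both directions by taking a witnessing path and locating the first occurrence of $X$ on it, which immediately contradicts not $Y\dep{\E,\S}X$. The degenerate case $X\notin\dom(\S)$ is even simpler: then no edge among the nodes $\dom(\S)$ changes, and the two dependency graphs coincide outright.

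With the claim in hand, the proof closes by a short chain of equalities. Applying Theorem~\ref{splitsolve} to $\unfold(\E,X,Y)$ with this (identical) split gives $\sem{\unfold(\E,X,Y),\S}=\sem{\unfold(\E,X,Y),\S_1'\conc\S_2'}$; applying Theorem~\ref{unfold} to the reordered specification $\S_1'\conc\S_2'=\S_a\conc\sigma Y\conc\S_b$ (legitimate precisely because $X\notin\dom(\S_b)$, with $\E$ monotonic as assumed) gives $\sem{\unfold(\E,X,Y),\S_1'\conc\S_2'}=\sem{\E,\S_1'\conc\S_2'}$; and Theorem~\ref{splitsolve} for $\E$ gives $\sem{\E,\S_1'\conc\S_2'}=\sem{\E,\S}$. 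Chaining these three equalities yields $\sem{\unfold(\E,X,Y),\S}=\sem{\E,\S}$. The main obstacle is the dependency-preservation claim of the second paragraph: it is the only genuinely new ingredient, and the graph-theoretic reachability argument must be carried out carefully, since everything else is just an orchestration of already-established results.
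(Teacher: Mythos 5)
Your proposal is correct and takes essentially the same route as the paper's proof: split via $\splitS{Y}{\E}{\S}$, observe that the indirect dependencies of $Y$ (and hence the split) are unchanged by $\unfold(\E,X,Y)$, reorder with Theorem~\ref{splitsolve} so that $X$ does not occur after $\sigma Y$, apply Theorem~\ref{unfold}, and reorder back. Your ``first occurrence of $X$ on a witnessing path'' argument merely spells out in more detail the paper's one-line justification of the same dependency-preservation observation.
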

\begin{proof}
Assume not $Y\dep{\E,\S}X$. Then the following two observations hold:
\begin{enumerate}
\item for all $Z\in\X$, $Y\dep{\E,\S}Z \iff Y\dep{\unfold(\E,X,Y),\S}Z$
\item $\splitS{Y}{\E}{\S}=\splitS{Y}{\unfold(\E,X,Y)}{\S}$
\end{enumerate}
The first item holds, because $\unfold(\E,X,Y)$ only modifies the definition
of $X$, but $Y$ doesn't refer to it. The second then follows from the
definition of $\splitname$.

Let $(L_1,L_2):=\splitS{Y}{\E}{\S}$. Then, as $Y\dep{\E,\S} Y$, we have
$L_2=L_3\conc\sigma Y\conc L_4$. Note that $X\notin\dom(L_4)$, for we
would then have $Y\dep{\E,\S} X$, contradicting the assumptions. Then we can 
compute:
\allowdisplaybreaks
\begin{align*}
  & \sem{\E,\S} \\
=\; & \mbox{~~~~~~(Theorem~\ref{splitsolve})}\\
  & \sem{\E,L_1\conc L_3\conc\sigma Y\conc L_4}\\
=\; & \mbox{~~~~~~(Theorem~\ref{unfold})}\\
  & \sem{\unfold(\E,X,Y),L_1\conc L_3\conc\sigma Y\conc L_4}\\
=\; & \mbox{~~~~~~(Theorem~\ref{splitsolve}, observation (2) above)}\\
  & \sem{\unfold(\E,X,Y),\S}
\qedhere
\end{align*}
\end{proof}

\section{Summary -- Examples -- Related Work}
\label{related}
Table~\ref{mainresults} summarizes our main results. We will discuss their
relevance and compare them to previous work in 
Section~\ref{sub:subst}-\ref{sub:swap}. Table~\ref{otherresults}
contains some other useful facts on FES, discussed in Section~\ref{sub:other}.

\subsection{Substituting Definitions and Solutions}\label{sub:subst}
Theorem~\ref{unfold} in this form is new. It generalizes
\cite[Lemma~6.3]{Mader96} (for BES only) and
\cite[Lemma~18]{GrWiTCS05} (for PBES only) to FES.
Another generalization is that we allow that
$X=Y$. That is, besides unfolding the $Y$'s in the definition of some
$X$ {\em preceding} $Y$, one can even unfold $Y$ in its own
definition. The proof for this case is more involved
(cf. Lemma~\ref{unfoldsame}). For BES this is useless, but for PBES
this is useful, and already used in~\cite{Orzan2010,Ploeger2011} to unfold PBESs
to BESs.
The technique of unfolding PBESs is perhaps the most commonly applied
method of solving PBESs~\cite{faisal_al_ameen_asynchronous_2024,kobayashi_foldunfold_2020,Ploeger2011}, although symbolic approaches do exist~\cite{kobayashi_temporal_2019,NeeleWG2020}.

Theorem~\ref{unfoldloop} is a new result, generalizing the case where
$\indep(\E,Y,X)$ for all $X$ (i.e.~$Y$ is in solved form,
\cite{Mader96,GrWiTCS05}). In general, one cannot unfold $Y$ in the
definition of $X$, when $Y$ precedes $X$. However, if there is no
dependency path from $Y$ to $X$, then a forward substitution is
allowed. The proof is based on clever reordering of equations.
The following example shows that this condition is necessary:

\begin{exa}\label{exunfold} 
	Consider the following two Boolean Equation Systems:
	\[
	\begin{array}{|c|c|}
		\hline
		B_1 & B_2\\
		\hline
		\begin{array}{l}
			\nu Y=X\\
			\mu X=Y\\
		\end{array} &
		\begin{array}{c}
			\nu Y=X\\
			\mu X=X\\
		\end{array}\\
		\hline
	\end{array}
	\]
	Unfolding $Y$ in the definition of $X$ in $B_1$ yields $B_2$. However, $B_1$ has
	the solution $(\top,\top)$, while the solution of $B_2$ is $(\bot,\bot)$.
	The reader can check this with the method described in Example~\ref{gauss}.
\end{exa}

%% Note that subsituting the definition of $Y$ in $X$ is allowed, but has
%% no effect in case $X\not\rightarrow Y$.

Theorem~\ref{partial} allows to substitute a partial solution in a
FES.  It occurs already in \cite[Lemma 3.19]{Mader96}. However, our
proof is more direct. Mader suggests that a direct inductive argument
is not possible, and proves the theorem by contradiction, constructing
an infinite set of equation systems. We show that with an appropriate
induction loading, the theorem can be reduced to another lemma in
fixpoint calculus (Lemma~\ref{fixcalculus}.8).

The substitution theorems form the basis for solving BES and PBES by
Gauss-elimination. They are called the global steps. Besides global
steps, one needs local steps, to eliminate $X$ from the right hand
side of its own definition. For BES, a local step is trivial, because
(only) in the Boolean lattice we have $\mu X.f(X) = f(\bot)$ and $\nu
X.f(X)=f(\top)$.  Local solution for PBES is much harder, and studied
in~\cite{GrWiTCS05,Orzan2010}. We stress that our results show that the
global steps hold in any FES. However, effective local solution is
specific to the underlying complete lattice.

\begin{table}
	\begin{center}
		\begin{tabular}{|l|c|l|}
			\hline
			{\bf Thm} & {\bf (In)Equality} & {\bf Conditions}\\
			\hline\hline
			\multicolumn{3}{|c|}{{\bf Reordering Variables}}\\
			\hline
			\ref{swapsame} &
			$\sem{\E,\S_1\conc\sigma X\conc\sigma Y\conc\S_2}=
			\sem{\E,\S_1\conc\sigma Y\conc\sigma X\conc\S_2}$ &
			- $\E$ is monotonic\\
			\hline
			\ref{migineq} &
			$\sem{\E,\S_1\conc\mu X\conc\nu Y\conc\S_2}\leq
			\sem{\E,\S_1\conc\nu Y\conc\mu X\conc\S_2}$ &
			- $\E$ is monotonic\\
			&& - $X\neq Y$\\
			\hline
			\ref{swaploop} &
			$\sem{\E,\S_1\conc\sigma X\conc\rho Y\conc\S_2} 
			= \sem{\E,\S_1\conc\rho Y\conc\sigma X\conc\S_2}$ &
			- $X\not\dep{} Y$ in $(\E,\sigma X\conc\rho Y\conc\S_2)$\\
			\hline
			\ref{migration2} &
			${\sem{\E,\S_0\conc\S_1\conc\S_2\conc\S_3} =
				\sem{\E,\S_0\conc\S_2\conc\S_1\conc\S_3}}$ &
			- $\disjoint(\S_1, \S_2\conc\S_3)$\\
			&& - either $\indep(\E,\S_1,\S_2\conc\S_3)$,\\
			&& ~~ or $\indep(\E,\S_2\conc\S_3,\S_1)$\\
			\hline
			\multicolumn{3}{|c|}{{\bf Substituting Definitions and Solutions}}\\
			\hline
			\ref{unfold} & 
			$\sem{\unfold(\E,X,Y),\S}=\sem{\E,\S}$ &
			- $\E$ is monotonic\\
			&& - $\S=\S_1\conc\sigma Y\conc\S_2$\\
			& ($X=Y$ is allowed)& - $X\notin\dom(\S_2)$\\
			\hline
			\ref{unfoldloop} &
			$\sem{\unfold(\E,X,Y),\S}=\sem{\E,\S}$ &
			- $\E$ is monotonic\\
			&& - $\S=\S_1\conc\sigma Y\conc\S_2$\\
			&& - $Y\not\dep{} X$ in $(\E,\S)$\\
			\hline
			%\ref{unfold} & 
			%$\sem{\E\cup\{X=E_X,Y=E_Y\}),\S}=\sem{\E\cup\{X=E_X[Y:=E_Y],Y=E_Y\},\S}$ &
			%   - $\E$ is monotonic\\
			%&& - $X$ not after $Y$ in $\S$\\
			%\hline
			\ref{partial} &
			$\sem{\E,\S}(\eta) = \sem{\E[X\mapsto A],\S}(\eta)$ &
			- $\E$ is monotonic\\
			&& - $A = \sem{\E,\S}(\eta)(X)$\\
			\hline
			\multicolumn{3}{|c|}{{\bf Swapping Signs}}\\
			\hline
			\ref{signineq} & 
			$\sem{\E,\S_1\conc\mu X\conc\S_2} \leq \sem{\E,\S_1\conc\nu X\conc\S_2}$& 
			- $\E$ is monotonic\\
			\hline
			\ref{signloop} &
			$\sem{\E,\S_1\conc\mu X\conc\S_2} = \sem{\E,\S_1\conc\nu X\conc\S_2}$ & 
			- $X\notin\dom(\S_2)$\\
			&& - $X\not\depnonempty{} X$ in $(\E,\mu X\conc\S_2)$\\
			\hline
		\end{tabular}
	\end{center}
	\caption{Main results for arbitrary FES\label{mainresults}}
\end{table}

\begin{exa}
	\label{gauss}
	The following example shows the solution of a BES by Gauss
	elimination.  Basically, one first substitutes definitions backwards
	using Theorem~\ref{unfold} (along the way, we use the identity
	$Y\vee(Y\wedge X) \equiv Y$):
	\[\begin{array}{|rcl|}
		\hline
		\mu X & = & Y\vee Z\\
		\nu Y & = & Z\\
		\mu Z & = & Y\wedge X\\
		\hline
	\end{array}
	~\to~
	\begin{array}{|rcl|}
		\hline
		\mu X & = & Y\\
		\nu Y & = & Y\wedge X\\
		\mu Z & = & Y\wedge X\\
		\hline
	\end{array}
	~\to~
	\begin{array}{|rcl|}
		\hline
		\mu X & = & Y\wedge X\\
		\nu Y & = & Y\wedge X\\
		\mu Z & = & Y\wedge X\\
		\hline
	\end{array}
	\]
	Next, one obtains $X=\bot$ by a local elimination step in the first equation,
	using that $Y\wedge\bot \equiv \bot$.
	This solution can then be substituted forward by Theorem~\ref{unfoldloop},
	to obtain the full solution $(\bot,\bot,\bot)$. In general, steps 1 and 2
	must be mixed.
\end{exa}

The next example shows a PBES where unfolding $X$ in its own
definition makes sense. 
\begin{exa}
	Applying Theorem~\ref{unfold} to unfold $X$ in its own definition, we get:
	\[\begin{array}{|rcl|}
		\hline
		\nu Y    & \!\!=\!\! & X(\top) \\
		\mu X(b) & \!\!=\!\! & (b\wedge Y) \vee X(\neg b)\\
		\hline
	\end{array}
	~\to~ 
	\begin{array}{|rcl|}
		\hline
		\nu Y    & \!\!=\!\! & X(\top) \\
		\mu X(b) & \!\!=\!\! & (b\wedge Y) \vee ((\neg b\wedge Y) \vee X(\neg\neg b))\\
		& \!\!\equiv\!\! & Y\vee X(b)\\
		\hline
	\end{array}\]
	Applying Theorem~\ref{unfold} again, to unfold $X$ in $Y$ yields
	$\nu Y = Y\vee X(\top)$, hence by local resolution $Y=\top$, hence $X(b)=\top$.
\end{exa}

\subsection{Reordering Variables}\label{sub:reorder}
Theorem~\ref{swapsame} indicates that two adjacent variables with the
same sign may be interchanged. This theorem occurs already in
\cite[Lemma 3.21]{Mader96}.  For PBES it is repeated in \cite[Lemma
21]{GrWiTCS05}. However, \cite{Mader96,GrWiTCS05} don't give a full
proof, but refer to Beki\v{c} Lemma. In our proof, we show exactly how
Theorem~\ref{unfold} reduces to our version of Beki\v{c} Rule
(Lemma~\ref{fixcalculus}.9).
In other works~\cite{Seidl96,kobayashi_temporal_2019}, adjacent variables
with the same sign are grouped in unordered \emph{blocks}.
No claim is made about the correctness of such a definition.

Theorem~\ref{migineq} shows the inequality that arises when
interchanging adjacent variables with different sign. It occurs
already in \cite[Lemma 3.23]{Mader96}, but our proof is different. Our
proof depends on a probably new inequality in fixpoint calculus, which
we coin Beki\v{c} Inequality (Lemma~\ref{fixinequalities}.4).

Theorem~\ref{swaploop} is a new result. It states that in the special
case that $X$ and $Y$ are not on the same dependency loop, they can be
interchanged without modifying the solution. This generalizes
\cite[Lemma 19]{GrWiTCS05}, which requires that the right-hand side of
$Y$ in $\E$ is a constant, \emph{i.e.}, $\indep(\E,Y,Z)$ for all $Z$.

Finally, Theorem~\ref{migration2} in this form is new. It allows to
swap whole blocks of equations. Mader~\cite[Lemma 3.22]{Mader96}
claims a similar result, under the condition that both
$\indep(\E,\S_1,\S_2)$ and $\indep(\E,\S_2,\S_1)$. However,
\cite{GrWiTCS05} show a counter example to this. The repair
in~\cite[Lemma 22]{GrWiTCS05} requires that $\S_3$ is empty. We show
a stronger result: if $\S_3$ is empty, only one of the requirements
$\indep(\S_2,\S_1)$ or $\indep(\S_2,\S_1)$ is needed.

Notably, our result even applies to nonempty $\S_3$, provided we have
$\indep(\S_1,\S_2\conc\S_3)$ (or its reverse), i.e. $\S_1$ is also
independent on the variables in $\S_3$. Note that we allow arbitrary
(dependent) alternations within $\S_1$ and $\S_2$, and even $\S_2$
might depend on $\S_1$. 
We lifted two other unnecessary restrictions:
surprisingly, this result doesn't require monotonicity of $\E$. Also,
the results in~\cite{Mader96,GrWiTCS05} are for individual equations
only, while we can swap whole blocks at the same time. 

We now show
an application of swapping blocks to reduce the number of $\mu/\nu$-alternations.

\begin{exa}
	Consider the following four Boolean Equation Systems:
	\[
	\begin{array}{|c|c|c||c|}
		\hline
		B_3 & B_4 & B_5 & B_6 \\
		\hline
		\begin{array}{rcl}
			\mu X & = & Y \\
			\mu Y & = & X \\
			\nu Z & = & W \\
			\mu W & = & Z \\
		\end{array} &
		\begin{array}{rcl}
			\mu X & = & Y \\
			\nu Z & = & W \\
			\mu Y & = & X \\
			\mu W & = & Z \\
		\end{array} &
		\begin{array}{rcl}
			\nu Z & = & W \\
			\mu X & = & Y \\
			\mu Y & = & X \\
			\mu W & = & Z \\
		\end{array} &
		\begin{array}{rcl}
			\mu X & = & Y \\
			\mu Y & = & X \\
			\mu W & = & Z \\
			\nu Z & = & W \\
		\end{array}\\
		\hline
	\end{array}
	\]
	
	For these BES, the dependency graph between the variables consists
	of two loops, $X\leftrightarrow Y$ and $Z\leftrightarrow W$.
	In particular, we have $\indep(\{X,Y\},\{Z,W\})$.
	We want to transform $B_3$ to $B_5$, because it has fewer alternations.
	Theorem~\ref{swapsame} cannot be applied, because the sign of $Z$ is
	different from all the others.
	
	Using Theorem~\ref{migration2} on individual equations, one can show
	that $\sem{B_3}=\sem{B_4}$, because $\indep(Y,\{Z,W\})$.  However, one
	cannot derive $\sem{B_4}=\sem{B_5}$ using Theorem~\ref{migration2},
	because neither $\indep(X,\{Z,Y,W\})$, nor $\indep(\{Z,Y,W\},X)$
	holds.  However, one can prove $B_3=B_5$ directly with
	Theorem~\ref{migration2}, by swapping block $[X,Y]$ with $Z$, because
	indeed we have $\indep(\{X,Y\},\{Z,W\})$. Alternatively, one can
	observe that $X\not\dep{} Z$, and apply Theorem~\ref{swaploop}
	to deduce that $B_4=B_5$ directly.
	
	All theorems fail to prove the equivalence of $B_{3-5}$ with $B_6$.
	However, Theorem~\ref{migineq} guarantees that $\sem{B_6}\leq\sem{B_3}$.
	As a matter of fact, the solution of $B_3$, $B_4$ and $B_5$ is
	$(X=\bot,Y=\bot,Z=\top,W=\top)$, while the solution of $B_6$ is
	$(X=\bot,Y=\bot,Z=\bot,W=\bot)$. The reader may verify this by
	Gauss Elimination, cf.~Example~\ref{gauss}. 
	This shows that the reordering theorems cannot easily be strengthened.
\end{exa}

%% The following also shows the need for a generalized migration theorem
%% \begin{verbatim}
	%% mu x = y /\ z          nu y = z
	%% nu y = z         <==>  mu x = y /\ z
	%% mu z = y               mu z = y
	%% \end{verbatim}

%% Also check (but I completely forgot why??):
%% \begin{verbatim}
	%% mu x = z \/ y
	%% nu y = z
	%% nu z = y \/ x
	%% \end{verbatim}

\subsection{Swapping Signs}\label{sub:swap}
The inequality of Theorem~\ref{signineq} is well known and appears for
instance in \cite[Lemma 3.24]{Mader96}. Theorem~\ref{signloop} is new. It shows
that the sign of variable $X$ is only relevant when $X$ is the most
relevant variable on a dependency loop.

\begin{exa}
	Now consider the next three BESs which only differ in their fixpoint signs:
	\[
	\begin{array}{|c|c|c|}
		\hline
		B_7 & B_8 & B_9\\
		\hline
		\begin{array}{rcl}
			\mu X & = & Y \\
			\nu Y & = & X \lor Z \\
			\mu Z & = & Z \land W \\
			\nu W & = & X \land \bot \\
		\end{array} &
		\begin{array}{rcl}
			\mu X & = & Y \\
			\nu Y & = & X \lor Z \\
			\nu Z & = & Z \land W \\
			\nu W & = & X \land \bot \\
		\end{array} &
		\begin{array}{rcl}
			\mu X & = & Y \\
			\mu Y & = & X \lor Z \\
			\mu Z & = & Z \land W \\
			\mu W & = & X \land \bot \\
		\end{array}\\
		\hline
	\end{array}
	\]
	Like before, want to manipulate $B_7$ to reduce the number of fixpoint alternations.
	We identify two possibilities.
	First, we may flip the sign of $Z$, obtaining $B_8$, which has the solution $(\top,\top,\bot,\bot)$.
	By Theorem~\ref{signineq}, it holds that $\sem{B_7} \leq \sem{B_8}$ and so we conclude that also $\sem{B_7}(Z) = \sem{B_7}(W) = \bot$.
	The other option is to flip the sign of $Y$ and $W$, yielding $B_9$.
	Since $Y \not\depnonempty{} Y$ in $Y,Z,W$, Theorem~\ref{signloop} gives that $\sem{B_7} = \sem{B_9}$.
\end{exa}

\subsection{Other Results}\label{sub:other}
Along the way, we proved (and proof checked) several lemmas on FES
that may be interesting on their own. For quick reference, we
summarize these in Table~\ref{otherresults}. Here $\eta$ denotes an
arbitrary valuation. All these lemmas occur in some form in the literature.
Lemma~\ref{solution} states that the semantics indeed returns a solution, and
follows from \cite[Lemma 3.5]{Mader96}.  Lemma~\ref{indepsolve} corresponds to
\cite[Lemma 3.10]{Mader96} (which is not proved there) and \cite[Lemma
7]{GrWiTCS05}. Actually, \cite{Mader96} has Lemma~\ref{indepsolve2},
which is equivalent according to our
Theorem~\ref{migration2}. Lemma~\ref{congruence} and
\ref{left_ineq_cong} are from~\cite[Lemma~3.14]{Mader96} as well, and
Lemma~\ref{congright} follows directly from Lemma~\ref{indepsolve}.
We included it here to stress that right congruence doesn't hold in
general.

\begin{table}
	\begin{center}
		\begin{tabular}{|l|c|l|}
			\hline
			{\bf Lem} & {\bf Result} & {\bf Condition} \\
			\hline
			\hline
			\ref{solution} & 
			$\sem{\E,\S}(\eta)(X) = \E_X(\sem{\E,\S}(\eta))$&
			- $\E$ monotonic\\
			&& - $\X\in\dom(\S)$\\
			\hline
			\ref{indepsolve} & 
			$\sem{\E,\S_1\conc\S_2}(\eta) = \sem{\E,\S_2}(\sem{\E,\S_1}(\eta))$
			& - $\indep(\E,\S_1,\S_2)$\\
			\hline
			\ref{indepsolve2} & 
			$\sem{\E,\S_1\conc\S_2}(\eta) = \sem{\E,\S_1}(\sem{\E,\S_2}(\eta))$
			& - $\indep(\E,\S_2,\S_1)$\\
			&& - $\disjoint(\S_1,\S_2)$\\
			\hline
			\hline
			\ref{left_ineq_cong} & 
			$\sem{\E,\S_1}\leq \sem{\E,\S_2}$ 
			implies $\sem{\E,\S\conc\S_1}\leq\sem{\E,\S\conc\S_2}$&
			- $\E$ monotonic\\
			\hline
			\ref{congruence} & 
			$\sem{\E,\S_1}=\sem{\E,\S_2}$ 
			implies $\sem{\E,\S\conc\S_1}=\sem{\E,\S\conc\S_2}$&\\
			\hline
			\ref{congright} & $\sem{\E,\S_1}=\sem{\E,\S_2}$ 
			implies $\sem{\E,\S_1\conc\S}=\sem{\E,\S_2\conc\S}$ &
			- $\indep(\E,\S_1,\S)$\\
			&& - $\indep(\E,\S_2,\S)$\\
			&& - $\disjoint(\S,\S_1\conc\S_2)$\\
			\hline
		\end{tabular}
	\end{center}
	\caption{Some useful lemmas for arbitrary FES\label{otherresults}}
\end{table}

Finally, we needed some basic results on fixpoints in complete
lattices, cf.~Lemma~\ref{fixcalculus} and~\ref{fixinequalities}. The
existence and definition of least and greatest fixpoints is due to
Knaster (on sets) and Tarski (on complete lattices)~\cite{TARS55},
see~\cite{LNS82} for a historical account. We (re)proved a number of
identities (Lemma~\ref{fixcalculus}) and inequalities
(Lemma~\ref{fixinequalities}) on fixpoint expressions. Most of
these results are known.  Lemma~\ref{fixcalculus}.1-6
can for instance be found in~\cite{Backhouse02}. Rule 9 (Beki\v{c} Equality)
can be found in e.g.~\cite{Bakker80,Bekic84}, but stated in a different form,
involving simultaneous fixpoints. We have not found in the literature
the inequality in Lemma~\ref{fixinequalities}.4, which resembles
Beki\v{c} equality on terms with mixed minimal and maximal fixpoints.

\section{Formalisation in Coq \& PVS}
\label{coqpvs}
We have formalised all of the above theory in both Coq~\cite{bertot_short_2008,coq_software} and PVS~\cite{mohamed_brief_2008}.
A replication artefact containing these proofs is available at~\cite{artefact}.
The formalized definitions and proofs follow the definitions and proof steps
in this paper quite closely.
Here, we highlight the main difference between the two formalisations.

In Coq, we captured the concepts of complete lattices and monotonic functions in typeclasses.
For these, we defined several typeclass instances, for example the product lattice and composition of monotonic functions.
In many cases, Coq is able to perform automatic typeclass resolution, saving us from manually proving monotonicity of complex functions, for example those in Lemma~\ref{completelattices}.
Furthermore, Coq supports user-defined notation, allowing us to closely follow the notation used in the paper.
The proofs for showing decidability of $X \dep{\E,\S} Y$ are extensive, something that is not reflected in the paper.

Our PVS definitions and proofs were originally developed under PVS version 4.2, but could be ported to version 7.1 with minimal effort.
Contrary to Coq, PVS is built on classical logic and thus allows the law of excluded middle (for all propositions $P$, it holds $P \lor \neg P$).
We thus do not need to supply proofs for decidability of $X \dep{\E,\S} Y$.
This also means that we do not rely on finiteness of $\S$, and thus the definition of $\dep{}$ only depends on $\E$ and the domain $\E$ is restricted where necessary, \emph{e.g.}, in Theorem~\ref{signloop}.
This simplifies the proof of Lemma~\ref{splitsolve_aux}: it can operate on $\splitname_{X,\E}$ directly.

\section{Conclusion}
We provided several equalities and inequalities involving
a range of operations on fixpoint equation
systems (FES). We refer to Table~\ref{mainresults} and~\ref{otherresults}
(Section~\ref{related}) for a summary of the theorems.
Lemmas~\ref{fixcalculus} and~\ref{fixinequalities} provide
a useful overview on equalities
and inequalities for nested fixed points in complete lattices.

We provided self-contained and detailed proofs of all results
and mechanised these proofs in two proof assistants, Coq and PVS.

By the generic nature of FES, these results carry over to other
formalisms such as Boolean equation systems (BES), 
parity games (and variations thereof), and parameterised (first-order)
Boolean equation systems (PBES). 

\section*{Acknowledgment}
%The authors are grateful to Jaap van der Woude for his help on
%fixpoints; to Jan Friso Groote and Tim Willemse for their
%help on PBESs; and to Wan Fokkink, Alban Ponse, Jan Rutten and
%Michael Weber for useful discussions. 
Large part of the research
was carried out at the Centrum voor Wiskunde en Informatica. 

\bibliographystyle{alphaurl}
\bibliography{pbes}

\end{document}